\newlength{\hspaceforlengthglumpf}
\newcommand{\onespace}{\mspace{1mu}}
\newcommand{\comment}[1]{\text{\footnotesize[#1]}}
\newcommand{\eqcmt}[1]{\mathrel{\mathop=\limits_{#1}}}
\newcommand{\lecmt}[1]{\mathrel{\mathop\le\limits_{#1}}}
\DeclareMathOperator{\supp}{supp}
\DeclareMathOperator{\argmax}{argmax}
\DeclareMathOperator{\polylog}{polylog}
\newcommand{\Zero}{\mathbf{0}}
\newcommand{\lt}{\left}
\newcommand{\rt}{\right}
\newcommand{\abs}[1]{{\lt\lvert{#1}\rt\rvert}}
\newcommand{\sabstight}[2][2]{{\lvert\mspace{-#1mu}{#2}\mspace{-#1mu}\rvert}}
\newcommand{\nfrac}[2]{{\nicefrac{#1}{#2}}}
\newcommand{\NN}{\mathbb{N}}
\DeclareMathOperator*{\Prb}{\mathbf{P}}
\DeclareMathOperator*{\Exp}{\mathbf{E}}
\DeclareMathOperator*{\Var}{\mathbf{Var}}
\DeclareMathOperator{\IndicatorOp}{\mathbf{I}}
\newcommand{\Ind}{\IndicatorOp}
\newcommand{\eps}{\varepsilon}
\newlength{\algotabbingwidth}
\numberwithin{theorem}{section}
\newcommand{\mypar}{\par\medskip\noindent}
\newcommand{\todo}[1][ToDo]{{\tiny\color{red}$\scriptscriptstyle\circ$}%
  {\marginpar{\centering\tiny\sf todo$\color{red}\bullet$\color{red}#1}}}
\newcommand{\Mark}[1]{\todo[\color{green}Mark]\bgroup\markoverwith{\textcolor{green}{\rule[-0.5ex]{2pt}{0.4pt}}}\ULon{#1}}
\newcommand{\longversion}[1]{}
\DeclareMathOperator{\ndccOP}{\mathsf{N}}
\DeclareMathOperator{\rcOP}{\mathsf{C}}
\DeclareMathOperator{\frcOP}{\mathsf{C^*}}
\DeclareMathOperator{\foolOP}{\mathsf{F}}
\DeclareMathOperator{\onerecOP}{\mathsf{R^1}}
\newcommand{\ndcc}{\ndccOP}
\newcommand{\rc}{\rcOP}
\newcommand{\frc}{\frcOP}
\newcommand{\fool}{\foolOP}
\newcommand{\onerec}{\onerecOP}
\DeclareMathOperator{\logb}{log}
\renewcommand{\log}[1]{\logb_{#1}}
\DeclareMathOperator{\plainlog}{log}
\newcommand{\notp}{{\bar p}}
\DeclareMathOperator{\Bin}{Bin}
\newcommand{\Matrix}[2]{f}
\begin{document}
\mainmatter              
\title{Nondeterministic Communication Complexity of random Boolean functions}
\titlerunning{Nondeterministic CC of random Boolean functions}
\toctitle{Nondeterministic Communication Complexity of random Boolean functions}
\author{Mozhgan Pourmoradnasseri\inst{1} \and Dirk Oliver Theis\inst{1}}
\authorrunning{M.Pourmoradnasseri \& D.O.Theis.}
\tocauthor{Mozhgan Pourmoradnasseri and Dirk Oliver Theis}
\institute{%
  University of Tartu\\
  Institute of Computer Science\\
  {\"U}likooli 17\\
  51014 Tartu, Estonia,\\
  \email{\{mozhgan,dotheis\}@ut.ee}\\
  WWW: \texttt{http://ac.cs.ut.ee/}%
}
\maketitle              

\begin{abstract}
  We study nondeterministic communication complexity and related concepts (fooling sets, fractional covering number) of random functions $f\colon X\times Y \to \{0,1\}$ where each value is chosen to be~1 independently with probability $p=p(n)$, $n := \abs{X}=\abs{Y}$.
  \keywords{Communication Complexity, Random Structures}
\end{abstract}
%

\section{Introduction}
Communication Complexity lower bounds have found applications in areas as diverse as sublinear algorithms, space-time trade-offs in data structures, compressive sensing, and combinatorial optimization (cf., e.g., \cite{Roughgarden:CC-AD:2015,Fiorini-Massar-Pokutta-Tiwary-Dewolf:ACM:15}).
In combinatorial optimization especially, there is a need to lower bound \textsl{nondeterministic} communication complexity~\cite{Yannakakis:91,Kaibel:optima:11} .

Let $X,Y$ be sets and $f\colon X\times Y \to \{0,1\}$ a function.  In nondeterministic communication, Alice gets an $x\in X$, Bob gets a $y\in Y$, and they both have access to a bit string supplied by a prover.  In a protocol, Alice sends one bit to Bob; the decision whether to send 0 or~1 is based on her input~$x$ and the bit string~$z$ given by the prover.  Then Bob decides based on his input~$y$, the bit string~$z$ given by the prover, and the bit sent by Alice, whether to accept (output 1) or reject (output 0).  The protocol is successful, if, (1) regardless of what the prover says, Bob never accepts if $f(x,y)=0$, but (2) for every $(x,y)$ with $f(x,y)=1$, there is a proof~$z$ with which Bob accepts.  The nondeterministic communication complexity is the smallest number~$\ell$ of bits for which there is a successful protocol with $\ell$-bit proofs.

Formally, the following basic definitions are common:
\begin{itemize}
\item The \textit{support} is the set of all \textit{1-entries:} $\supp f := \{ (x,y) \mid f(x,y) = 1 \}$;
\item a \textit{1-rectangle} is a cartesian product of sets of inputs $R = A\times B \subseteq X\times Y$ all of which are 1-entries: $A\times B \subseteq \supp f$;
\item a \textit{cover} (or \textit{1-cover}) is a set of 1-rectangles $\{R_1=A_1\times B_1,\dots, R_k=A_k\times B_k\}$ which together cover all 1-entries of~$f$, i.e., $\bigcup_{j=1}^k R_j = \supp f$;
\item the cover number $\rc(f)$ of $f$ is the smallest size of a 1-cover.
\end{itemize}
One can then define the \textit{nondeterministic communication complexity} simply as $\ndcc(f) := \log{2}\rc(f)$ \cite{Kushilevitz-Nisan:Book:97}.

In combinatorial optimization, one wants to lower bound the nondeterministic communication complexity of functions which are defined based on relations between feasible points and inequality constraints of the optimization problem at hand: Alice has an inequality constraint, Bob has a feasible point, and they should reject (answer~0) if the point satisfies the inequality with equality.

Consider, the following example (it describes the so-called \textit{permuthahedron}).  Let~$k\ge3$ be a positive integer.
\begin{itemize}
\item Let~$Y$ denote the permutations~$\pi$ of $[k]$---the feasible points.
\item Let~$X$ denote the set of non-empty subsets $U\subsetneq [k]$; such an~$U$ corresponds to an inequality constraint $\sum_{u\in U} \pi(u) \ge \abs{U}(\abs{U}+1)/2$.
\end{itemize}
Goemans~\cite{Goemans:permutahedron:15} gave an $\Omega(\plainlog k)$ lower bound for the nondeterministic communication complexity of the corresponding function:
\begin{equation*}
  f(\pi,U) =
  \begin{cases}
    0, & \text{ if $\sum_{u\in U} \pi(u) = \abs{U}(\abs{U}+1)/2$};\\
    1, & \text{ otherwise, i.e., $\sum_{u\in U} \pi(u) > \abs{U}(\abs{U}+1)/2$.}
  \end{cases}
\end{equation*}
For $k=3$, see the following table.  The rows are indexed by the set~$X$, the columns by the set~$Y$.
\begin{equation*}\footnotesize
  \begin{array}{c|cccccc|}
    \multicolumn{1}{c}{}&
    {}        123 & 132 & 213 & 231 & 312 & \multicolumn{1}{c}{321} \\
    \cline{2-7}
    \{1\}   &  0  &  0  &  1  &  1  &  1  &  1  \\
    \{2\}   &  1  &  1  &  0  &  1  &  0  &  1  \\
    \{3\}   &  1  &  1  &  1  &  0  &  1  &  0  \\
    \{1,2\} &  0  &  1  &  0  &  1  &  1  &  1  \\
    \{1,3\} &  1  &  0  &  1  &  0  &  1  &  1  \\
    \{2,3\} &  1  &  1  &  1  &  1  &  0  &  0  \\
    \cline{2-7}
  \end{array}
\end{equation*}

In this situation, the nondeterministic communication complexity lower bounds the logarithm of the so-called \textit{extension complexity:} the smallest number of linear inequalities which is needed to formulate the optimization problem.  This relationship goes back to Yannakakis' 1991 paper~\cite{Yannakakis:91}, and has recently been the focus of renewed attention \cite{Beasley-Klauck-Lee-Theis:Dagstuhl:13,Klauck-Lee-Theis-Thomas:Dagstuhl:15} and a source of some breakthrough results \cite{Fiorini-Massar-Pokutta-Tiwary-Dewolf:lin-vs-semidef:12,Fiorini-Kaibel-Pashkovich-Theis:CombLB:13}.  Other questions remain infamously open, e.g., the nondeterministic communication complexity of the minimum-spanning-tree function: For a fixed number~$k$, Bob has a tree with vertex set $[k]$, Alice has one of a set of inequality constraints (see \cite{Schrijver:Book:03} for the details), and they are supposed to answer~1, if the tree does not satisfy the inequality constraint with equality.


In this paper, we focus on random functions, and we give tight upper and lower bounds for the nondeterministic communication complexity and its most important lower bounds: the fooling set bound; the ratio number of 1-entries over largest 1-rectangle; the fractional cover number.  For that, we fix $\abs{X}=\abs{Y}=n$, and, we take $f(x,y)$, $(x,y)\in X\times Y$, to be independent Bernoulli random variables with parameter $p=p(n)$, i.e., $f(x,y) = 1$ with probability~$p$ and $f(x,y) = 0$ with probability~$1-p$.

In Communication Complexity, it is customary to determine these parameters up to within a constant factor of the number of bits, but in applications, this is often not accurate enough.  E.g., the above question about the extension complexity of the minimum-spanning-tree polytope asks where in the range between $(1+o(1)) 2\plainlog n$ bits and $(1+o(1))3\plainlog n$ bits the nondeterministic communication complexity lies.  (Here $n$ should taken as $\abs{Y}=2^k-2$.)  Therefore, in our analyses, we focus on the constant factors in our communication complexity bounds.

\subsection{Relationship to related work}
In core (Communication) Complexity Theory, random functions are usually used for establishing that hard functions exist in the given model of computation.  In this spirit, some easy results about the (nondeterministic) communication complexity of random functions and related parameters exist, with $p$ a constant, mostly $p=\nfrac12$ (e.g., the fooling set bound is determined in this setting in \cite{Dietzfelbinger-Hromkovic-Schnitger:96}).

In contrast to this, in applications, the density of the matrices is typically close to~1, e.g., in combinatorial optimization, the number of 0s in a ``row'' $\{y\in Y\mid f(x,y)=0 \}$, is very often polylog of~$n$.  This makes necessary to look at these parameters in the spirit of the study of properties of random graph where $p=p(n)\to 1$ with $n\to\infty$.  In an analogy to the fields of random graphs, the results become both considerably more interesting and also more difficult that way.

The random parameters we analyze have been studied in other fields beside Communication Complexity.  Recently, Izhakian, Janson, and Rhodes~\cite{Izhakian-Janson-Rhodes:PAMS:15} have determined asymptotically the triangular rank of random Boolean matrices with independent Bernoulli entries.  The triangular rank is itself important in Communication Complexity~\cite{Lovasz-Saks:CC-lattice:93} (and its applications~\cite{LeeTheis12}), and it is a lower bound to the size of a fooling set.  In that paper, determining the behavior for $p\to0,1$ is posed as an open problem.

The size of the largest monochromatic rectangle in a random Bernoulli matrix was determined in~\cite{Park-Szpankowski:biclusters:05} when~$p$ is bounded away from 0 and~1, but their technique fails for $p\to 1$.

The nondeterministic communication complexity of a the clique-vs-stable set problem on random graphs was studied in \cite{Braun-Fiorini-Pokutta:rndStable:14}.

\mypar%
The parameters we study in this paper are of importance beyond Communication Complexity and its direct applications.  In combinatorics, e.g., the cover number coincides with strong isometric dimension of graphs~\cite{Froncek-Jerebic-Klavzar-Kovar:CPC:07}, and has connections to extremal set theory and Coding Theory~\cite{Hajiabolhassan-Moazami:code:12,Hajiabolhassan-Moazami:cover-free:12}.

The size of the largest monochromatic rectangle is of interest in the analysis of gene expression data~\cite{Park-Szpankowski:biclusters:05}, and formal concept analysis~\cite{Dawande-Keskinocak-Swaminathan-Tayur:2001}.


\mypar%
Via a construction of Lov\'asz and Saks~\cite{Lovasz-Saks:CC-lattice:93}, the 1-rectangles, covers, and fooling sets of a function~$f$ correspond to stable sets, colorings, and cliques, resp., in a graph constructed from the function.  Consequently, determining these parameters could be thought of as analyzing a certain type of random graphs.  This approach does not seem to be fruitful, as the probability distribution on the set of graphs seems to have little in common with those studied in random graph theory.  Here is an important example for that.  In the usual random graph models (Erd\H{o}s-Renyi, uniform regular), the chromatic number is within a constant factor of the independence ratio (i.e., the quotient independence number over number of vertices), and, in particular, of the fractional chromatic number (which lies between the two).  The corresponding statement (replace ``chromatic number'' by ``cover number''; ``independence ratio'' by ``Hamming weight of~$f$ divided by the size of the largest 1-rectangle''; ``fractional chromatic number'' by ``fractional cover number'') is false for random Boolean functions, as we will see in Section~\ref{sec:ndcc}.

\paragraph{This paper is organized as follows.}
We determine the size of the largest monochromatic rectangle in Section~\ref{sec:alpha}.  Section~\ref{sec:fool} is dedicated to fooling sets: we give tight upper and lower bounds.  Finally, in Section~\ref{sec:ndcc} we give bounds for both the covering number and the fractional covering number.


\subsection{Definitions}
A Boolean function $f\colon X\times Y\to\{0,1\}$ can be viewed as a matrix whose rows are indexed by~$X$ and the columns are indexed by~$Y$.  We will use the two concepts interchangeably.
In particular, for convenience, we speak of ``row'' $x$ and ``column'' $y$.
We will always take $n=\abs{X}=\abs{Y}$ without mentioning it. Clearly, a \textit{random Boolean function $f\colon X\times Y\to\{0,1\}$ with parameter~$p$} is the same thing as a random $n\times n$ matrix with independent Bernoulli entries with parameter~$p$.

We use the usual conventions for asymptotics: $g \ll h$ and $g=o(h)$ is the same thing.  As usual, $g = \Omega(1)$ means that~$g$ is bounded away from~$0$.  We are interested in asymptotic statements, usually for $n\to \infty$.  A statement (i.e., a family of events $E_n$, $n\in\NN$) holds \textit{asymptotically almost surely, a.a.s.,} if its probability tends to~1 as $n\to \infty$ (more precisely, $\displaystyle \lim_{n\to\infty} \Prb( E_n ) = 1$).


\section{Largest 1-rectangle}\label{sec:alpha}
As mentioned in the introduction, driven by applications in bioinformatics, the size of the largest monochromatic rectangle in a matrix with independent (Bernoulli) entries, has been studied longer than one might expect.  Analyzing computational data, Lonardi, Szpankowski, and Yang~\cite{Lonardi-Szpankowski-Yang:Conf:04,Lonardi-Szpankowski-Yang:Journal:06} conjectured the shape of the 1-rectangles.  The conjecture was proven by Park and Szpankowski~\cite{Park-Szpankowski:biclusters:05}.  Their proof can be formulated as follows:  Let $f\colon X\times Y \to \{0,1\}$ be a random Boolean function with parameter~$p$.
\begin{itemize}
\item If $\Omega(1) = p \le 1/e$, then, a.a.s., the largest 1-rectangle consists of the 1-entries in a single row or column, and $\onerec(f) = (1+o(1))pn$.
\item If $p \ge 1/e$ but bounded away from~1, then with $a:= \argmax_{b \in\{1,2,3,\dots\}} b p^b$, a.a.s.\ the largest 1-rectangle has~$a$ rows and $p^an$ columns, or vice-versa.
\end{itemize}

The existence of these rectangles is fairly obvious.  Proving that no larger ones exist requires some work.  The problem with the union-bound based proof in \cite{Park-Szpankowski:biclusters:05} is that it breaks down if $p$ tends to~1\ moderately quickly.  In our proofs, we work with strong tail bounds instead.

\paragraph{\bf Our result} extends the theorem in~\cite{Park-Szpankowski:biclusters:05} for the case that $p$ tends to 0 or~1 quickly.

For $K\subseteq X$, the \textit{1-rectangle\label{page:generated_rectangle} of~$f$ generated by $K$} is $R := K\times L$ with
\begin{equation*}
  L := \Bigl\{ y\in Y \mid \forall \; x \in K\colon \ f(x,y)=1 \Bigr\}.
\end{equation*}
The 1-rectangle generated by a subset $L$ of~$Y$ is defined similarly.

\begin{theorem}\label{thm:alpha}
  Let~$f\colon X\times Y\to\{0,1\}$ be a random Boolean function with parameter~$p=p(n)$.
  \begin{enumerate}[(a)]
  \item\label{thm:alpha:small-p} If $\nfrac5n \le p \le \nfrac1e$, then a.a.s., the largest 1-rectangle is generated  by a single row or column, and if $p\gg (\ln n)/n$, its size is $(1+o(1))pn$.
  \item\label{thm:alpha:large-p} Define
    \begin{equation}\label{eq:alpha:large-p:def-a}
      \begin{aligned}
        a_- &:= \lfloor \log{\nfrac1p}e \rfloor, \\
        a_+ &:= \lceil \log{\nfrac1p}e \rceil, \text{ and}\\
        a   &:= \argmax_{b \in \{a_-,a_+\}} b p^b \; = \argmax_{b \in\{1,2,3,\dots\}} b p^b.\\
      \end{aligned}
    \end{equation}
    There exists a constant $\lambda_0$, such that if $\nfrac1e \le p \le 1-\nfrac{\lambda_0}{n}$, then, a.a.s., a largest 1-rectangle is generated by~$a$ rows or columns and its size is $(1+o(1))ap^an$.
  \end{enumerate}
\end{theorem}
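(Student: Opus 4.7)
The plan is a sharp first-moment argument. The novelty, relative to \cite{Park-Szpankowski:biclusters:05}, is to replace their naive union-bound tail estimate $\Prb[\Bin(n,p^k)\ge t]\le\binom{n}{t}p^{kt}$ by sharp Chernoff bounds, since the former becomes too weak once $p$ approaches $0$ or $1$.

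\textbf{Lower bounds.} For part~(a) in the range $p\gg(\ln n)/n$, Chernoff plus a union bound over the $n$ rows shows that every row contains $(1+o(1))pn$ ones a.a.s., so any single row generates a 1-rectangle of size $(1+o(1))pn$. In the sparser sub-range $5/n\le p\ll(\ln n)/n$ only the qualitative part of the claim (that the largest 1-rectangle is generated by a single row or column) is needed; this follows by comparing the Poisson-extreme asymptotics of the maximum row sum against the upper bounds below. For part~(b), fix any $K\subseteq X$ with $|K|=a$; the column-set $L(K):=\{y\in Y:f(x,y)=1\text{ for every }x\in K\}$ has cardinality $\Bin(n,p^a)$ with mean $np^a$. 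Since $a\approx 1/\ln(1/p)$ by \eqref{eq:alpha:large-p:def-a}, one has $p^a=e^{-1+o(1)}$ uniformly over $[1/e,1-\lambda_0/n]$, hence $np^a=\Theta(n)$; Chernoff then gives $|L(K)|\ge(1-o(1))np^a$ a.a.s., producing a 1-rectangle of size $(1-o(1))anp^a$.

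\textbf{Upper bounds.} For each $k$, set $\tau_k:=\lceil T/k\rceil$ with $T:=pn$ in part~(a) and $T:=anp^a$ in part~(b). The first-moment union bound reads
\begin{equation*}
\Prb\bigl[\text{some 1-rectangle has size}>(1{+}\varepsilon)T\bigr]\;\le\;\sum_{k=1}^{n}\binom{n}{k}\,\Prb\bigl[\Bin(n,p^k)\ge\tau_k\bigr].
\end{equation*}
Writing $\tau_k=(1+\delta_k)np^k$ and invoking the sharp Chernoff bound $\Prb[\Bin(n,q)\ge(1+\delta)nq]\le\exp(-nq\,\psi(1+\delta))$ with $\psi(c):=c\ln c-c+1$, the task reduces to showing
\begin{equation*}
k\ln(en/k)\;<\;(1-o(1))\,np^k\,\psi(1+\delta_k)\qquad\text{for every }k.
\end{equation*}
For part~(b), the defining optimality of $a$ in \eqref{eq:alpha:large-p:def-a} gives $kp^k<(1-\Omega(1))ap^a$ for $k$ away from $a$, yielding $\delta_k$ bounded below by a positive constant, while for $k=a$ the hypothesis $kt>(1+\varepsilon)T$ provides $\delta_a\ge\varepsilon$; in either case $np^k\psi(1+\delta_k)=\Theta(n)$, and the slack against $k\ln(en/k)\le n$ is ensured by $\lambda_0$. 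For part~(a), the case $k=1$ is Chernoff concentration of a single row sum, and for $k\ge 2$ with $p\le 1/e$ the ratio $\tau_k/np^k=(1{+}\varepsilon)/(kp^{k-1})\ge(1{+}\varepsilon)e/k$ is a $\Theta(1)$-large deviation absorbing the $\binom{n}{k}$ factor comfortably.

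\textbf{Main obstacle.} The delicate regime is part~(b) with $p=1-\Theta(1/n)$, where $a=\Theta(n)$, $p^a=\Theta(1)$, and the map $b\mapsto bp^b$ is extremely flat near its maximum: its relative second derivative is $-\ln(1/p)/a\to 0$. For $k$ close to $a$, the gap $ap^a-kp^k$ is only of order $p^a(k-a)^2/a$, and one needs $np^k\psi(1+\delta_k)$ to beat $k\ln(en/k)$, both of which are $\Theta(n)$. I would split the analysis at the threshold $|k-a|\sim\sqrt{a\ln n}$: in the near-$a$ regime, the slack $\delta_k\ge\varepsilon$ coming solely from the size hypothesis makes Chernoff sharp enough; for $k$ further from $a$, the quadratic gap in $bp^b$ takes over. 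The role of $\lambda_0$ is precisely to keep $a$ bounded away from $n$ and $p^a$ away from $0$, preventing the two competing $\Theta(n)$-size exponentials from colliding.
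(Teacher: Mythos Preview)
Your high-level strategy---replacing the naive tail estimate $\binom{n}{t}p^{kt}$ by a sharp Chernoff bound---is exactly the paper's. However, the union bound $\sum_{k=1}^{n}\binom{n}{k}\Prb[\Bin(n,p^k)\ge\tau_k]$ over \emph{all} $k$ cannot close as written, and this is not a detail. Take $p=1/2$ (so $a\in\{1,2\}$, $T\approx n/2$) and $k=n/2$: then $\tau_k\approx 1$ and the $k$-th term is at least $\binom{n}{n/2}\cdot n\,p^{n/2}\approx\sqrt{n}\,2^{n/2}\to\infty$. No Chernoff sharpening rescues this: when $\tau_k/(np^k)$ is very large, the Chernoff bound collapses to the naive $\binom{n}{\tau_k}p^{k\tau_k}$. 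Your claim ``in either case $np^k\psi(1+\delta_k)=\Theta(n)$'' is false once $k\gg a$, and your split at $|k-a|\sim\sqrt{a\ln n}$ does not address this range; the ``quadratic gap in $bp^b$'' is a local Taylor approximation and says nothing about $k$ of order $n$. The same failure occurs in part~(a): for $p=1/e$ and $k=n/2$ the term again diverges.

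The paper closes this gap with a preliminary \emph{square-rectangle lemma}: by a direct first-moment argument it shows that a.a.s.\ there is no all-ones square of side $n/\lambda^{1-\varepsilon}$ in part~(b), respectively of side $\sqrt{pn}$ in part~(a). Combined with the symmetry trick of restricting to \emph{bulky} rectangles ($k\le\ell$) and doubling the final probability, this caps $k$ at $n/\lambda^{2/3}$ (resp.\ $\sqrt{pn}$) in the union bound, after which the Chernoff computation does go through. The paper's subsequent split in part~(b) is simply at $b=\log_{1/p}e\approx a$, not at $|k-a|\sim\sqrt{a\ln n}$: for $b\le a$ one has $\delta_b\ge u-1$ together with $np^b\ge n/e$, while for $a\le b\le n/\lambda^{2/3}$ a separate Chernoff calculation (two variants, depending on whether $\delta$ exceeds $3/2$) suffices. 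Part~(a) also requires more than your one-line sketch: the paper treats $5/n\le p\le\operatorname{polylog}(n)/n$ and $(\ln n)^4/n\le p\le 1/e$ by different arguments, and within the latter splits again at $p=1/8$.
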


The proof requires us to upper bound the sizes of square 1-rectangles, i.e., $R = K\times L$ with $\abs{K}=\abs{L}$.  Sizes of square 1-rectangles have been studied, too.  Building on work in~\cite{Dawande-Keskinocak-Tayur:wp:1996,Dawande-Keskinocak-Swaminathan-Tayur:2001,Park-Szpankowski:biclusters:05}, it was settled in~\cite{Sun-Nobel:JMachLearn:08}, for constant~$p$.  We need results for $p\to 0,1$, but, fortunately, for our theorem, we only require weak upper bounds.

For the proof of~\ref{thm:alpha:small-p}, we say that a 1-rectangle is \textit{bulky}, if it extends over at least~2 rows and also over at least~2 columns.
We then proceed by considering three types of rectangles:
\begin{enumerate}[1.]
\item those consisting of exactly one row or column (they give the bound in the theorem);
\item square bulky rectangles;
\item bulky rectangles which are not square.
\end{enumerate}

For the proof of~\ref{thm:alpha:large-p}, we also require an appropriate notion of ``bulky'': here, we say that a rectangle of dimensions $k\times \ell$ is bulky if $k \le \ell$.  By again considering square rectangles, we prove that a bulky rectangle must have $k < n/\lambda^{\nfrac23}$.  (We always define $\lambda$ through $p = 1 - \nfrac{\lambda}{n}$.)  By exchanging the roles of rows and columns, and multiplying the final probability estimate by~2, we only need to consider 1-rectangles with at least as many columns as rows (i.e., bulky ones).  Following that strategy yields the statement of the theorem.

The complete proof is in Appendix~\ref{apx:alpha}.

\begin{remark}\label{rem:alpha:facts-about-a}
  \begin{enumerate}[(a)]
  \item If $p\ge \nfrac1e$, then
    \begin{equation}\label{eq:alpha:p-to-a-lb}
      \nfrac{1}{e^2}
      \le
      \frac{p}{e}
      \le
      p\cdot p^{\log{\nfrac1p}e}
      \le
      p^a
      \le
      \frac{1}{p}\cdot p^{\log{\nfrac1p}e}
      \le
      \frac{1}{pe}
      \le
      \nfrac{1}{e},
    \end{equation}
    i.e., $p^a \approx \nfrac1e$, more accurately $p^a = (1-o_{p\to 1}(1))/e$.
  \item With $p = 1-\notp = 1 - \nfrac{\lambda}{n}$, the following makes the range of $\onerec(f)$ clearer: Since $\notp \le \ln(\nfrac{1}{(1-\notp)}) \le \notp+\notp^2$ holds when $\notp \le 1-\nfrac1e$, we have
    \begin{equation}\label{eq:alpha:a-bounds}
      \frac{1}{e\bar p} = \frac{n}{e\lambda}   \le\  p\frac{n}{\lambda} = \frac{p}{\bar p}   \le\  \frac{1}{1+\notp}\cdot \frac{1}{\notp} \,\le  \log{\nfrac1p}e  \le  \frac{1}{\notp} =\frac{n}{\lambda}
    \end{equation}
  \end{enumerate}
\end{remark}

\begin{corollary}\label{cor:alpha:large-p:1-o1}
  For $p = 1-\frac{\lambda}{n}$ with $\lambda_0 \le \lambda = o(n)$, we have $\displaystyle%
  \onerec(f)
  =
      \frac{n^2}{e\lambda} + O(n).
      $
\end{corollary}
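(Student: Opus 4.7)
The plan is to derive the corollary from Theorem~\ref{thm:alpha}\ref{thm:alpha:large-p} by an explicit Taylor expansion of $a$ and $p^a$ in the parameter $\lambda$, followed by multiplication.

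First I would verify the hypothesis of the theorem: with $\lambda_0 \le \lambda = o(n)$, we have $1/e \le 1 - \lambda_0/n \le p = 1-\lambda/n \le 1 - o(1)$ for $n$ sufficiently large, so Theorem~\ref{thm:alpha}\ref{thm:alpha:large-p} applies and yields a.a.s.\ $\onerec(f) = (1+o(1))\,a\,p^a\,n$.

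The computation is then the Taylor expansion itself. From $\ln(1/p) = -\ln(1-\lambda/n) = \lambda/n + O(\lambda^2/n^2)$ I obtain $\log{1/p}e = n/\lambda + O(1)$; since $a \in \{\lfloor\log{1/p}e\rfloor, \lceil\log{1/p}e\rceil\}$ by~\eqref{eq:alpha:large-p:def-a}, this yields $a = n/\lambda + O(1)$. Then $a\ln p = -(n/\lambda+O(1))(\lambda/n + O(\lambda^2/n^2)) = -1 + O(\lambda/n)$, so $p^a = e^{-1}\ipb{1 + O(\lambda/n)}$ (using $\lambda/n = o(1)$). Multiplying,
\[
  a\,p^a\,n \;=\; \ipb{\tfrac{n}{\lambda} + O(1)}\cdot\tfrac{n}{e}\ipb{1 + O(\tfrac{\lambda}{n})} \;=\; \frac{n^2}{e\lambda} + O(n),
\]
where both the $O(1)\cdot n/e$ cross-term and the $n^2/(e\lambda)\cdot O(\lambda/n) = O(n/e)$ cross-term are absorbed into the $O(n)$ error.

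The main obstacle --- and what separates the corollary from a mere restatement of the theorem --- is accounting correctly for the $(1+o(1))$ multiplicative error in Theorem~\ref{thm:alpha}\ref{thm:alpha:large-p}: read at face value, this would give only $\onerec(f) = n^2/(e\lambda) + o(n^2/\lambda)$, which is weaker than the claimed $+O(n)$ error whenever $\lambda$ stays a bounded constant. To close this gap, I would return to the proof of Theorem~\ref{thm:alpha}\ref{thm:alpha:large-p} and extract the sharper \emph{additive} statement $\onerec(f) = a p^a n + O(n)$ a.a.s., which is ultimately afforded by the $O(\sqrt n)$-scale concentration of the Binomial counts $\abs{L(K)}\sim\mathrm{Bin}(n,p^a)$ and a careful union-bound treatment of the $\binom{n}{a}$ choices of generating set~$K$.
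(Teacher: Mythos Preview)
Your core computation---the Taylor expansion yielding $ap^a = n/(e\lambda) + O(1)$, hence $ap^a\,n = n^2/(e\lambda) + O(n)$---is exactly the paper's argument, only with more intermediate steps spelled out. The paper writes $ap^a = (1+O(\bar p))\bigl(e\ln\tfrac{1}{1-\bar p}\bigr)^{-1}$, expands the logarithm as a power series in~$\bar p$, arrives at $n/(e\lambda)+O(1)$, multiplies by~$n$, and invokes Theorem~\ref{thm:alpha}\ref{thm:alpha:large-p}.

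Where you go beyond the paper is in flagging the $(1+o(1))$ multiplicative factor from Theorem~\ref{thm:alpha}\ref{thm:alpha:large-p}. The paper's proof simply invokes the theorem after the expansion and does not address this point at all. Your observation that, for bounded~$\lambda$, the theorem as stated only delivers an $o(n^2/\lambda)$ error rather than the claimed $O(n)$ is correct, and your proposed remedy---returning to the proof of the theorem to extract a sharper additive bound from the binomial concentration---is genuinely additional work that the paper does not carry out. In short: your approach matches the paper's, and the ``obstacle'' you identify is a gap the paper itself leaves open.
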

See Appendix~\ref{apx:alpha} for the proof.


\newcommand{\Hbipf}{H_f}
\newcommand{\Hbipfnp}{H_f}
\newcommand{\Gnp}[2]{\mathbf{G}_{#1,#2}}

\section{Fooling sets}\label{sec:fool}
A \textit{fooling set} is a subset $F\subseteq X\times Y$ with the following two properties: (1) for all $(x,y)\in F$, $f(x,y) = 1$; and (2) and for all $(x,y),(x',y')\in F$, if $(x,y)\ne(x',y')$ then $f(x,y')f(x',y)=0$.  When~$f$ is viewed as a matrix, this means that, after permuting rows and columns, $F$ identifies the diagonal entries of a submatrix which is~1 on the diagonal, and in every pair of opposite off-diagonal entries, at least one is~0.  We denote by $\fool(f)$ the size of the largest fooling set of~$f$.  The maximum size of a fooling set of a random Boolean function with $p=\nfrac12$ is easy to determine (e.g., \cite{Dietzfelbinger-Hromkovic-Schnitger:96}).

An obvious lower bound to the fooling set size is the \textit{triangular rank,} i.e., the size of the largest triangular submatrix, again after permuting rows and columns.  (There is also an upper bound for the fooling set size in terms of the linear-algebraic rank, cf.~\cite{Dietzfelbinger-Hromkovic-Schnitger:96,FriesenTheis13}, but since our random matrices have high rank, we cannot use that here.)
In a recent Proc.~AMS paper, Izhakian, Janson, and Rhodes \cite{Izhakian-Janson-Rhodes:PAMS:15} determined the triangular rank of a random matrix with independent Bernoulli entries with constant parameter~$p$.  They left as an open problem to determine the triangular rank in the case when $p \to$ 0 or~1, which is our setting.

Our constructions of fooling sets of random Boolean functions make use of ingredients from random graph theory.  First of all, consider the bipartite $\Hbipf$ whose vertex set is the disjoint union of $X$ and~$Y$, and with $E(\Hbipf) =\supp f \subseteq X$.  For random~$f$, this graph is an \textit{Erd\H{o}s-Renyi random bipartite graph:} each edge is picked independently with probability~$p$.  Based on the following obvious fact, we will use results about matchings in Erd\H{o}s-Renyi random bipartite graphs:

\begin{remark}
  Let $F\subseteq X\times Y$.  The following are equivalent.
  \begin{enumerate}[(a)]
  \item $F$ is a \textit{fooling set}.
  \item $F$ satisfies the following two conditions:
    \begin{itemize}
    \item $F$ is a matching, i.e., $F \subseteq E(H)$;
    \item $F$ is \textit{cross-free,} i.e., for all $(x,y),(x',y')\in F$, if $(x,y)\ne(x',y')$ then $(x,y') \notin E$ or $(x',y)\notin E$.
    \end{itemize}
  \end{enumerate}
\end{remark}

Secondly, fooling sets can be obtained from stable sets in an auxiliary graph: For a random Boolean function~$f$, this graph is an \textit{Erd\H{o}s-Renyi random graphs,} for which results are available yielding good lower bounds.

Fig.~\ref{fig:fool-bounds} summarizes our upper and lower bounds: Upper bounds are above the dotted lines; lower bounds are below the dotted lines; the range for~$p$ is between the dotted lines.  All upper bounds are by the 1st moment method.

We emphasize that the upper and lower bounds differ by at most a constant factor.  If $p\to 1$ quickly enough, i.e., $\notp=1-p = n^{-a}$ for a constant~$a$, then the upper bounds and lower bounds are even the same except for rounding.

\begin{figure}[thp]
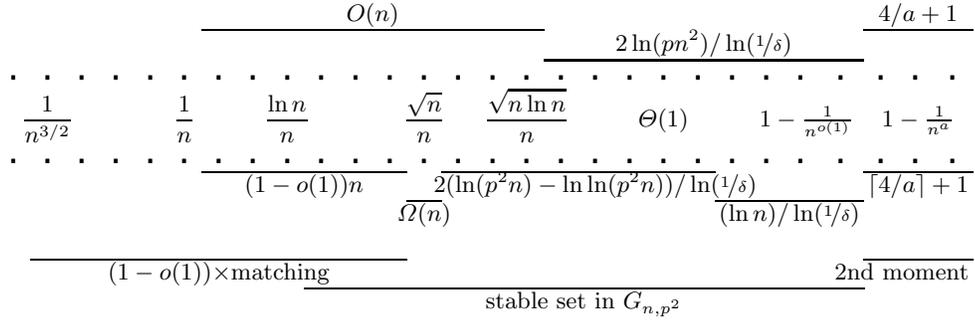

  \centering\small%
  \hspace*{-1em}%
  \begin{tabular}{@{}c@{}c@{}c@{}c@{}c@{}c@{}c@{}c@{}c@{}c@{}c@{}c@{}c@{}c@{}c@{}c@{}c@{}c@{}c@{}c@{}c@{}c@{}c@{}c@{}c@{}c@{}c@{}c@{}}
    \hspace*{1.4em}&\hspace*{1.4em}&\hspace*{1.4em}&\hspace*{1.4em}&\hspace*{1.4em}&\hspace*{1.4em}&\hspace*{1.4em}&\hspace*{1.4em}&\hspace*{1.4em}&\hspace*{1.4em}&\hspace*{1.4em}&\hspace*{1.4em}&\hspace*{1.4em}&\hspace*{1.4em}&\hspace*{1.4em}&\hspace*{1.4em}&\hspace*{1.4em}&\hspace*{1.4em}&\hspace*{1.4em}&\hspace*{1.4em}&\hspace*{1.4em}&\hspace*{1.4em}&\hspace*{1.4em}&\hspace*{1.4em}&\hspace*{1.4em}&\hspace*{1.4em}&\hspace*{1.4em}&\hspace*{1.4em}\\
    &\multicolumn{5}{c}{}&\multicolumn{10}{c}{$O(n)$}&\multicolumn{9}{c}{}&\multicolumn{3}{c}{$\displaystyle 4/a+1$}\\
    \cline{7-16}\cline{26-28}
    \multicolumn{16}{c}{}&\multicolumn{9}{c}{$\displaystyle 2\ln(pn^2)/\ln(\nfrac1\delta)$}\\
    \cline{17-25}
    $\centerdot$&$\centerdot$&$\centerdot$&$\centerdot$&$\centerdot$&$\centerdot$&$\centerdot$&$\centerdot$&$\centerdot$&$\centerdot$&$\centerdot$&$\centerdot$&$\centerdot$&$\centerdot$&$\centerdot$&$\centerdot$&$\centerdot$&$\centerdot$&$\centerdot$&$\centerdot$&$\centerdot$&$\centerdot$&$\centerdot$&$\centerdot$&$\centerdot$&$\centerdot$&$\centerdot$&$\centerdot$\\
    \multicolumn{3}{c}{$\dfrac{1}{n^{3/2}}$}&\quad&\multicolumn{3}{c}{$\dfrac{1}{n}$}&\multicolumn{3}{c}{$\dfrac{\ln n}{n}$}&\quad&\multicolumn{3}{c}{$\dfrac{\sqrt n}{n}$}&\multicolumn{3}{c}{$\dfrac{\sqrt{n\ln n}}{n}$}&\quad&\multicolumn{3}{c}{$\Theta(1)$}&\quad&\multicolumn{3}{c}{$1-\frac{1}{n^{o(1)}}$}&\multicolumn{3}{c}{$1-\frac{1}{n^a}$}\\
    $\centerdot$&$\centerdot$&$\centerdot$&$\centerdot$&$\centerdot$&$\centerdot$&$\centerdot$&$\centerdot$&$\centerdot$&$\centerdot$&$\centerdot$&$\centerdot$&$\centerdot$&$\centerdot$&$\centerdot$&$\centerdot$&$\centerdot$&$\centerdot$&$\centerdot$&$\centerdot$&$\centerdot$&$\centerdot$&$\centerdot$&$\centerdot$&$\centerdot$&$\centerdot$&$\centerdot$&$\centerdot$\\
    \cline{7-12}\cline{14-21}\cline{26-28}
    \multicolumn{6}{c}{}&\multicolumn{6}{c}{$(1-o(1))n$}&\multicolumn{11}{c}{$2(\ln(p^2n) -\ln\ln(p^2n))/\ln(\nfrac1\delta)$}&\multicolumn{2}{c}{}&\multicolumn{3}{c}{$\displaystyle \lceil 4/a\rceil+1$}\\
    \cline{13-13}\cline{22-25}
    \multicolumn{11}{c}{}&\multicolumn{3}{c}{$\Omega(n)$}&\multicolumn{7}{c}{}&\multicolumn{4}{c}{$(\ln n)/\ln(\nfrac1\delta)$}\\
    \\
    \cline{2-12}\cline{26-28}
    &\multicolumn{11}{c}{$(1-o(1))\times$matching}&\multicolumn{11}{c}{}&\multicolumn{5}{r}{2nd moment}\\
    \cline{10-25}
    &\multicolumn{8}{c}{}&\multicolumn{16}{c}{stable set in $G_{n,p^2}$}\\
  \end{tabular}
  \caption{\label{fig:fool-bounds} Upper and lower bounds on fooling set sizes.  ($\delta := 1-p^2$)}
\end{figure}

\subsection{Statement of the theorem, and a glimpse of the proof}
Denote by $\nu(H)$ the size of the largest matching in a bipartite graph~$H$.
For $q=q(m)$, denote by $\Gnp{m}{q}$ the graph with vertex set~$\{1,\dots,m\}$ in which each of the $\binom{m}{2}$ possible edges is chosen (independently) with probability~$q$.  Let $a(q)=a_m(q)$ be a function with the property that, a.a.s., every Erd\H{o}s-Renyi random graph on~$m$ vertices with edge-probability~$q$ has an independent set of size at least~$a_m(q)$.

\begin{theorem}\label{thm:fool}
  Let~$f\colon X\times Y\to\{0,1\}$ be a random Boolean function with parameter~$p=p(n)$.  Define $\notp := 1-p$ and $\delta := 1-p^2$.
  \begin{subequations}
    \begin{enumerate}[(a)]
    \item\label{thm:fool:o-sqrtn}%
      For $n^{-3/2} \le p = o(1/\sqrt n)$, a.a.s., we have
      \begin{equation*}
        \fool(f) = (1-o(1))\nu(\Hbipfnp).
      \end{equation*}
    \item\label{thm:fool:alpha}%
      If $pn - \ln n \to \infty$, then, a.a.s., $\fool(f) \ge a(p^2)$.
    \item\label{thm:fool:ub}%
      If $p \gg \sqrt{(\ln n)/n}$ and $\notp \ge n^{-o(1)}$, then, a.a.s.,
      \begin{equation*}
        \fool(f) \le 2 \log{\nfrac1\delta}(pn^2).
      \end{equation*}
    \item\label{thm:fool:notp-small}%
      If $a \in \lt]0,4\rt[$ is a constant and $\notp = n^{-a}$, then $\displaystyle \fool(f) \le \nfrac{4}{a}+1$.  
      If, in addition, $a<1$, then $\displaystyle \fool(f) = \lfloor \nfrac{4}{a} \rfloor+1$
    \end{enumerate}
  \end{subequations}
\end{theorem}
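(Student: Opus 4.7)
\textbf{Proof plan for Theorem~\ref{thm:fool}.} All four parts build on the characterisation of a fooling set as a cross-free matching in $\Hbipf$, but use different techniques. For part~\ref{thm:fool:o-sqrtn} the upper bound $\fool(f)\le\nu(\Hbipf)$ is trivial, so I would match it from below. Let $M$ be a maximum matching of $\Hbipf$. Two distinct edges $(x,y),(x',y')\in M$ form a \emph{cross} iff $f(x,y')f(x',y)=1$; the four entries involved are disjoint from the entries that force $M\subseteq\supp f$, so, conditional on $M$, the expected number of crosses inside $M$ is at most $\binom{\nu(\Hbipf)}{2}p^{2}\le n^{2}p^{2}/2=o(n)$ when $p=o(1/\sqrt n)$. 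A first-moment estimate for matchings in sparse bipartite random graphs shows $\nu(\Hbipf)=\omega(\sqrt n)$ a.a.s.\ in the range $p\ge n^{-3/2}$; hence Markov lets me delete one edge per crossed pair and retain a cross-free submatching---a fooling set---of size $(1-o(1))\nu(\Hbipf)$.

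For part~\ref{thm:fool:alpha} the classical bipartite matching threshold gives $\Hbipf$ a perfect matching a.a.s.\ once $pn-\ln n\to\infty$. After relabelling rows and columns so the matching is the identity, a cross-free subset of it is precisely an independent set in the \emph{cross graph} $G$ on $[n]$ defined by $\{i,j\}\in E(G)\iff f(i,j)=f(j,i)=1$. Since different edges of $G$ depend on disjoint pairs of off-diagonal entries, each independent Bernoulli$(p)$, one has $G\sim\Gnp{n}{p^{2}}$, and therefore $\fool(f)\ge\alpha(G)\ge a(p^{2})$. The delicate point is that the chosen matching itself depends on~$f$; I would fix this by a two-round exposure: writing $p=1-(1-p')(1-p'')$, use the first round to locate the perfect matching and the independent second round to populate the cross graph, so that the $\Gnp{n}{p^{2}}$ law really holds conditionally on the matching.

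Part~\ref{thm:fool:ub} is a first-moment bound. Counting labelled fooling sets of size~$k$,
\begin{equation*}
  \Exp X_k \;=\; \binom{n}{k}^{2}\,k!\;p^{k}\,\delta^{\binom{k}{2}},
\end{equation*}
since the $k$ ``diagonal'' entries must each be~$1$ and each of the $\binom{k}{2}$ off-diagonal pairs must contain a~$0$ (by independence). Bounding $\binom{n}{k}^{2}k!\le n^{2k}$ and taking logarithms shows $\Exp X_k\to 0$ as soon as $k>2\log{1/\delta}(pn^{2})$, and Markov finishes the argument.

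For part~\ref{thm:fool:notp-small}, substituting $\notp=n^{-a}$ into part~\ref{thm:fool:ub} with $\log{1/\delta}n=1/a+o(1)$ yields the upper bound $\fool(f)\le\lfloor 4/a\rfloor+1$. For the matching lower bound when $a<1$, I would run a second-moment argument on $X_k$ with $k:=\lfloor 4/a\rfloor+1$: a direct calculation shows $\Exp X_k$ is bounded away from~$0$ and, when $4/a$ is non-integer, grows polynomially in~$n$. For $\Exp X_k^{2}$ the dominant contribution comes from pairs of fooling sets that share no rows and no columns, where independence yields exactly $(\Exp X_k)^{2}$; contributions from pairs sharing $r$ rows, $c$ columns, and $t$ common diagonal entries are handled case by case, exploiting $\delta\to 0$ to show each is of smaller order. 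The main obstacle is the bookkeeping required to verify uniformly that every overlap pattern contributes $o((\Exp X_k)^{2})$. Once that is done, Paley-Zygmund (or, in borderline integer cases, a Poisson approximation via Stein--Chen) yields $X_k\ge 1$ a.a.s., matching the first-moment upper bound.
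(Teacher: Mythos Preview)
Parts~\ref{thm:fool:ub} and~\ref{thm:fool:notp-small} of your plan coincide with the paper's: the first-moment calculation for the upper bounds and the second-moment method for the lower bound in~\ref{thm:fool:notp-small} are exactly what the paper does.

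For part~\ref{thm:fool:o-sqrtn}, your chain of implications does not close. From ``crosses $\le\binom{\nu}{2}p^{2}\le n^{2}p^{2}/2=o(n)$'' and ``$\nu=\omega(\sqrt n)$'' you cannot conclude crosses $=o(\nu)$ (take, e.g., crosses $=n^{0.9}$ and $\nu=n^{0.6}$). The right inequality is the direct one: $\binom{\nu}{2}p^{2}\le\nu\cdot(np^{2})/2=o(\nu)$, since $np^{2}=o(1)$; the detour through $\omega(\sqrt n)$ is both unnecessary and false at the endpoint $p=n^{-3/2}$, where $\abs{E(\Hbipf)}=\Theta(\sqrt n)$ a.a.s.\ and hence $\nu=O(\sqrt n)$. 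More seriously, ``conditional on $M$'' cannot mean ``conditional on $M$ being a maximum matching'': that event depends on all of~$f$, not only on the entries indexed by~$M$, so the cross entries are no longer independent Bernoulli$(p)$. The paper does not condition on a maximum matching; it proves a lemma of the shape
\[
  \Prb\bigl(\fool(f)<a \ \text{and}\ \nu(\Hbipf)\ge r\bigr)\ \le\ \Prb\bigl(\alpha(\Gnp{r}{p^{2}})<a\bigr),
\]
obtained by conditioning on the event that a \emph{fixed} matching $m$ of size~$r$ is contained in $\Hbipf$ (under which the cross graph of~$m$ is genuinely $\Gnp{r}{p^{2}}$), and then combining over~$m$. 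Both~\ref{thm:fool:o-sqrtn} and~\ref{thm:fool:alpha} are deduced from this single lemma together with known facts about $\nu(\Hbipf)$ and $\alpha(\Gnp{r}{p^{2}})$.

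For part~\ref{thm:fool:alpha}, your two-round exposure does not give what you need. If you use round one to locate a perfect matching $M$ and round two ``to populate the cross graph'', the object you control is the cross graph of~$M$ with respect to the \emph{round-two} edges only, which is indeed $\Gnp{n}{(p'')^{2}}$. But a stable set there need not be cross-free in~$\Hbipf$: the round-one off-diagonal edges can create additional crosses, and those edges are precisely the ones whose distribution is distorted by conditioning on the matching you found. So the sprinkling goes the wrong way for a \emph{lower} bound on~$\alpha$ of the full cross graph. The paper avoids this by the conditioning lemma above, which sidesteps any dependence between the chosen matching and the off-diagonal entries.
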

The proof is in Appendix~\ref{apx:fool}.

To obtain the bounds in Fig.~\ref{fig:fool-bounds}, the following facts from random graph theory are needed.
\begin{theorem}[Matchings in Erd\H{o}s-Renyi random bipartite graphs, cf., e.g., \cite{Janson-Luczak-Rucinski:Book}]\label{thm:fool:matchings-in-Hnp}
  Let~$H = (X,Y,E)$ be a random bipartite graph with $\abs{X}=\abs{Y}=n$, and edge probability~$p$.
  \begin{enumerate}[(a)]
  \item If $p \gg \nfrac1n$, then, a.a.s., $H$ has a matching of size $(1-o(1))n$.
  \item If $p = (\omega(n) + \ln n)/n$ for an $\omega$ which tends to~$\infty$ arbitrarily slowly, then, a.a.s, $H$ has a matching of size~$n$.
  \end{enumerate}
\end{theorem}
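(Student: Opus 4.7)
I would derive both parts from Hall's marriage theorem applied to $H = (X,Y,E)$.

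For~\ref{thm:fool:matchings-in-Hnp}(a), K\"onig--Egerv\'ary says that $\nu(H) \le (1-\eps)n$ implies the existence of a vertex cover of size at most $(1-\eps)n$; its complement is a bipartite independent pair $(A,B)$ with $A \subseteq X$, $B \subseteq Y$, $|A|+|B| \ge (1+\eps)n$, forcing $\min(|A|,|B|)\ge \eps n$ and no $H$-edges between $A$ and $B$.  A union bound over such pairs gives
\begin{equation*}
  \Prb\bigl(\nu(H) \le (1-\eps)n\bigr)
  \le \binom{n}{\lceil\eps n\rceil}^{2}(1-p)^{\eps^{2}n^{2}}
  \le \exp\bigl(2\eps n \ln(e/\eps) - p\eps^{2}n^{2}\bigr).
\end{equation*}
Since $pn\to\infty$, choosing $\eps := (\ln(pn))^{2}/(pn) \to 0$ makes the second exponent dominate, and the bound is $o(1)$.

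For~\ref{thm:fool:matchings-in-Hnp}(b), first note that the expected number of isolated vertices of $H$ is $2n(1-p)^{n} \le 2n e^{-pn} = 2e^{-\omega(n)}\to 0$, so a.a.s.\ $H$ has no isolated vertex.  Conditional on this, I would verify Hall's condition $|N(S)|\ge |S|$ for every nonempty $S\subseteq X$ (by the symmetric roles of $X$ and $Y$ it suffices to consider $|S|\le n/2$).  A Hall-violation with $|S|=s$ produces $T \subseteq Y\setminus N(S)$ of size $n-s+1$ with no $H$-edges between $S$ and $T$, contributing the term $\binom{n}{s}\binom{n}{s-1}(1-p)^{s(n-s+1)}$ to the union bound.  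For $s \gtrsim \ln n$ the factor $(1-p)^{s(n-s+1)} \le e^{-psn/2}$ easily dominates $(en/s)^{2s}$ and the partial sum is $o(1)$.

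The main obstacle is the small-$s$ regime of the Hall union bound: for $s=2$ the naive term is of order $n^{3}e^{-2pn} = ne^{-2\omega(n)}$, which need not vanish when $\omega$ grows arbitrarily slowly.  The standard remedy is to restrict to a \emph{minimum} Hall-violating $S$; minimality forces every vertex of $N(S)$ to have at least two neighbours in $S$ (otherwise one could shrink $S$), which gives additional absent-edge constraints between $Y\setminus N(S)$ and $N(S)$ that strengthen the exponent just enough so that the refined union bound is $o(1)$ uniformly for $2 \le s \lesssim \ln n$.
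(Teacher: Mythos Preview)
The paper does not give its own proof of this theorem; it is quoted from \cite{Janson-Luczak-Rucinski:Book} and used as a black box, so there is nothing in the paper to compare your argument against.

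On its own merits, your plan is the standard one and is essentially correct.  Part~(a) via K\"onig--Egerv\'ary and the empty-biclique union bound is fine as written.  For part~(b) you correctly pinpoint the obstruction at small~$s$ and the right remedy (minimality of the Hall-violating set), and you correctly derive the key consequence of minimality: every vertex of $N(S)$ has at least two neighbours in~$S$.  Your final sentence, however, misdescribes how this feeds into the union bound.  These are \emph{present}-edge constraints between $S\subseteq X$ and $N(S)\subseteq Y$, contributing an extra factor of at most $\bigl(\binom{s}{2}p^{2}\bigr)^{s-1}$; they are not ``absent-edge constraints between $Y\setminus N(S)$ and $N(S)$'' --- both of those sets lie in~$Y$, so in a bipartite graph there are no such edges to forbid.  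With the correct factor the $s=2$ term becomes of order $n^{3}p^{2}e^{-2pn}=O\bigl((\ln n)^{2}/n\bigr)$, and the remaining small-$s$ range is handled the same way; from there the argument goes through.
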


\begin{theorem}[Stable sets in Erd\H{o}s-Renyi random graphs]\label{thm:fool:stablesets-in-Gnp}
  Let~$G = ([m],E)$ be a random graph with $\{u,v\} \in E$ with edge probability~$q=q(m)$.
  \begin{enumerate}[(a)]
  \item E.g., \cite{Janson-Luczak-Rucinski:Book}: Let $\omega=\omega(m)$ tend to~$\infty$ arbitrarily slowly.  If $\omega/m \le q = 1-\Omega(1)$, then a.a.s., $G$ has a stable set of size at least
    \begin{equation*}
      2\frac{ \ln(qm) - \ln\ln(qm) }{ \ln(1-q) }.
    \end{equation*}
  \item Greedy stable set: If $q = \Omega(1)$, then, a.a.s., $G$ has a stable set of size at least
    \begin{equation*}
      \frac{ \ln(m) }{ \ln(1-q) }.
    \end{equation*}
  \end{enumerate}
\end{theorem}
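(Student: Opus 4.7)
Both parts are classical results from the theory of random graphs, and the proofs I would give follow the standard approaches in~\cite{Janson-Luczak-Rucinski:Book}.

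For part~\textit{(b)}, the plan is to analyze the greedy independent-set algorithm: pick a vertex $v_1$ arbitrarily, then repeatedly pick $v_{i+1}$ from among the non-neighbors of $\{v_1,\dots,v_i\}$.  Exposing edges only as they become relevant, if $C_i$ denotes the number of remaining candidate vertices after the $i$-th step, then conditional on $C_{i-1}=c$ the count $C_i$ is distributed as $\Bin(c-1,\,1-q)$.  A union bound over Chernoff estimates for $C_1,\dots,C_i$ shows that $C_i$ stays within a factor $1\pm o(1)$ of $m(1-q)^i$ as long as this quantity is $\omega(\ln m)$.  Because $q=\Omega(1)$, the algorithm can be pushed for at least $(1-o(1))\,\ln m/\ln(1/(1-q))$ rounds a.a.s., which yields an independent set of the claimed size.

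For part~\textit{(a)}, the plan is to apply the second moment method to $X_k$, the number of independent sets of size exactly $k$, with
\[
  k := \Lft\lfloor \; 2\,\frac{\ln(qm)-\ln\ln(qm)}{\ln(1/(1-q))} \;\Rgt\rfloor .
\]
The first moment is $\Exp[X_k] = \binom{m}{k}(1-q)^{\binom{k}{2}}$, and, crucially using the hypothesis $q\ge \omega/m$ so that $qm\to\infty$, a short Stirling-based calculation shows that this choice of $k$ gives $\Exp[X_k]\to\infty$.  For the second moment one expands
\[
  \Exp[X_k^2] = \sum_{j=0}^{k}\binom{m}{k}\binom{k}{j}\binom{m-k}{k-j}\,(1-q)^{2\binom{k}{2}-\binom{j}{2}}
\]
as a sum over the intersection size $j$ of two $k$-sets, and the goal is to prove $\Exp[X_k^2]/\Exp[X_k]^2 \to 1$; Chebyshev's inequality then gives $X_k>0$ a.a.s.

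The main technical obstacle is this variance bound.  It requires splitting the sum over $j$ into a small-intersection regime ($j$ of constant order), which contributes $(1+o(1))\Exp[X_k]^2$, and a large-intersection regime, which must be shown to be negligible uniformly in $j$.  The $-\ln\ln(qm)$ correction in the definition of $k$ is exactly what makes the middle terms summable, and the estimates have to be done with enough care that they remain valid as $q$ is allowed to range up to $1-\Omega(1)$ rather than being held constant.
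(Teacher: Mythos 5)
The paper does not actually prove this theorem: part~(a) is cited directly to the Janson--\L uczak--Ruci\'nski book, and part~(b) is just flagged as the ``greedy stable set'' bound; the appendix contains proofs only for Theorems~\ref{thm:alpha}, \ref{thm:fool} and the material of Section~\ref{sec:ndcc}.  Your sketch fills in precisely the standard proofs that the citation refers to---second moment method with the sum-over-overlaps decomposition of $\Exp[X_k^2]$ for~(a), and Chernoff-plus-union control of the candidate counts $C_i$ in the greedy peeling process for~(b)---so there is no genuinely different route to compare against.  Two small points deserve mention.  First, the theorem as printed writes $\ln(1-q)$ in both denominators; since $0<q<1$ this is negative and makes the bounds vacuous.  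You silently (and correctly) replaced it by $\ln(1/(1-q))$, which is the positive quantity used consistently in Fig.~\ref{fig:fool-bounds} (there written $\ln(\nfrac1\delta)$) and throughout the proof of Lemma~\ref{lem:fool:exp_numo}.  Second, the greedy analysis as you sketch it only yields $(1-o(1))\,\ln m/\ln(1/(1-q))$ rather than the stated $\ln m/\ln(1/(1-q))$, because the union bound over Chernoff estimates is only useful while the candidate count stays $\omega(\ln m)$; to recover the bare constant one has to push the exposure argument into the $O(\ln m)$-candidate tail, where the simple concentration step fails.  This $(1-o(1))$ loss is harmless at the precision the paper actually uses---Fig.~\ref{fig:fool-bounds} only tracks leading constants up to such factors---but it is worth stating explicitly if you intend your write-up to prove the literal inequality in item~(b).
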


For the region $p=\Theta(1/\sqrt n)$, there is a corresponding theorem (e.g.,\cite{Dani-Moore:w2ndMoment:11}). We give here an argument about the expectation based on Tur\'an's theorem.  Tur\'an's theorem in the version for stable sets \cite{Alon-Spencer:Book} states that in a graph with vertex set~$V$, there exists a stable set of size at least
\begin{equation*}
  \sum_{v\in V} \frac{1}{\deg(v) + 1},
\end{equation*}
where $\deg(v)$ denotes the degree of vertex~$v$.  For random graphs on vertex set~$V=[m]$ with edge probability $q = c/m$ for a constant~$c$, using Jensen's inequality, we find that there expected size of the largest stable set is at least
\begin{multline*}
  \Exp\lt( \sum_{v\in V} \frac{1}{\deg(v) + 1} \rt)
  =
  \sum_{v\in V} \Exp\lt( \frac{1}{\deg(v) + 1} \rt)
  \\
  \ge
  \sum_{v\in V} \frac{1}{\Exp\deg(v) + 1}
  =
  \frac{2m}{q(m-1) + 1}
  \ge
  \frac{2m}{c + 1}
  =
  \Theta(m).
\end{multline*}


\section{Fractional cover number and cover number}\label{sec:ndcc}
Armed with the fooling set and 1-rectangle-size lower bounds, we can now bound the cover number and the fractional cover number.  We start with the easy case $p \le \nfrac12$.

Let~$f$ be a random Boolean function $X\times Y\to\{0,1\}$ with parameter~$p$, as usual.  If $1/n \ll p \le \nfrac12$, we have $\rc(f) = (1-o(1))n$.  Indeed, for $p=o(1/\sqrt n)$, Theorem~\ref{sec:fool}\ref{thm:fool:o-sqrtn} gives the lower bound based on the fooling set lower bound.   For $\nfrac1e \ge p \gg (\ln n) / n)$, Theorem~\ref{thm:alpha}\ref{thm:alpha:small-p} yields $\onerec(f) = (1+o(1))pn$, a.a.s., and for $\nfrac1e \le p \le \nfrac12$, the value of~$a$ in eqn.~\eqref{eq:alpha:large-p:def-a} of Theorem~\ref{thm:alpha}\ref{thm:alpha:large-p} is~1, so that $\onerec(f) = (1+o(1))pn$ there, too.
We conclude that, a.a.s.,
\begin{equation*}
  \rc(f)
  \ge
  \frac{\abs{\supp f}}{\onerec(f)}
  =
  \frac{(1-o(1)) pn^2}{(1-o(1)) pn}
  =
  (1-o(1))\,n.
\end{equation*}

\mypar%
As indicated in the introduction, the case $p > \nfrac12$ is more interesting, both from the application point of view and from the point of view of the proof techniques.

For the remainder of this section, we assume that $p > \nfrac12$.  Define $\notp := 1-p$, and $\lambda := \notp n$.

\subsection{The fractional cover number}
We briefly review the definition of the fractional cover number.  Let~$f$ be a fixed Boolean function, and let~$R$ be a random 1-rectangle of~$f$, drawn according to a distribution~$\pi$.  Define
\begin{equation*}\label{eq:chi:def-min-cover-prb}
  \gamma(\pi) := \min \Bigl\{ \Prb_{R\sim \pi}\bigl( (x,y) \in R \bigr) \mid (x,y)\in \supp f \Bigr\}.
\end{equation*}
The \textit{fractional cover number} is $\frc(f) := \min_\pi 1/\gamma(\pi)$, where the minimum is taken over all distributions~$\pi$ on the set of 1-rectangles of~$f$.

The following inequalities are well-known~\cite{Kushilevitz-Nisan:Book:97}.
\begin{equation}\label{eq:coloring-parameters}
  \lt.
  \begin{array}[c]{r}
    \displaystyle \frac{\abs{\supp f}}{\onerec(f)} \\[2ex]
    \displaystyle \fool(f)
  \end{array}
  \rt\}
  \le
  \frc(f)
  \le
  \rc(f)
  \lecmt{(*)}
  \bigl( 1+\ln\onerec(f) \bigr) \, \frc(f).
\end{equation}

\subsubsection*{Lower bound}
Theorem~\ref{thm:alpha}\ref{thm:alpha:large-p} allows us to lower bound $\frc(f)$.  Let~$f$ be a random Boolean function $X \times Y\to\{0,1\}$ with parameter~$p>\nfrac12$.
With $\nfrac{\lambda}{n} = \notp = 1-p$, we have a.a.s.,
\begin{equation}\label{eq:chi:lb-ir}
  \frac{\abs{\supp f}}{\onerec(f)}
  \ge
  \frac{(1+o(1))pn^2}{(1+o(1)) n/e\ln(1/p)}
  =
  (1+o(1))\; ep\ln(1/p) n
  \ge
  (1-o(1))\; ep \lambda
\end{equation}
where the last inequality follows from $\notp \le \notp + \notp^2/2 + \notp^3/3 + \dots = \ln(1/(1-\notp))$.
For $\notp=o(1)$, this is asymptotic to $e \lambda$.  It is worth noting that the first inequality in~\eqref{eq:chi:lb-ir} becomes an asymptotic equality if $\notp=o(1)$.

\subsubsection*{Upper bound}
We now give upper bounds on $\frc(f)$.
To prove an upper bound~$b$ on the fractional cover number for a fixed function~$f$, we have to give a distribution~$\pi$ on the 1-rectangles of~$f$ such that, if~$R$ is sampled according to~$\pi$, we have, for all $(x,y)$ with $f(x,y)=1$,
\begin{equation*}
  \Prb( (x,y) \in R ) \ge \nfrac{1}{b}.
\end{equation*}

To prove an ``a.a.s.'' upper bound for a random~$f$, we have to show that
\begin{equation}\label{eq:chi:fracchi-ub:prob_rnd-rect-covers-everything}
  \Prb\biggl( \exists (x,y)\colon\  \Prb\bigl( (x,y) \in R  \mid  f \ \&\  f(x,y)=1 \bigr) < \nfrac1b  \biggr) = o(1).
\end{equation}

Our random 1-rectangle~$R$ within the random Boolean function~$f$ is sampled as follows.  Let $K$ be a random subset of $X$, by taking each $x$ into $K$ independently, with probability~$q$.  Then let $R := K\times L$ be the 1-rectangle generated (see p.~\pageref{page:generated_rectangle}) by the row-set~$K$, i.e., $L := \{ y \mid \forall x \in K\colon f(x,y)=1\}$.

For $y\in Y$, let the random variable $Z_y$ count the number of $x\in X$ with $f(x,y)=0$---in other words, the number of zeros in column~$y$---and set $Z := \max_{y\in Y} Z$.
For $(x,y)\in X\times Y$, conditioned on $f$ and $f(x,y) =1$, the probability that $(x,y) \in R$ equals
\begin{equation*}
  q(1-q)^{Z_y} \ge q(1-q)^Z,
\end{equation*}
so that for every positive integer~$z$, using $\nfrac1b = q(1-q)^z$ in~\eqref{eq:chi:fracchi-ub:prob_rnd-rect-covers-everything},
\begin{equation}\label{eq:chi:fracchi-ub:prb_in-rnd-rect}
  \Prb\biggl( \exists (x,y)\colon\  \Prb\bigl( (x,y) \in R  \mid  f \ \&\  f(x,y)=1 \bigr) < q(1-q)^z  \biggr)
  =
  \Prb( Z > z ).
\end{equation}
To obtain upper bounds on the fractional cover number, we give a.a.s.\ upper bounds on~$Z$, and choose~$q$ accordingly.

\begin{theorem}\label{thm:chi:fracchi-ub}
  \begin{subequations}
    Let $\nfrac12 > p = 1 - \notp = 1 - \nfrac{\lambda}{n}$.
    \begin{enumerate}[(a)]
    \item\label{enum:chi:fracchi-ub:omega(logn)}%
      If $\ln n \ll \lambda < n/2$, then, a.a.s., $\displaystyle
      {\textstyle (1-o(1))} \; pe\lambda \le \frc(f) \le {\textstyle (1+o(1))} \; e\lambda $
    \item\label{enum:chi:fracchi-ub:Theta(logn)}%
      If $\lambda = \Theta(\ln n)$, then, a.a.s., $\displaystyle
      \frc(f) = \Theta(\ln n).  $
    \item\label{enum:chi:fracchi-ub:o(logn)}%
      If $1 \ll \lambda = o(\ln n)$, then, a.a.s.,
      \begin{equation*}
        {\textstyle (1-o(1))} \; \lambda
        \le
        \frc(f)
        \le
        {\textstyle (1+o(1))} \;
        e\max\Bigl(   2\lambda,  \frac{\ln n}{\ln((\ln n)/\lambda)}   \Bigr)
      \end{equation*}
    \end{enumerate}
  \end{subequations}
\end{theorem}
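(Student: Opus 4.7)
The plan is to prove the lower bounds via the counting inequality $\frc(f) \geq \abs{\supp f}/\onerec(f)$ from~\eqref{eq:coloring-parameters}, combined with Theorem~\ref{thm:alpha}\ref{thm:alpha:large-p} and a Chernoff estimate for $\abs{\supp f} \sim \Bin(n^2,p)$. Since $\lambda\to\infty$ in each of the three parts (so $\lambda \geq \lambda_0$ eventually), Corollary~\ref{cor:alpha:large-p:1-o1} gives $\onerec(f) = (1+o(1))\,n^2/(e\lambda)$ a.a.s., and combining with $\abs{\supp f} = (1+o(1))\,pn^2$ yields $\frc(f) \geq (1-o(1))\, ep\lambda$ a.a.s. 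This is exactly the lower bound of~(a), implies $\frc(f) = \Omega(\ln n)$ in~(b), and, since $p\to1$ in~(c), gives the stated $(1-o(1))\lambda$ bound there as well.

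For the upper bounds I rely on the random-rectangle construction introduced in the text: with $K$ obtained by including each $x\in X$ independently with probability $q$, and $R = K\times L$ the generated rectangle, inequality~\eqref{eq:chi:fracchi-ub:prb_in-rnd-rect} gives $\frc(f) \leq 1/(q(1-q)^z)$ as soon as $Z := \max_y Z_y \leq z$ a.a.s., where $Z_y \sim \Bin(n,\lambda/n)$ is the number of zeros in column $y$. Optimizing the right-hand side over $q$ picks $q \approx 1/(z+1)$ and, provided $z\to\infty$, yields $\frc(f) \leq (1+o(1))\, ez$. The task thus reduces to identifying a tight a.a.s.\ upper bound on $Z$ in each regime.

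The three parts of the theorem correspond to three standard regimes for the maximum of $n$ i.i.d.\ binomials of mean $\lambda$. In~(a), where $\lambda \gg \ln n$, the multiplicative Chernoff bound combined with a union bound gives $Z \leq (1+o(1))\lambda$; choosing $q = 1/(\lambda+1)$ and using $(1-1/(\lambda+1))^{-\lambda}\to e$ yields $\frc(f) \leq (1+o(1))\, e\lambda$. In~(b), where $\lambda = \Theta(\ln n)$, a multiplicative Chernoff bound with a sufficiently large constant gives $Z = O(\ln n)$ a.a.s., and then $q = 1/\lambda$ gives $\frc(f) = O(\ln n)$, matching the lower bound. In~(c), where $\lambda = o(\ln n)$, Chernoff is too weak; instead, the crude estimate $\Prb(Z_y \geq t) \leq (e\lambda/t)^t$ together with a union bound over the $n$ columns yields $Z \leq (1+o(1))\,\ln n / \ln(\ln n/\lambda)$ a.a.s., which is the dominant term of the stated bound. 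The $2\lambda$ inside the maximum is a safe floor that accommodates the boundary behavior when $\lambda$ approaches $\ln n$ and $\ln(\ln n/\lambda)$ becomes small.

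The main obstacle is part~(c): the distribution of $Z_y$ in that regime is concentrated near $\lambda$ but has a slowly-decaying upper tail, and pinning down both the leading constant and the $(1+o(1))$ factor in $\ln n / \ln(\ln n/\lambda)$ requires a careful two-sided analysis --- the union-bound upper bound just described, together with a verification that no meaningfully smaller $t$ survives $\Prb(Z \geq t) = o(1)$. This extremal analysis of binomial maxima is standard but more delicate than the pure Chernoff computations that drive parts~(a) and~(b).
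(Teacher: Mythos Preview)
Your approach matches the paper's: lower bounds via $\abs{\supp f}/\onerec(f)$ and Theorem~\ref{thm:alpha}\ref{thm:alpha:large-p} (yielding $(1-o(1))\,ep\lambda$, which implies all three stated lower bounds), and upper bounds via the random-rectangle construction together with Chernoff-type control of $Z=\max_y Z_y$ in the three regimes, optimizing $q\approx 1/z$ to get $\frc(f)\le(1+o(1))\,ez$. One small correction: part~(c) needs only the one-sided union-bound estimate $\Prb(Z_y\ge t)\le (e\lambda/t)^t$---no lower bound on $Z$ is required for an upper bound on $\frc(f)$---and the $2\lambda$ inside the maximum is not a ``boundary floor'' but the hypothesis $\alpha\ge 2\lambda$ of the binomial tail bound (Lemma~\ref{lem:binomial-ub}) used to obtain that estimate.
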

To summarize, we can determine the fractional cover number accurately in the region $\ln n \ll \lambda \ll n$.  For $\lambda = \Theta(\ln n)$ and for $\lambda=\Theta(n)$, we can determine $\frc(f)$ up to a constant.  However, for $\lambda = o(\ln n)$, there is a large gap between our upper and lower bounds.
\begin{proof}
  The lower bounds follow from the discussion above.

  \textit{Proof of the upper bound in \ref{enum:chi:fracchi-ub:omega(logn)}.} %
  For every constant $t>0$, let
  \begin{equation*}
    \psi(t) := 1/\bigl( (1+t) \ln(1+t) - t  \bigr).
  \end{equation*}
  With
  \begin{equation*}
    h(t) = h(t,n) := \frac{\lambda}{\psi(t) \ln n},
  \end{equation*}
  using the a standard Chernoff estimate (Theorem~2.1, Eqn.(2.5) in~\cite{Janson-Luczak-Rucinski:Book}) we find that
  \begin{equation*}
    \Prb\bigl( Z_1 \ge (1+t)\lambda \bigr) \le e^{-\lambda/\psi(t)} \le e^{-h(t)}n,
  \end{equation*}
  so that, by the union bound,
  \begin{equation}\label{eq:chi:fracchi-ub:bound_on_Z_w_h}
    \Prb\bigl( Z \ge (1+t)\lambda \bigr) \le e^{-h(t)}.
  \end{equation}

  For every fixed $t>0$, $h(t)$ tends to infinity with~$n$, so that the RHS in~\eqref{eq:chi:fracchi-ub:bound_on_Z_w_h} is $o(1)$.  Using that in~\eqref{eq:chi:fracchi-ub:prb_in-rnd-rect}, we obtain
  \begin{equation*}
    \Prb\biggl( \exists (x,y)\colon\  \Prb\bigl( (x,y) \in R  \mid  f \ \&\  f(x,y)=1 \bigr) < q(1-q)^{(1+t)\lambda}  \biggr)
    =
    \Prb( Z > (1+t)\lambda )
    = o(1),
  \end{equation*}
  and, taking $q := \frac{1}{(1+t)\lambda}$, we obtain, a.a.s.,
  \begin{equation*}
    \frc(f) \le \frac{1}{q(1-q)^{(1+t)\lambda}} \le \frac{1+t}{1+\frac{1}{(1+t)\lambda}}e\lambda,
  \end{equation*}
  where we used $(1-\eps)^k \ge (1-k\eps^2)e^{-k\eps}$ for $\eps < 1$.
  Since this is true for every~$t>0$, we conclude that, a.a.s., $\frc(f) \le (1-o(1))e\lambda$.

  \mypar%
  \textit{Proof of the upper bounds in \ref{enum:chi:fracchi-ub:Theta(logn)}, \ref{enum:chi:fracchi-ub:o(logn)}.} %
Here we use a slightly different Chernoff bound (Lemma~\ref{lem:binomial-ub} in the appendix).

  For~\ref{enum:chi:fracchi-ub:Theta(logn)}, suppose that $\lambda \le C\ln n$ for a constant $C>1$.  Using Lemma~\ref{lem:binomial-ub} with $\alpha = e^2 C \ln n$, we obtain
  \begin{equation*}
    \Prb\bigl( Z_1 \ge e^2 C \ln n \bigr)
    =
    O\bigl(\nfrac{1}{\sqrt{\ln n}}\bigr)
    e^{-\lambda} \Bigl( \frac{eC\ln n}{e^2 C \ln n} \Bigr)^{\alpha}
    =
    O\bigl(\nfrac{1}{\sqrt{\ln n}}\bigr)
    e^{-\ln n}.
  \end{equation*}
  and thus
  \begin{equation*}
    \Prb\bigl( Z \ge e^2 C\ln n \bigr) = o(1).
  \end{equation*}
  We conclude similarly as above: with $q := \frac{1}{e^2C\ln n}$ we obtain, a.a.s., $\frc(f) \le e^3 C\ln n$.

  Finally, for~\ref{enum:chi:fracchi-ub:o(logn)}, if $\lambda = o(\ln n)$, let $\eps > 0$ be a constant, and use Lemma~\ref{lem:binomial-ub} again, with
  \begin{equation*}
    \alpha := \max\biggl( 2\lambda, \ \frac{ (1+\eps)\ln  n}{  \ln\bigl( \frac{\ln n}{e\lambda} \bigr) } \biggr).
  \end{equation*}
  We find that
  \begin{equation*}
    \Prb\bigl( Z_1 \ge \alpha \bigr)
    =
    o\bigl( e^{- \alpha \ln(\alpha/e\lambda) } \bigr),
  \end{equation*}
  and the usual calculation (Appendix~\ref{ssec:apx:chi:swift-caculation}) shows that $\alpha\ln(\alpha/e\lambda) \ge \ln n$, which implies
  \begin{equation*}
    \Prb\bigl( Z \ge \alpha \bigr) = o(1).
  \end{equation*}
  Conclude similarly as above, with $q := \frac{1}{\alpha}$, we obtain, a.a.s.,
  \begin{equation*}
    \frc(f) \le e\alpha = e\max\lt( 2\lambda, (1+\eps) \frac{\ln n}{  \ln\bigl( \frac{\ln n}{e\lambda} \bigr) } \rt).
  \end{equation*}
  One obtains the statement in the theorem by letting $\eps$ tend to~0; the $e$-factor in the denominator of the $\ln$ of the denominator in~$\alpha$ is irrelevant as $n\to \infty$.
\end{proof}

\subsubsection*{The cover number}
Inequality~($*$) in~\eqref{eq:coloring-parameters} gives us corresponding upper bounds on the cover number.
\begin{corollary}\label{cor:chi:chi-ub-from-fracchi}
  We have $\displaystyle (1-o(1))\, \lambda \le \rc(f)$, and:
  \begin{enumerate}[(a)]
  \item if $\ln n \ll \lambda = O(n/\ln n)$,  then, a.a.s., $\displaystyle \rc(f) = O(\lambda \ln n)$;
  \item if $\lambda = \Theta(\ln n)$,  then, a.a.s., $\displaystyle \rc(f) = O(\ln^2 n)$;
  \item if $1 \ll \lambda = o(\ln n)$, then, a.a.s., $\displaystyle \rc(f) = O\Bigl(
    \max\Bigl(   \lambda\ln n,  \frac{\ln^2 n}{\ln((\ln n)/\lambda)}   \Bigr) \Bigr).
    $
  \end{enumerate}
  \qed
\end{corollary}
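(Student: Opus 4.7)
The plan is to obtain the corollary by mechanically combining three ingredients already in hand: the sandwich inequality
\begin{equation*}
  \frc(f)\ \le\ \rc(f)\ \le\ \bigl(1+\ln\onerec(f)\bigr)\,\frc(f)
\end{equation*}
from $(*)$ in~\eqref{eq:coloring-parameters}, the bound on the largest 1-rectangle from Theorem~\ref{thm:alpha}\ref{thm:alpha:large-p} / Corollary~\ref{cor:alpha:large-p:1-o1}, and the three-case upper bound on $\frc(f)$ from Theorem~\ref{thm:chi:fracchi-ub}.

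First I would establish the common lower bound $\rc(f) \ge (1-o(1))\lambda$. This is immediate from $\rc(f) \ge \frc(f)$ together with the discussion leading to~\eqref{eq:chi:lb-ir}: a.a.s.\ one has $\frc(f) \ge \abs{\supp f}/\onerec(f) \ge (1-o(1))\,ep\lambda$, and in the regime $p > \nfrac12$ we have $ep > e/2 > 1$, so the factor $ep$ absorbs the $(1-o(1))$ loss and gives $\frc(f) \ge (1-o(1))\lambda$ uniformly across cases~(a), (b), (c).

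Next I would handle the upper bounds. The crucial preliminary observation is that $\onerec(f) \le n^2$ trivially (and, more accurately, $\onerec(f) = n^2/(e\lambda) + O(n)$ a.a.s.\ by Corollary~\ref{cor:alpha:large-p:1-o1}), so
\begin{equation*}
  1 + \ln\onerec(f) = O(\ln n)
\end{equation*}
a.a.s. It then suffices to multiply each of the three a.a.s.\ upper bounds on $\frc(f)$ from Theorem~\ref{thm:chi:fracchi-ub} by this $O(\ln n)$ factor: for~(a), $(1+o(1))e\lambda \cdot O(\ln n) = O(\lambda\ln n)$; for~(b), $\Theta(\ln n)\cdot O(\ln n) = O(\ln^2 n)$; and for~(c), $(1+o(1))e\max\bigl(2\lambda,\ln n/\ln((\ln n)/\lambda)\bigr)\cdot O(\ln n)$, which matches the claimed $O\bigl(\max(\lambda\ln n,\ \ln^2 n/\ln((\ln n)/\lambda))\bigr)$.

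There is honestly no substantive obstacle: the corollary is a pure assembly step, and the only thing to be careful about is that each of the inputs (the $\onerec$ estimate, the three $\frc$ estimates, and the lower-bound ingredient~\eqref{eq:chi:lb-ir}) holds a.a.s.\ simultaneously, which follows by a union bound since we apply only finitely many of them. If I wanted the statement to stay clean, I would just note that the weaker deterministic bound $\onerec(f) \le n^2$ already suffices for the $O(\ln n)$ multiplicative factor, removing any need to track which $\lambda$-range is in force when quoting Corollary~\ref{cor:alpha:large-p:1-o1}.
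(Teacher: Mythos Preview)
Your proposal is correct and follows exactly the paper's approach: the paper simply states that inequality~$(*)$ in~\eqref{eq:coloring-parameters} yields the corollary and marks it \qed, and you have spelled out precisely that computation --- multiply the three $\frc$ bounds of Theorem~\ref{thm:chi:fracchi-ub} by the factor $1+\ln\onerec(f)=O(\ln n)$, and take the lower bound from $\rc(f)\ge\frc(f)$ together with~\eqref{eq:chi:lb-ir}. Your remark that the trivial bound $\onerec(f)\le n^2$ already suffices is a nice simplification that removes any need to invoke Corollary~\ref{cor:alpha:large-p:1-o1}.
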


\subsection{Binary-Logarithm of the number of distinct rows, and the ratio $\rcOP/\frcOP$}
When we view~$f$ as a  matrix, the binary logarithm of the number of distinct rows is a lower bound on the cover number of~$f$ \cite{Kushilevitz-Nisan:Book:97}.  We have the following.

\begin{proposition}\label{prop:chi:log-2-lb}\mbox{}
  \begin{enumerate}[(a)]
  \item\label{enum:chi:log-2-lb:large-p}%
    If $\nfrac12 \ge \notp = \Omega(\nfrac1n)$, then, a.a.s., the 2-Log lower bound on $\rc(f)$ is $(1-o(1))\log{2} n$.
  \item\label{enum:chi:log-2-lb:small-p}%
    If $\notp = n^{-\gamma}$ for $1 < \gamma \le \nfrac32$, then a.a.s., the 2-Log lower bound on $\rc(f)$ is $(1-o(1))(2-\gamma)\log{2} n$.
  \end{enumerate}
\end{proposition}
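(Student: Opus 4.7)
I plan to analyze the number $D$ of distinct rows of $f$; the 2-Log lower bound on $\rc(f)$ is then $\log{2} D$.  Write $D = \sum_{S \subseteq [n]} A_S$, where $A_S$ is the indicator that some row has its $0$-positions equal to $S$, so $\Exp[A_S] = 1-(1-p_S)^n$ with $p_S := \notp^{|S|}p^{n-|S|}$.  Two simple ingredients underpin both parts.  First, the direct calculation
\[\Cov[A_S, A_{S'}] = (1-p_S-p_{S'})^n - \bigl((1-p_S)(1-p_{S'})\bigr)^n \le 0 \qquad (S\ne S')\]
shows the indicators are pairwise non-positively correlated, so $\Var[D] \le \Exp[D]$ and Chebyshev gives $D = (1\pm o(1))\Exp[D]$ a.a.s.\ whenever $\Exp[D]\to\infty$.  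Second, the elementary bound $1-(1-\alpha)^n \ge \tfrac12\min(n\alpha,1)$ (check the two cases $n\alpha \le 1$ and $n\alpha \ge 1$ separately) lets me estimate each $\Exp[A_S]$ cleanly.

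\textbf{Upper bounds.}  In~(a), $D\le n$ is trivial.  In~(b), $D\le 1+N$ where $N$ is the number of non-all-ones rows; $N \sim \Bin(n, 1-p^n)$, and the expansion $p^n = \exp(n\ln(1-n^{-\gamma})) = 1 - n^{1-\gamma}(1+o(1))$ (valid since $\gamma>1$ makes $n^{1-\gamma}\to 0$) yields $\Exp[N] = n^{2-\gamma}(1+o(1))$.  Since $\Exp[N]\to\infty$ for $\gamma\le 3/2$, standard concentration (Chebyshev applied to $N$) gives $N = (1+o(1))n^{2-\gamma}$ a.a.s., hence $\log{2} D \le (1+o(1))(2-\gamma)\log{2} n$.

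\textbf{Lower bound for (a).}  Let $T_k := n\binom{n}{k}\notp^k p^{n-k} = n\Prb[\Bin(n,\notp)=k]$ denote the expected number of rows with exactly $k$ zeros, so $\sum_k T_k = n$ and $T_k \le n$ for every $k$.  Applying the elementary inequality to each $A_S$,
\[\Exp[D] \ge \tfrac12\sum_{k=0}^n \min\bigl(T_k, \binom{n}{k}\bigr).\]
Since $\binom{n}{k} \ge n$ for $1 \le k \le n-1$ and $T_n \le 1 = \binom{n}{n}$ (because $\notp\le 1/2$), we have $\min(T_k,\binom{n}{k}) = T_k$ for $k\ge 1$, so summing gives $\Exp[D] \ge \tfrac12(n - T_0) = \tfrac12 n(1-p^n)$.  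With $\mu := n\notp = \Omega(1)$, $p^n \le e^{-\mu}$ is bounded away from $1$, so $\Exp[D]=\Omega(n)$; the Chebyshev argument then gives $D=\Omega(n)$ a.a.s., yielding $\log{2} D \ge \log{2} n - O(1) = (1-o(1))\log{2} n$.

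\textbf{Lower bound for (b).}  Restrict to single-zero types: let $U := \sum_{j=1}^n A_{\{j\}}$, so $D\ge U$ and $\Var[U]\le\Exp[U]$.  Using $\notp p^{n-1} = n^{-\gamma}(1+o(1))$ and expanding as above,
\[\Exp[U] = n\bigl(1-(1-\notp p^{n-1})^n\bigr) = n^{2-\gamma}(1+o(1));\]
the variance-to-squared-mean ratio is $O(n^{\gamma-2}) = o(1)$ for $\gamma<2$, so Chebyshev yields $U=(1-o(1))n^{2-\gamma}$ a.a.s., matching the upper bound and completing~(b).  The main technical hurdle is the bookkeeping in~(a): securing $\Omega(n)$ uniformly across the full range $\Omega(1/n)\le \notp\le 1/2$ --- including the sub-range where $T_0$ is of order $n$ --- relies on the observation that $\sum_{k\ge 1} T_k = n - T_0$ still dominates the lower bound on $\Exp[D]$.
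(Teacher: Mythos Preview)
Your argument is correct and takes a genuinely different route from the paper's.  The paper reduces the proposition to Lemma~\ref{lem:chi:numo_distinct_rows}, which it proves \emph{sequentially}: writing $X_m$ for the number of distinct non-zero rows among the first~$m$, it derives and solves a linear recursion for $\Exp(X_{m+1}\mid X_m)$, and then obtains concentration via the bounded-differences (Azuma) martingale inequality.  You instead work with the \emph{type decomposition} $D=\sum_S A_S$ and exploit the identity $\Cov[A_S,A_{S'}]=(1-p_S-p_{S'})^n-((1-p_S)(1-p_{S'}))^n\le 0$, so that $\Var[D]\le\Exp[D]$ and Chebyshev suffices for concentration; the expectation is then handled by the clean inequality $1-(1-\alpha)^n\ge\tfrac12\min(n\alpha,1)$ and the observation that $\sum_{k\ge1}\min(T_k,\binom{n}{k})=\sum_{k\ge1}T_k=n-T_0$.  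Your approach is more elementary---no recursion, no martingale---and yields the same $\Omega(n)$ (resp.\ $\Omega(\lambda n)$) conclusions with less bookkeeping; the paper's sequential argument is perhaps more easily adaptable if one wanted sharper additive error terms, but for the $(1-o(1))\log_2 n$ statement your route is at least as clean.
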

\begin{proof}
  Directly from the following Lemma~\ref{lem:chi:numo_distinct_rows} about the number of distinct rows, with $\lambda = n^{1-\gamma}$.
\end{proof}
\begin{lemma}\label{lem:chi:numo_distinct_rows}\mbox{}
  \begin{enumerate}[(a)]
  \item\label{enum:apx-chi:numo_distinct_rows:large_p}%
    If $\nfrac12 \ge \notp = \Omega(\nfrac1n)$, then, a.a.s., $f$ has $\Theta(n)$ distinct non-zero rows.
  \item\label{enum:apx-chi:numo_distinct_rows:small_p}%
    With $\notp = \lambda/n$, if $n^{-\nfrac12} \le \lambda \le \nfrac12$, then, a.a.s., $f$ has $\Omega(\lambda n)$ distinct non-zero rows.
  \end{enumerate}
  (The constants in the big-Omegas are absolute.)
\end{lemma}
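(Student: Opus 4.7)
The plan is to bound from below the expected number of distinct rows and then invoke a concentration inequality. Write $D$ for the number of distinct rows of $f$. Since $\notp \le \nfrac12$, the expected number of all-zero rows is $n\notp^n \le n/2^n = o(1)$, so a.a.s.\ every row is non-zero and the number of distinct non-zero rows equals $D$. Introduce $\mu_k := n\binom{n}{k}p^{n-k}\notp^k$, the expected number of rows of weight $n-k$, and record two crude but crucial bounds that will be used in both parts: $\mu_k \le n$ (it is $n$ times a probability) and $\binom{n}{k}\ge n$ for $k\in[1,n-1]$. In both parts the upper bound is trivial ($D \le n$), so only lower bounds are needed.

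For part (a), I would combine the elementary inequality $1-(1-x)^n \ge \tfrac12\min(nx,1)$ with linearity of expectation to write
\begin{equation*}
  \Exp D \,\ge\, \sum_{r} \tfrac12 \min(n\Prb(r),1) \,\ge\, \tfrac12 \sum_{k=1}^{n-1} \min\!\bigl(\mu_k,\tbinom{n}{k}\bigr) \,=\, \tfrac12 \sum_{k=1}^{n-1} \mu_k,
\end{equation*}
the last equality following from $\mu_k \le n \le \binom{n}{k}$. The sum evaluates to $n(1-p^n-\notp^n)$, which, for the constant $c>0$ supplied by $\notp \ge c/n$, is at least $n(1-e^{-c}-o(1)) = \Omega(n)$. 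Since changing a single row shifts $D$ by at most $1$, McDiarmid's inequality then yields $D \ge \Exp D/2 = \Omega(n)$ a.a.s.

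For part (b), I plan to work with the rows of weight $n-1$. Let $N_1$ count them; Bernoulli's inequality gives $p^{n-1}\ge 1-\lambda\ge \nfrac12$, hence $\Exp N_1 = n\lambda p^{n-1} \ge n\lambda/2$, which tends to infinity since $\lambda \ge n^{-1/2}$. Independence of rows gives $\Var N_1 \le \Exp N_1$, so Chebyshev delivers $N_1 = \Omega(\lambda n)$ a.a.s. Conditional on which rows have weight $n-1$, their zero-positions are i.i.d.\ uniform on $[n]$; therefore
\begin{equation*}
  D \,\ge\, D_1 \,:=\, \bigl|\{\text{distinct zero-positions among the $N_1$ one-zero rows}\}\bigr| \,=\, N_1 - C_1,
\end{equation*}
where $C_1$ counts pairs of one-zero rows sharing a zero-position. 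Since $\Exp(C_1\mid N_1)\le N_1^2/(2n)$, we get $\Exp C_1 = O(n\lambda^2)$. When $\lambda=o(1)$, Markov's inequality gives $\Prb(C_1\ge n\lambda/4) \le O(\lambda) = o(1)$; when $\lambda=\Omega(1)$, the bound $\Exp D_1 \ge \Exp N_1/2 - O(n\lambda^2) = \Omega(n)$ lets McDiarmid applied to $D_1$ (again with bounded differences $1$) finish the concentration. Either way $D_1 = \Omega(\lambda n)$ a.a.s.

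The main obstacle I expect is the concentration at the boundary $\lambda = n^{-1/2}$ in part (b): applying McDiarmid directly to $D$ there only yields a constant-probability deviation bound, since $n\lambda^2=1$. The split $D_1 = N_1 - C_1$ together with Chebyshev on $N_1$ and Markov on the much-smaller-mean $C_1$ is precisely what bypasses this barrier.
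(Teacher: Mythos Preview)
Your proposal is correct and takes a genuinely different route from the paper. For part~(b) the paper processes the rows sequentially: it upper-bounds the conditional probability that row $m{+}1$ duplicates one of the first~$m$ (or is all-zero), derives a linear recursion for $\Exp(X_m\mid X_1{=}1)$, solves it to obtain $\Exp X_n \ge (1-o(1))\lambda n/2$, and then invokes McDiarmid with bounded differences~$1$ for the a.a.s.\ statement. You instead isolate the one-zero rows and reduce to balls-in-bins; the decomposition $D_1 \ge N_1 - C_1$ (this should be an inequality, not the equality you wrote---three one-zero rows sharing a position contribute $2$ to $N_1-D_1$ but $3$ to $C_1$; the inequality is the direction you need) together with Chebyshev on $N_1$ and Markov on $C_1$ is more elementary than the recursion. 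As you correctly identify, your variance split also handles the boundary $\lambda=n^{-1/2}$ cleanly, whereas the paper's McDiarmid step yields only $e^{-\Omega(\lambda^2 n)}$, which is not $o(1)$ exactly at that boundary; so your argument actually patches a small gap there. For part~(a) the paper merely calls it ``quite easy'' and omits details; your weight-stratified lower bound $\Exp D \ge \tfrac12\sum_{k=1}^{n-1}\mu_k = \tfrac{n}{2}(1-p^n-\notp^n)$ is a clean way to supply them. One minor cosmetic point: in the $\lambda=\Omega(1)$ branch of~(b) you wrote $\Exp D_1 \ge \Exp N_1/2 - O(n\lambda^2)$; the clean bound is $\Exp D_1 \ge \Exp N_1 - \Exp C_1 \ge n\lambda(1-\lambda) - n\lambda^2/2 = n\lambda(1-\tfrac32\lambda) \ge n\lambda/4$, which is already $\Omega(n\lambda)$ for all $\lambda\le\tfrac12$ without a case split.
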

For the sake of completeness, we sketch the proof in Appendix~\ref{ssec:apx:chi:distinct-rows}.

\mypar
Erd\H{o}s-Renyi random graphs have the property that the chromatic number is within a small constant factor from the lower bound one obtains from the independence ratio.
For the cover number of Boolean functions, this is not the case.  Indeed, Theorem~\ref{thm:chi:fracchi-ub}\ref{enum:chi:fracchi-ub:o(logn)}, together with Proposition~\ref{prop:chi:log-2-lb}, shows that, a.a.s.,
\begin{equation*}
  \frac{  \rc(f)  }{ \frc(f)  }
  \ge
  (1+o(1))
  \frac{ \log{2} n }{ \  \frac{ \ln n }{ \ln\bigl(\frac{\ln n}{\lambda}\bigr) } \  }
  =
  \Omega\!\lt( \ln\Bigl( \frac{\ln n}{\lambda} \Bigr) \rt),
\end{equation*}
which is $\Omega(\ln\ln n)$ if $\lambda = \ln^{o(1)}n$.

This gap is more pronounced in the (not quite as interesting) situation when $\lambda =o(1)$.  Consider, e.g., $\lambda = n^{-\eps}$, for some $\eps = \eps(n) = o(1/\ln\ln n)$, say.  Similarly to the proofs of Theorem~\ref{thm:chi:fracchi-ub}, we obtain that $\frc(f) \le e\max(10,2/\eps)$.  (The $\max$-term comes from the somewhat arbitrary upper bound $Z \le \max(10,2/\eps)$.)   For the Log-2 lower bound on the cover number, we have $(1-\eps)\log{2}n$, by Proposition~\ref{prop:chi:log-2-lb}, and thus
\begin{equation*}
  \frac{  \rc(f)  }{ \frc(f)  }
  =
  \Omega( \eps\ln n ).
\end{equation*}


\section{Acknowledgments}
The authors would like to thank the anonymous referees for their valuable comments.

Dirk Oliver Theis is supported by Estonian Research Council, ETAG (\textit{Eesti Teadusagentuur}), through PUT Exploratory Grant \#620.  Mozhgan Pourmoradnasseri is recipient of the Estonian IT Academy Scholarship.  This research is supported by the European Regional Fund through the Estonian Center of Excellence in Computer Science, EXCS.

\appendix
\section{Proof of Theorem~\ref{thm:alpha}}\label{apx:alpha}
We will assume, for simplicity, that $X=Y=[n]$.

\subsection{Small~$p$: Proof of Theorem~\ref{thm:alpha}~\ref{thm:alpha:small-p}}\label{ssec:alpha:small-p}
We say that a rectangle is \textit{bulky}, if it extends over at least~2 rows and also over at least~2 columns.
The proof of Theorem~\ref{thm:alpha} proceeds by considering three types of rectangles:
\begin{enumerate}[1.]
\item those consisting of exactly one row or column (they give the bound in the theorem);
\item square bulky rectangles;
\item bulky rectangles which are not square.
\end{enumerate}

Let us start with the easiest type (1).  The size of such a rectangle is the number of 1s in the chosen row.
\begin{lemma}
  For all $p,n$, a.a.s., there exists a row in $\Matrix{n}{p}$ containing at least~$pn$ 1s.  If $p\gg (\ln n)/n$, for every constant $\eps \in \lt]0,1\rt]$, a.a.s., no row or column has more than $(1+\eps)pn$ 1s.
\end{lemma}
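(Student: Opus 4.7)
The lemma splits into two claims---existence of a row with $\ge pn$ ones (for any $p, n$), and a uniform $(1+\eps)pn$ upper bound on all rows and columns (when $p \gg (\ln n)/n$). I would treat them separately, using independence across rows for the first and a union bound for the second.

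For the existence claim, let $Z_i \sim \Bin(n,p)$ be the number of ones in row~$i$; the $Z_1,\ldots,Z_n$ are independent, so
\[
  \Prb\bigl(\max_i Z_i < pn\bigr) \;=\; \Prb(Z_1 < pn)^n.
\]
It therefore suffices to show that $\Prb(Z_1 \ge pn) \ge c$ for some absolute constant $c > 0$. This is a standard property of the binomial distribution---essentially the fact that its median and its mean differ by at most~$1$, combined with a local-CLT (or, in the bounded-$np$ regime, Poisson) estimate showing that the point mass at the median is bounded away from~$1$. The display above then becomes $(1-c)^n = o(1)$.

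For the upper bound, I would apply the standard Chernoff upper-tail inequality for the binomial: for $\eps \in \lt]0,1\rt]$,
\[
  \Prb\bigl(Z_1 \ge (1+\eps)pn\bigr) \;\le\; \exp\bigl(-\eps^2 pn / 3\bigr).
\]
The hypothesis $p \gg (\ln n)/n$ makes the exponent $\omega(\ln n)$, so the right-hand side is $n^{-\omega(1)}$. A union bound over the $n$ rows, together with the symmetric argument for the $n$ columns, yields the desired a.a.s.\ upper bound.

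The only mildly delicate step is pinning down the constant $c$ in the first part, which requires one to cite (or verify directly) the standard median-vs.-mean inequality for $\Bin(n,p)$; the Chernoff-plus-union-bound argument for the upper statement is entirely routine.
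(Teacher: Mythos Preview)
Your proposal is correct and essentially identical to the paper's proof: the paper also uses the median property of the binomial to get $\Prb(Z_1 < pn)\le \tfrac12$ and hence probability $\le 2^{-n}$ for the first claim, and exactly the same Chernoff bound $e^{-\eps^2 pn/3}$ with a union bound over the $2n$ rows and columns for the second. Your hedging on the constant~$c$ (rather than asserting $\tfrac12$ outright) is if anything a bit more careful than the paper's one-line appeal to the median.
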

\begin{proof}
  For the first statement, note that the probability that number of 1s in a fixed row is less than $pn$ is at most $\nfrac12$ (median of a binomial distribution).  Since the rows are independent, the probability that all rows have fewer than $pn$ 1s is at most $2^{-n}$.

  For the second statement, we use an easy Chernoff-type bound (Theorem 4.4(2) in \cite{Mitzenmacher-Upfal:Book}).  Denote by~$X$ the number of 1s in a fixed row of $\Matrix{n}{p}$.  Then
  \begin{equation*}
    \Prb( X \ge (1+\eps)pn ) \le e^{-\eps^2 pn /3} \le e^{-2 \ln n} = n^{-2},
  \end{equation*}
  where the last inequality holds for large enough~$n$, because $pn \gg \ln n$ implies $pn > 6\eps^{-2}\ln n$ for $n$ sufficiently large.  Hence, the probability that a row (or a column) exists which has at least $(1+\eps)pn$ 1s is $o(1)$.
\end{proof}

We now deal with rectangles of type~(2).

\begin{lemma}\label{lem:alpha:small-p:no-square}
  If $p \ge \nfrac5n$, then, a.a.s., there is no square 1-rectangle of size~$\sqrt{pn}\times\sqrt{pn}$.
\end{lemma}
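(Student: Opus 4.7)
My plan is to apply the first moment method to the number of $k \times k$ all-ones submatrices, where $k := \lceil \sqrt{pn}\,\rceil$. First, observe that it suffices to bound this one value of $k$: every $k' \times k'$ 1-rectangle with $k' > k$ contains a $k \times k$ sub-rectangle, so ruling out the smaller size simultaneously rules out all larger square 1-rectangles of size at least $\sqrt{pn}\times\sqrt{pn}$. The hypothesis $p \ge 5/n$ guarantees $pn \ge 5$, hence $k \ge 3$, which is what drives the decay of the expectation.

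The expected number of $k\times k$ 1-rectangles is
\begin{equation*}
  E \;=\; \binom{n}{k}^2 p^{k^2}.
\end{equation*}
I would apply the standard estimate $\binom{n}{k} \le (en/k)^k$, and then use the defining inequality $k \ge \sqrt{pn}$, equivalently $p \le k^2/n$, to substitute in the probability factor:
\begin{equation*}
  E \;\le\; \Bigl(\frac{en}{k}\Bigr)^{2k}\!\Bigl(\frac{k^2}{n}\Bigr)^{k^2}
  \;=\; e^{2k}\,k^{2k(k-1)}\,n^{-k(k-2)}.
\end{equation*}
Taking logarithms, I need $k(k-2)\ln n \;-\; 2k(k-1)\ln k \;-\; 2k \;\to\; +\infty$; by Markov's inequality this is sufficient.

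The main obstacle is that this inequality must be verified uniformly as $p$ varies over $[5/n,\, 1/e]$, i.e., as $k$ varies between $3$ and roughly $\sqrt{n/e}$. I would therefore split into regimes. At the lower boundary $p = 5/n$, $k=3$ and $E$ is literally $O(n^{-3})$; a little more generally, for $k$ bounded by a slowly growing function of $n$ the factor $n^{-k(k-2)}$ (which is at most $n^{-3}$ for $k\ge 3$) swamps the $k^{2k(k-1)}$ overhead, so $E=o(1)$. For larger $k$, the hypothesis $p \le 1/e$ forces $k \le \sqrt{n/e}$, so $2\ln k \le \ln n - 1$, and the dominant terms in the log-estimate become $-k^2 + O(k\ln n)$, which tends to $-\infty$ as soon as $k \gg \ln n$. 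These two regimes cover the entire range.

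The constant $5$ in the hypothesis is essentially the smallest integer making the heuristic inequality $\sqrt{pn} < pn/2$ hold for $pn\ge 5$: up to lower-order corrections, the combinatorial overhead from choosing rows/columns is $\sqrt{pn}\cdot O(\ln(n/p))$, while the probability cost is $(pn/2)\ln(1/p)$, so their ratio is forced to be bounded away from $1$ precisely when $pn > 4$. I expect no further complications beyond keeping careful track of the $\ln k$ terms, using Stirling $(k!)^2 \ge 2\pi k (k/e)^{2k}$ to squeeze an extra factor that handles the middle regime cleanly.
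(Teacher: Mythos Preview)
Your proposal is correct and follows essentially the same approach as the paper: first moment on the number of $k\times k$ all-ones submatrices with $k\approx\sqrt{pn}$, the bound $\binom{n}{k}\le(en/k)^k$, and a two-regime case split on the size of $k$ (the paper splits at $\sqrt{pn}=2\ln n$, you at $k\gg\ln n$). Your early substitution $p\le k^2/n$ is algebraically equivalent to the paper's parameterisation via $\kappa=pn$; after that substitution your log-bound $2k+2k(k-1)\ln k-k(k-2)\ln n$ is exactly $\sqrt\kappa\bigl(\ln n+2-(\sqrt\kappa-1)\ln(1/p)\bigr)$ rewritten with $\ln(1/p)=\ln n-2\ln k$.
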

\begin{proof}
  We abbreviate $\kappa := pn$.
  By the union bound, for the probability~$q=q(n)$ that there exists a 1-rectangle of size~$\sqrt{\kappa}\times\sqrt{\kappa}$, we have
  \begin{equation*}
    q \le
    \binom{n}{\sqrt{\kappa}}^2 p^{\kappa}
    \le
    \biggl( \frac{e^2 n}{p} \biggr)^{\sqrt{\kappa}} p^{\kappa}
  \end{equation*}
  Applying $\ln$, we find
  \begin{equation}\label{eq:alpha:small-p:no-square:ln}
    \ln q
    \le
    \sqrt{\kappa} \ln n + 2\sqrt{\kappa} + \sqrt{\kappa} \ln(1/p) - \kappa \ln(1/p)
    =
    \sqrt{\kappa}\Bigl( \ln n + 2 - \bigl(\sqrt{\kappa}-1\bigr)\ln(1/p) \Bigr).
  \end{equation}
  Now we distinguish cases.
  If $(2\ln n)^2/n \le p \le \nfrac1e$, then $\sqrt{\kappa} \ge 2 \ln n$, and hence we can bound the expression in the parentheses in~\eqref{eq:alpha:small-p:no-square:ln} as follows:
  \begin{equation*}
     \ln n + 2 - \bigl(\sqrt{\kappa}-1\bigr)\ln(1/p) \le \ln n + 2 - 2\ln n  +1 \le  - \ln n,
  \end{equation*}
  for all large enough~$n$.  Hence, $q\to 0$ in this region.
  If, on the other hand, $5/n \le p \le (2\ln n)^2/n$, then
  \begin{multline*}
    \ln q
    \le
    \sqrt{5}\Bigl( \ln n + 2 - \bigl(\sqrt{5}-1\bigr)\bigl(\ln n - 2\ln(2\ln n)\bigr) \Bigr)
    \\
    \le
    - \sqrt{5}\bigl( \sqrt{5}-2 \bigr)\ln n + O(\ln\ln n)
    = -\Omega(\ln n).
  \end{multline*}
  Hence, $q\to 0$ in this region, too, which completes the proof of the lemma.
\end{proof}

Finally, we come to rectangles of type~(3).  Consider the probability, $\varrho$, that $\Matrix{n}{p}$ contains a bulky 1-rectangle of size~$s$.
By Lemma~\ref{lem:alpha:small-p:no-square}, if such a 1-rectangle has dimensions $a\times b$, we must have $a < \sqrt{pn}$ or $b < \sqrt{pn}$, or else $\varrho=o(1)$.  We have $\varrho \le 2\varrho'$, where $\varrho'$ is the probability that $\Matrix{n}{p}$ contains a 1-rectangle of size~$s$ consisting of at least as many columns than rows.  For $\varrho'$, we need to consider only 1-rectangles with $a < \sqrt{pn}$.  Moreover, increasing~$b$ if necessary, w.l.o.g., we may restrict to rectangles generated by a row-set of size $a$, with $2\le a \le n$ (the LB~2 comes from the condition that the rectangle be bulky).

\begin{lemma}\label{lem:alpha:small-p:bulky-smallp}
  With $\kappa = pn$, if $5 \le \kappa = O(\polylog n)$, then, a.a.s., there is no bulky rectangle of size at least~$\kappa$.
\end{lemma}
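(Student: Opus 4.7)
The plan is to apply the first-moment (union bound) method to the family of bulky but non-square 1-rectangles of size at least $\kappa = pn$. A symmetry argument (a factor of~$2$ in the union bound) restricts attention to rectangles with row-set of size $a$ and column-set of size $b$ satisfying $a \le b$. If both $a$ and $b$ were at least $\lceil\sqrt{\kappa}\rceil$, the rectangle would contain the square 1-rectangle ruled out a.a.s.\ by Lemma~\ref{lem:alpha:small-p:no-square}, so we may also assume $a \le \lfloor\sqrt{\kappa}\rfloor$, which forces $b \ge \lceil\kappa/a\rceil$. The task therefore reduces to showing
\begin{equation*}
 2\sum_{a=2}^{\lfloor\sqrt{\kappa}\rfloor}\ \sum_{b \ge \lceil\kappa/a\rceil} \binom{n}{a}\binom{n}{b}\, p^{ab} = o(1).
\end{equation*}

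First I would collapse the inner sum. Writing $T(a,b) := \binom{n}{a}\binom{n}{b}p^{ab}$, the ratio $T(a,b+1)/T(a,b) = \frac{n-b}{b+1}p^a$ is bounded by $np^a/b \le a p^{a-1}$ whenever $b \ge \kappa/a$ (using $\kappa = pn$). For $a \ge 2$ and $p \le 1/e$, the quantity $a p^{a-1}$ is maximized at $a = 2$ with value $2p \le 2/e < 1$, so the inner series is a geometric tail and equals $O\bigl(T(a,\lceil\kappa/a\rceil)\bigr)$ uniformly in $a$.

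Next I would estimate the head term $T(a,\lceil\kappa/a\rceil)$. Inserting $\binom{n}{k} \le (en/k)^k$ and $p = \kappa/n$ and simplifying, the logarithm reduces to
\begin{equation*}
 \ln T(a,\lceil\kappa/a\rceil) \le (a - \kappa + \kappa/a)\,\ln(n/\kappa) + O(\kappa\ln\ln n),
\end{equation*}
where the error term absorbs the $\ln a$, $\ln\kappa$, and $\ln(2e)$ contributions because $a \le \sqrt{\kappa}$ and $\kappa = O(\polylog n)$. On $[2,\sqrt{\kappa}]$ the function $a \mapsto a - \kappa + \kappa/a$ is decreasing, hence maximized at $a=2$ with value $2 - \kappa/2$; combined with $\ln(n/\kappa) = (1-o(1))\ln n$, for $\kappa \ge 5$ this yields $T(a,\lceil\kappa/a\rceil) \le n^{-1/2 + o(1)}$ uniformly in $a$. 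Multiplying by the at-most $\sqrt{\kappa} = O(\polylog n)$ terms of the outer sum still leaves $o(1)$.

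The decisive point, and the source of the hypothesis $\kappa \ge 5$, is the case $a = 2$: the exponent of $n$ in $T(2,\lceil\kappa/2\rceil)$ is exactly $2 - \kappa/2$ up to lower-order contributions, turning negative precisely at $\kappa > 4$. The remaining ingredients are comfortable: the geometric tail in $b$ handles the sum over $b$ automatically, larger $a$ only make the factor $1 - 1/a$ multiplying $\ln(n/\kappa)$ more favourable, and the gap between $\ln n$ and $\ln\ln n$ easily absorbs the polylogarithmic growth of $\kappa$.
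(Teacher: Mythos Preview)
Your argument is correct and follows the paper's proof closely: both reduce by symmetry and Lemma~\ref{lem:alpha:small-p:no-square} to row-sets of size $a\in\{2,\dots,\lfloor\sqrt\kappa\rfloor\}$, bound $\binom{n}{a}\binom{n}{\kappa/a}p^{\kappa}$ by $n^{a+\kappa/a-\kappa+o(\kappa)}$ using $\kappa=O(\polylog n)$, and observe that the exponent is already sufficiently negative at the worst case $a=2$ once $\kappa\ge5$. The only cosmetic difference is that you sum over all $b\ge\kappa/a$ and collapse the tail geometrically, whereas the paper avoids the inner sum entirely by working with the rectangle \emph{generated} by the chosen $a$ rows (so only the single term $b=\lceil\kappa/a\rceil$ appears); both lead to the same head term and the same conclusion.
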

\begin{proof}
  By the remarks above, we have to bound the probability that there exists a row-set of size $a\in\{2,\dots,\sqrt\kappa\}$ which generates a 1-rectangle of size at least $\kappa/a$.

  Firstly, for a given set $K$ of~$a$ rows, we bound the probability that the rectangle it generates has size at least $pn$.
  Denote by~$S$ the number of columns in the rectangle generated by~$K$.  This is a $\Bin(n,p^a)$ r.v.
  and
  we find that
  \begin{equation*}
    \Prb( a\cdot b \ge \kappa)
    =
    \Prb( S \ge \nfrac{\kappa}{a} )
    \le
    \binom{n}{\kappa/a} p^\kappa
    =
    \binom{n}{\kappa/a} \lt( \frac{\kappa}{n} \rt)^\kappa.
  \end{equation*}
  Secondly, we sum over all sets~$K$ of cardinality~$a$, and compute
  \begin{multline*}
    \binom{n}{a} \binom{n}{\kappa/a} p^\kappa (1-p^a)^{n-\kappa/a}
    \le
    n^{a+\kappa/a - \kappa +\kappa\log{n}{\kappa}}
    =
    n^{-\bigl( \kappa(1-\nfrac1a) - a - o(\kappa) \bigr)},
  \end{multline*}
  where $\kappa \log{n}{\kappa} = o(\kappa)$ follows from $\kappa = O(\polylog n)$.

  Now, because $a<\sqrt\kappa$, we have that the exponent on~$\nfrac1n$ is $\kappa(1-\nfrac1a) - o(\kappa) \ge \kappa / 3$, as $a \ge 2$.  Finally, summing over all~$a$, we obtain, as an upper bound for the probability that one of these rectangles has size~$\kappa$ or larger, the expression $n^{-(\kappa/3 -1)}$ which is $o(1)$, as $\kappa \ge 5$.
\end{proof}

For the remaining case, we will need the following numerical fact, whose proof we leave to the reader.
\begin{lemma}\label{lem:alpha:small-p:large-p:chernoff-less-1}
  There exists an $\eps > 0$ such that, for all $p \in \lt]\nfrac18, \nfrac1e\rt]$ and $a\in\{2,3\}$,
  \begin{equation*}
    \biggl( a p^{a-2} \biggr)^{p/a} \biggl( \frac{1-p^a}{1-p/a} \biggr)^{1-p/a} \le 1 - \eps.
  \vspace*{-5ex}%
  \end{equation*}%
  \qed
\end{lemma}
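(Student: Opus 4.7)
The statement is a uniform bound for a function that, for each fixed $a\in\{2,3\}$, is continuous and strictly positive in $p$ on the closed interval $[1/8,1/e]$. The natural strategy is the classic two-step: first establish the strict pointwise inequality, then invoke compactness to upgrade it to a uniform $\eps > 0$. Continuity on $[1/8,1/e]$ is immediate since every factor in the product is smooth and bounded away from~$0$ and~$\infty$ there, and the endpoint $p=1/8$ is included without incident.

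For the pointwise bound, the natural reading is that the left-hand side is (after collecting exponents) precisely the per-coordinate Chernoff factor that arises in the standard upper bound on $\Pr[S \ge pn/a]$ for $S\sim\Bin(n,p^a)$, which is exactly the probability estimate used in the proof of Lemma~\ref{lem:alpha:small-p:bulky-smallp}. Setting $q := p^a$ and $q' := p/a$, the key inequality reduces to
\begin{equation*}
  \Bigl(\tfrac{q}{q'}\Bigr)^{q'}\Bigl(\tfrac{1-q}{1-q'}\Bigr)^{1-q'} = e^{-D(q'\,\|\,q)} < 1,
\end{equation*}
where $D$ denotes the binary Kullback--Leibler divergence. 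This is strict as soon as $q'\ne q$, i.e.\ $p\ne a^{-1/(a-1)}$. For $a=2$ the exceptional parameter is $p=1/2$; for $a=3$ it is $p=1/\sqrt{3}\approx 0.577$. Both lie strictly above $1/e$, so throughout $[1/8,1/e]$ one has $p/a > p^a$ with positive gap (e.g.\ for $a=2$, $p/2 > p^2 \Leftrightarrow p < 1/2$), forcing $D(p/a\,\|\,p^a) > 0$ pointwise.

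Compactness then delivers the uniform constant $\eps > 0$: the continuous positive function $D(p/a\,\|\,p^a)$ on $[1/8,1/e]$ attains a positive minimum $c>0$, and one may take $\eps := 1 - e^{-c}$. The main (mild) obstacle is just the bookkeeping: untangling the exponent in $(ap^{a-2})^{p/a}$ and matching it against the standard Chernoff/KL shape. If one prefers an entirely elementary route avoiding the KL-divergence identification, the same conclusion follows by verifying that $\ln F_a(p)$ (the logarithm of the left-hand side) has at most one interior critical point in $[1/8,1/e]$, so that monotonicity reduces everything to evaluating $\ln F_a$ at the two endpoints $p = 1/8$ and $p = 1/e$; these are two finite numerical checks for each of $a=2,3$, which is the ``left to the reader'' nature of the claim.
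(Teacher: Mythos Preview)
The paper gives no proof; it leaves the lemma to the reader, so there is nothing to compare against except the intended meaning.

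Your Kullback--Leibler plus compactness argument is the right idea and is exactly how one should read the statement: the expression is (meant to be) the per-coordinate factor in the Chernoff bound $\Prb(X_a\ge \kappa/a)\le e^{-nD(p/a\,\|\,p^a)}$ for $X_a\sim\Bin(n,p^a)$, and $D(p/a\,\|\,p^a)>0$ uniformly on $[\nfrac18,\nfrac1e]$ because the equality case $p=a^{-1/(a-1)}$ lies strictly above $\nfrac1e$ for both $a=2,3$.

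There is, however, a bookkeeping mismatch that you wave past. With $q=p^a$ and $q'=p/a$ one gets $(q/q')^{q'}=(ap^{a-1})^{p/a}$, \emph{not} $(ap^{a-2})^{p/a}$ as printed. This is not a matter of ``untangling exponents'': the two differ by the factor $p^{p/a}<1$, and the lemma \emph{as literally stated} is false. For $a=2$ the first factor is $2^{p/2}>1$ and the second has base $(1-p^2)/(1-p/2)>1$ for all $p<\nfrac12$, so the product exceeds~$1$ on the whole interval (about $1.19$ at $p=\nfrac1e$); the same failure occurs for $a=3$ (about $1.09$ at $p=\nfrac1e$). The exponent $a-2$ is evidently a typo for $a-1$ --- the Chernoff bound cited just above the lemma (Eqn.~(2.4) in~\cite{Janson-Luczak-Rucinski:Book}) yields precisely the $a-1$ form. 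With that correction your proof is complete and clean; you should flag the typo rather than assert that the algebra matches.
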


Now we deal with bulky rectangles.
\begin{lemma}\label{lem:alpha:small-p:bulky-largep}
  With  $\kappa := pn$, if $\ln^4 n \le \kappa \le \nfrac{n}{e}$, then, a.a.s., there is no bulky rectangle of size at least~$\kappa$.
\end{lemma}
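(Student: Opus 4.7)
The plan is to extend the union-bound argument of Lemma~\ref{lem:alpha:small-p:bulky-smallp} to this wider range of~$\kappa$. By Lemma~\ref{lem:alpha:small-p:no-square}, any putative bulky 1-rectangle of size $\ge\kappa$ has short side $a<\sqrt\kappa$, and by exchanging the roles of rows and columns (at the cost of a factor~$2$) I may assume $a$ is the number of rows, with $a\in\{2,\dots,\lfloor\sqrt\kappa\rfloor\}$. For each such~$a$ I bound
\[
P_a := \binom{n}{a}\,\Prb\bigl(\Bin(n,p^a)\ge \kappa/a\bigr)
\]
using a Chernoff-type estimate adapted to the regime, and sum over~$a$.

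The principal obstacle is $a\in\{2,3\}$ with $p$ close to~$\nfrac1e$: the naive estimate $\binom{n}{\kappa/a}p^\kappa\le (eap^{a-1})^{\kappa/a}$ that sufficed in Lemma~\ref{lem:alpha:small-p:bulky-smallp} breaks down here because its base $eap^{a-1}$ exceeds~$1$. To overcome this, I retain the factor $(1-p^a)^{n-\kappa/a}$ and apply the sharp Chernoff bound
\[
\Prb\bigl(\Bin(n,p^a)\ge \tfrac{pn}{a}\bigr)\le \Bigl[(ap^{a-1})^{p/a}\bigl(\tfrac{1-p^a}{1-p/a}\bigr)^{1-p/a}\Bigr]^n.
\]
Factoring a $p^{p/a}\le 1$ out of the bracket brings it to the shape of Lemma~\ref{lem:alpha:small-p:large-p:chernoff-less-1}, which yields a uniform bound $\le 1-\eps$ over $\{2,3\}\times(\nfrac18,\nfrac1e]$; hence $P_a\le n^a(1-\eps)^n=o(1)$ in this regime.

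For the remaining cases the crude estimate $(eap^{a-1})^{\kappa/a}$ is already useful. When $a\in\{2,3\}$ and $p\le\nfrac18$, the base is at most $\nfrac{e}{4}$, giving $P_a\le n^a c^{\kappa/a}=o(1)$ by the hypothesis $\kappa\ge\ln^4 n$. When $a\ge 4$ and $p\le\nfrac1e$, one has $eap^{a-1}\le ae^{2-a}\le 4e^{-2}$; taking logarithms,
\[
\ln P_a\le a\ln(en/a)+(\kappa/a)(2+\ln a-a).
\]
The positive part is $O(\sqrt\kappa\,\ln n)$ since $a\le\sqrt\kappa$, whereas the negative part $(\kappa/a)(a-\ln a-2)$ is $\Omega(\kappa)$ uniformly for $a\ge 4$, so $\kappa\ge\ln^4 n$ makes the negative part dominate, giving $P_a\le e^{-\Omega(\kappa)}$ uniformly. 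Summing over the $O(\sqrt\kappa)$ values of~$a$ yields the lemma.
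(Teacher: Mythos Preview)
Your proof is correct and follows essentially the same approach as the paper: the same reduction via Lemma~\ref{lem:alpha:small-p:no-square} to $a<\sqrt\kappa$, the same case split (small~$a$ vs.\ $a\ge4$, and for $a\in\{2,3\}$ the sub-split $p\le\nfrac18$ vs.\ $\nfrac18<p\le\nfrac1e$), and the same use of the sharp Chernoff bound together with Lemma~\ref{lem:alpha:small-p:large-p:chernoff-less-1} for the delicate case $a\in\{2,3\}$, $p$ near~$\nfrac1e$. The only (minor) difference is in bookkeeping: for $a\ge4$ the paper sums the tail via a binomial-theorem trick, whereas you bound each $P_a$ uniformly by $e^{-\Omega(\kappa)}$ and sum crudely---both work, yours is slightly more direct.
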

\begin{proof}
  By the remarks above Lemma~\ref{lem:alpha:small-p:bulky-smallp}, we have to bound the probability that there exists a row-set of size $a\in\{2,\dots,\sqrt{\kappa}\}$ which generates a 1-rectangle of size at least $\kappa/a$.

  For $2 \le a < \sqrt{\kappa}$, let $X_a$ count the number of columns~$y$ with $f(x,y)=1$ for $x=1,\dots,a$.  We are going to show that
  \begin{equation*}
    P := \sum_{a=2}^{\sqrt{\kappa}} \binom{n}{a}  \Prb\bigl( X_a \ge \nfrac{\kappa}{a} \bigr) = o(1).
  \end{equation*}
  The r.v.\ $X_a$ has $\Bin(n,p^a)$ distribution.  We compute
  \begin{multline}\label{eq:alpha:small-p:large-p:prbub}
    \Prb\bigl( X_a \ge \nfrac{\kappa}{a} \bigr)
    \le
    \binom{n}{\nfrac{\kappa}{a}} (p^a)^{\kappa/a}
    \le
    \lt( \frac{en}{ \nfrac{\kappa}{a} } \rt)^{\kappa/a}  (p^a)^{\kappa/a}
    =
    \lt(  \lt( (ea)^{1/(a-1)} p \rt)^{\kappa/a}  \rt)^{a-1}.
  \end{multline}
  Now, there exists an constant $\varrho < 1$ such that
  \begin{equation}\label{eq:alph:small-p:large-p:the-rhos}
    (ea)^{1/(a-1)} \le
    \begin{cases}
      8\varrho,     &\text{for all $a\ge 2$, and}\\
      e\varrho,      &\text{for $a \ge 4$.}
    \end{cases}
  \end{equation}
  Consequently, we distinguish two cases:
  \begin{enumerate}[\it (i)]
  \item\label{enum:alpha:small-p:case-1} $p\le \nfrac18$ and
  \item\label{enum:alpha:small-p:case-2} $\nfrac18 < p \le \nfrac{1}{e}$.
  \end{enumerate}

  \paragraph{\it Case~\ref{enum:alpha:small-p:case-1}: $p\le \nfrac18$.}
  In this case, we compute
  \begin{align*}
    P &= \sum_{a=2}^{\sqrt{\kappa}} \binom{n}{a}  \Prb\bigl( X_a \ge \nfrac{\kappa}{a} \bigr)&& \\
    &\le
    \sum_{a=2}^{\sqrt{\kappa}} \binom{n}{a}  \lt(  \lt( (ea)^{1/(a-1)} p \rt)^{\kappa/a}  \rt)^{a-1} && \comment{by \eqref{eq:alpha:small-p:large-p:prbub}}\\
    &\le
    \sum_{a=2}^{\sqrt{\kappa}} \binom{n}{a} \lt( \varrho^{\kappa/a} \rt)^{a-1}&& \comment{by \eqref{eq:alph:small-p:large-p:the-rhos}}\\
    &\le
    \sum_{a=2}^{\sqrt{\kappa}} \binom{n}{a} \lt( \varrho^{\sqrt{\kappa}} \rt)^{a-1}&& \comment{since $a\le \sqrt{\kappa}$ and $\varrho<1$}\\
    &=
    \varrho^{\sqrt{\kappa}} \sum_{a=2}^{\sqrt{\kappa}} \binom{n}{a} \lt( \varrho^{\sqrt{\kappa}} \rt)^{a-2}&&\\
    &\le
    \varrho^{\sqrt{\kappa}} \; n^2 \; \sum_{a=0}^{\sqrt{\kappa}-2} \binom{n-2}{a} \lt( \varrho^{\sqrt{\kappa}} \rt)^a&& \comment{replacing $a\leadsto a-2$}\\
    &\le \varrho^{\sqrt{\kappa}} \; n^2 \; (1+\varrho^{\sqrt{\kappa}})^{n-2} && \comment{Binomial theorem}\\
    &\le \varrho^{\sqrt{\kappa}} \; n^2 \; e^{n \varrho^{\sqrt{\kappa}}} && \\
    &\le \varrho^{\sqrt{\kappa}} \; n^2 \; e^{n \varrho^{\ln^2 n}} && \comment{because $p\ge (\ln^4n)/n$ and $\varrho<1$}\\
    &= o(1) && \comment{because $\varrho = 1-\Omega(1)$.}\\
  \end{align*}

  \paragraph{\it Case~\ref{enum:alpha:small-p:case-2}: $\nfrac18 < p \le \nfrac{1}{e}$.}
  In this case, by~\eqref{eq:alph:small-p:large-p:the-rhos}, the same calculation as in the $p<\nfrac18$-case works if the sum is started with $a=4$.  For the first two terms of the sum, $a=2,3$, we use a Chernoff bound on $X_a$, which gives us (e.g., Eqn.~(2.4) in~\cite{Janson-Luczak-Rucinski:Book})
  \begin{equation}\label{eq:alpha:small-p:large-p:chernoff}
    \Prb\bigl( X_a \ge \nfrac{\kappa}{a} \bigr) \le \lt(   \biggl( a p^{a-2} \biggr)^{p/a} \biggl( \frac{1-p^a}{1-p/a} \biggr)^{1-p/a}   \rt)^n.
  \end{equation}
  Using Lemma~\ref{lem:alpha:small-p:large-p:chernoff-less-1}, we conclude
  \begin{multline*}
    P = \sum_{a=2}^{\sqrt{\kappa}} \binom{n}{a}  \Prb\bigl( X_a \ge \nfrac{\kappa}{a} \bigr)
    \\
    =
    \binom{n}{2} \Prb\bigl( X_2 \ge \nfrac{\kappa}{2} \bigr)
    + \binom{n}{3} \Prb\bigl( X_3 \ge \nfrac{\kappa}{3} \bigr)
    + \sum_{a=4}^{\sqrt{\kappa}} \binom{n}{a}  \Prb\bigl( X_a \ge \nfrac{\kappa}{a} \bigr)
    \\
    = o(1) + o(1) + o(1),
  \end{multline*}
  where the first two ``$o(1)$''s follow from~\eqref{eq:alpha:small-p:large-p:chernoff} and Lemma~\ref{lem:alpha:small-p:large-p:chernoff-less-1}, and the third is the same calculation as in the previous case.
\end{proof}

This concludes the proof of Theorem~\ref{thm:alpha}\ref{thm:alpha:small-p}.

\subsection{Large~$p$: Proof of Theorem~\ref{thm:alpha}\ref{thm:alpha:large-p}}\label{ssec:alpha:large-p}
Now we prove the part of Theorem~\ref{thm:alpha} about $p \ge \nfrac{1}{e}$.  Again, we first prove a statement about square rectangles.
\begin{lemma}\label{lem:alpha:large-p:no-square-rect}
  For every $\eps > 0$ there exists a constant $\lambda_0$ such that, if $n\ge \notp n = \lambda \ge \lambda_0$, then, a.a.s., there is no square 1-rectangle of size
  \begin{equation*}
    \frac{n}{\lambda^{1-\eps}} \times \frac{n}{\lambda^{1-\eps}}
  \end{equation*}
\end{lemma}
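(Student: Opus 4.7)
My plan is to use the first-moment method. Set $s := \lceil n/\lambda^{1-\eps}\rceil$ and let $E$ denote the number of $s\times s$ all-ones submatrices of~$f$; by Markov's inequality, it suffices to show that $\Exp E \to 0$.

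A union bound with the estimate $\binom{n}{s}\le(en/s)^s$ yields
\begin{equation*}
\Exp E \;\le\; \binom{n}{s}^2\, p^{s^2} \;\le\; (en/s)^{2s}\,(1-\lambda/n)^{s^2}.
\end{equation*}
Taking logarithms, then applying $\ln(1-\lambda/n)\le -\lambda/n$, $\ln(n/s)\le(1-\eps)\ln\lambda$, and the identity $s\cdot(\lambda/n)=\lambda^\eps$, the estimate collapses to
\begin{equation*}
\ln\Exp E \;\le\; s\Bigl(2 + 2(1-\eps)\ln\lambda \;-\; \lambda^\eps\Bigr).
\end{equation*}

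Because $\lambda^\eps$ dominates $\ln\lambda$, one can fix a constant $\lambda_0=\lambda_0(\eps)$ so large that $\lambda_0^\eps\ge 4\bigl(1+(1-\eps)\ln\lambda_0\bigr)$; then the bracket above is at most $-\lambda^\eps/2$ for every $\lambda\ge\lambda_0$, and hence $\ln\Exp E \le -s\lambda^\eps/2 = -\tfrac12\, n\,\lambda^{2\eps-1}$.

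Without loss of generality assume $\eps<1/2$ (the claim is monotone in $\eps$: a larger $\eps$ enlarges the forbidden size $s$, and so makes the nonexistence conclusion easier). Combined with $\lambda\le n$ this gives $n\,\lambda^{2\eps-1}\ge n^{2\eps}\to\infty$ as $n\to\infty$, uniformly across the entire range $\lambda_0\le\lambda\le n$, so $\Exp E=o(1)$. I expect the only mildly delicate point to be the bookkeeping in the passage to the ``collapsed'' estimate, specifically the identity $s\cdot(\lambda/n)=\lambda^\eps$: this is where the slack of exponent $1-\eps$ in the definition of~$s$ is converted into a superlogarithmic saving, and without it the terms $2s\ln(n/s)$ and $s^2\lambda/n$ would merely balance rather than the latter overwhelming the former. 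The choice of $\lambda_0$ depends only on $\eps$ and is otherwise a routine power-versus-log comparison.
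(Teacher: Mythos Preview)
Your proposal is correct and follows essentially the same approach as the paper: a first-moment/union bound with $\binom{n}{s}^2 p^{s^2}$, the estimate $\binom{n}{s}\le(en/s)^s$, and $\ln(1-\lambda/n)\le-\lambda/n$, collapsing to the comparison of $\ln\lambda$ against $\lambda^\eps$. The paper simply notes the bracket is $\le -1$ for $\lambda\ge\lambda_0$ and concludes via $b\to\infty$, whereas you track the stronger bound $-s\lambda^\eps/2$ and then handle the uniformity in $\lambda$ explicitly; the substance is identical. Two cosmetic points: what you call the ``identity'' $s\cdot(\lambda/n)=\lambda^\eps$ is really the inequality $s\lambda/n\ge\lambda^\eps$ (from the ceiling), which is the direction you need; and the choice of $\lambda_0$ should be phrased as ``large enough that $\lambda^\eps\ge 4(1+(1-\eps)\ln\lambda)$ for all $\lambda\ge\lambda_0$'' rather than just at $\lambda_0$, since the latter does not automatically propagate.
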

\begin{proof}
  This is a direct union bound computation.  With $b := \frac{n}{\lambda^{1-\eps}}$, the probability that such a 1-rectangle exists is at most
  \begin{equation*}
    \binom{n}{b}^{\!\!2} p^{b^2}
    =
    \binom{n}{b}^{\!\!2} (1-\notp)^{b^2}
    \le
    e^{b (2\ln(en/b) - \lambda b/n)}
    =
    e^{b \cdot A_b},
  \end{equation*}
  where
  \begin{align*}
    A_b
    &=
    2 \ln(en/b) - \lambda b/n
    \\
    &=
    2 \ln\bigl(e \, \lambda^{1-\eps} \bigr) - \lambda^\eps
    \\
    &\le -1,
  \end{align*}
  where the last inequality holds if $\lambda \ge \lambda_0$ and $\lambda_0$ is large enough.  The claim follows because $b\to \infty$.
\end{proof}

As above, we need the notion of a ``bulky'' rectangle: Here, we say that a rectangle of dimensions $k\times \ell$ is \textit{bulky,} if $k \le \ell$.  By Lemma~\ref{lem:alpha:large-p:no-square-rect}, in particular, a.a.s., a bulky rectangle must have $k < n/\lambda^{\nfrac23}$.  Again, by exchanging the roles of rows and columns, and multiplying the final probability estimate by~2, we only need to consider 1-rectangles with at least as many columns as rows (i.e., bulky ones).

\begin{proof}[Proof of Theorem~\ref{thm:alpha}\ref{thm:alpha:large-p}]
  For every $b\in[n]$, denote by~$X_b$ the number of columns of the 1-rectangle generated by the row set~$\{1,\dots,b\}$---a random variable with $\Bin(n,p^b)$ distribution.  We prove that, for every $1< u < 2$,
  \begin{equation}\label{eq:alpha:large-p:goal-estimate}
    \sum_{b=1}^{n/\lambda^{\nfrac23}} \binom{n}{b}  \Prb( bX_b \ge u\, ap^an ) = o(1),
  \end{equation}
  which, together with Lemma~\ref{lem:alpha:large-p:no-square-rect}, proves Theorem~\ref{thm:alpha}\ref{thm:alpha:large-p}.

  We split the proof into two lemmas, dealing with the cases $b\le \log{\nfrac1p}e$ and $b \ge \log{\nfrac1p}e$, resp., stated below.  Establishing these lemmas completes the proof of Theorem~\ref{thm:alpha}\ref{thm:alpha:large-p}.
\end{proof}

\begin{lemma}\label{lem:alpha:large-p:b-le-a}
  For every $u\in\lt]1,2\rt[$ there exists a constant $\lambda_0 \ge 1$ such that, for every $\notp \ge \nfrac{\lambda_0}{n}$, and for every $1\le b\le \log{\nfrac1p}e$,
  we have
  \begin{equation*}
    \binom{n}{b} \Prb\Bigl(  X_b \ge u \, \frac{a}{b}p^a n  \Bigr)  = o_u(\nfrac1n).
  \end{equation*}
\end{lemma}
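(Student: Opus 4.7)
The approach is the standard combination of a multiplicative Chernoff bound for $X_b \sim \Bin(n,p^b)$ with a union bound over the $\binom{n}{b}$ row-sets of size~$b$.  The key quantitative balance is that the Chernoff savings, which come out proportional to~$n$, must dominate the entropy cost of the union bound once $\lambda$ exceeds a constant depending on~$u$.

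Set $\mu := np^b = \Exp X_b$, $m := u(a/b)p^a n$, and $t := m/\mu$.  The defining maximality of~$a$ (namely $b' p^{b'} \le a p^a$ for every positive integer $b'$) gives at once
\begin{equation*}
  t \;=\; u \cdot \frac{a p^a}{b p^b} \;\ge\; u \;>\; 1.
\end{equation*}
Writing $r(\tau) := \tau \ln \tau - \tau + 1$, the multiplicative Chernoff bound then yields $\Prb(X_b \ge m) \le \exp(-\mu\, r(t))$.  Since $r$ is positive and increasing on $(1,\infty)$, we have $r(t) \ge r(u) > 0$.  Because $b \le \lfloor \log{\nfrac1p}e \rfloor \le a$, Remark~\ref{rem:alpha:facts-about-a}(a) yields $\mu \ge n p^a \ge n/e^2$, giving the $b$-uniform bound
\begin{equation*}
  \Prb(X_b \ge m) \;\le\; \exp\bigl(-n r(u)/e^2\bigr).
\end{equation*}

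For the union-bound cost, I would use $\binom{n}{b} \le (en/b)^b$ together with the monotonicity of $b \mapsto b\ln(en/b)$ on $(0,n)$ and the bound $b \le \log{\nfrac1p}e \le 1/\notp = n/\lambda$ from Remark~\ref{rem:alpha:facts-about-a}(b), giving $\ln \binom{n}{b} \le (n/\lambda)(1 + \ln \lambda)$.  Combining,
\begin{equation*}
  \binom{n}{b}\,\Prb(X_b \ge m)
  \;\le\;
  \exp\Bigl( n\Bigl[\tfrac{1+\ln\lambda}{\lambda} - \tfrac{r(u)}{e^2}\Bigr] \Bigr),
\end{equation*}
and choosing $\lambda_0 = \lambda_0(u)$ so that the bracket is at most $-r(u)/(2e^2)$ for every $\lambda \ge \lambda_0$---which is possible because $(1+\ln\lambda)/\lambda \to 0$---makes the right-hand side $\exp(-\Omega_u(n))$, comfortably $o_u(\nfrac1n)$.

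The crux of the argument, and the reason the lemma merely asserts the existence of $\lambda_0$, is precisely this balancing act: the Chernoff savings $n r(u)/e^2$ come with a $u$-dependent constant that degrades as $u \to 1^+$ (since $r(u) \sim (u-1)^2/2$), whereas the union-bound entropy cost $n(1+\ln\lambda)/\lambda$ decays only slowly in~$\lambda$.  The edge case $m > n$ (in which the probability vanishes trivially) only arises when $b < u a p^a$ and so needs no separate treatment.
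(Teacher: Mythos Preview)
Your proof is correct and follows essentially the same route as the paper's: a multiplicative Chernoff bound for $X_b$ (exploiting that $p^b$ is bounded below by an absolute constant on this range of~$b$) combined with the entropy estimate $\ln\binom{n}{b}\le (n/\lambda)(1+\ln\lambda)$ via $b\le 1/\notp$, after which one takes $\lambda_0$ large enough to make the linear-in-$n$ exponent negative. The only cosmetic differences are that the paper uses the cruder $\delta^2/3$ Chernoff form with $\delta$ truncated at~$1$ and the sharper bound $p^b\ge 1/e$ (directly from $b\le\log{\nfrac1p}e$) instead of your $p^b\ge p^a\ge 1/e^2$; neither changes the structure of the argument.
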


\begin{lemma}\label{lem:alpha:large-p:a-le-b}
  For every $u \in \lt]1,2\rt[$ there exists a constant $\lambda_0$ such that, if $\notp n = \lambda \ge \lambda_0$, and $\log{\nfrac1p}e \le b \le n/\lambda^{\nfrac32}$, then
  \begin{equation*}
    \binom{n}{b} \Prb\Bigl(  X_b \ge u \, \frac{a}{b}p^a n  \Bigr)  = o_u(\nfrac1n).
  \end{equation*}
\end{lemma}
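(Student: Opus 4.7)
My plan is to combine the entropy bound $\binom{n}{b}\le(en/b)^b$ with the tight Chernoff tail bound on $X_b \sim \Bin(n, p^b)$. With the threshold $t := u\,ap^a n/b$, Chernoff gives $\Prb(X_b \ge t) \le \exp(-np^b\, h(c))$ with $c := t/\Exp X_b = u\,ap^a/(bp^b)$ and $h(c) := c\ln c - c + 1$. Because $a$ is the integer maximizer of $xp^x$, $c \ge u > 1$. The identity $t = c\,np^b$ rewrites the exponent as $np^b\, h(c) = t\ln c - t + np^b$, and it suffices to establish $b\ln(en/b) - np^b\, h(c) \le -C(u)\,n$ for some $C(u)>0$ and all $\lambda \ge \lambda_0(u)$, which then yields the required $o_u(1/n)$ bound.

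To expose the dependence on $b$, I substitute $y := bL$ with $L := \ln(1/p)$. The hypothesis $b\ge\log_{1/p}e = 1/L$ becomes $y\ge 1$, and $b\le n/\lambda^{\nfrac23}$ becomes $y\le(1+o(1))\lambda^{1/3}$ via Remark~\ref{rem:alpha:facts-about-a}. From the same remark, $ap^a = \Theta(1/(eL))$ (with constant $\to 1$ as $L\to 0$); this gives $c = (1+o(1))(u/y)\,e^{y-1}$, $np^b = ne^{-y}$, and $b\ln(en/b) = (1+o(1))(ny/\lambda)\ln(e\lambda/y)$. After substitution and division by $n$,
\begin{equation*}
  \frac{1}{n}\,\ln\bigl(\binom{n}{b}\,\Prb(X_b\ge t)\bigr)
  \;\le\;
  \frac{y}{\lambda}\ln\frac{e\lambda}{y}
  \;+\;
  f(y,u)
  \;+\;
  o(1),
\end{equation*}
where $f(y,u) := -u/e + u(2+\ln y - \ln u)/(ey) - e^{-y}$. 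A direct evaluation gives $f(1,u) = -h(u)/e$ and $f(y,u)\to -u/e$ as $y\to\infty$.

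The main analytic step---and what I expect to be the main technical obstacle---is showing $\sup_{y\ge 1} f(y,u) \le -C(u)$ for a positive constant $C(u)$ depending on $u \in \lt]1,2\rt[$. Differentiating, $\partial_y f(y,u) = (u/e)(\ln u - 1 - \ln y)/y^2 + e^{-y}$, which equals $h(u)/e>0$ at $y=1$; so $f$ is increasing at the left endpoint, attains its supremum at a unique interior critical point $y^\ast(u)$, and then decreases monotonically to the limit $-u/e$. A careful calculus estimate at $y^\ast(u)$ (or a convexity argument exploiting $h(u)>0$) yields $C(u)>0$. Granted this, I pick $\lambda_0(u)$ large enough that $(y/\lambda)\ln(e\lambda/y)\le C(u)/2$ uniformly on $y\in[1,\lambda^{1/3}]$---automatic from $(\ln\lambda)/\lambda^{2/3}\to 0$---so that the total bound is at most $-C(u)n/2 + o(n)$, dominating $-2\ln n$ and yielding $o_u(1/n)$. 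The degenerate case $L = \Theta(1)$ (i.e.\ $p$ bounded away from $1$) requires no separate argument, since $y = bL$ merely takes values in a discrete subset of $[1,\infty)$ and the same bound on $f$ applies.
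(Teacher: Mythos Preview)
Your route is genuinely different from the paper's. The paper does not attempt the sharp large-deviation rate: it uses two crude Chernoff variants (splitting on whether $\delta := (u-1)\,ap^a/(bp^b)$ is $\le\tfrac32$ or $>\tfrac32$) to obtain simply $\Prb(X_b\ge t)\le e^{-\eps\, t}$ with an explicit $\eps=\eps(u)>0$, and then bounds $b\ln(en/b)-\eps\,t$ via the elementary estimates $p^a\ge e^{-2}$, $a\ge n/(2e\lambda)$, $b\ge n/(e\lambda)$, $b\le n/\lambda^{2/3}$ from Remark~\ref{rem:alpha:facts-about-a}. This yields only $e^{-\Omega_u(n/\lambda^{1/3})}$, but that already dominates $1/n$. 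Your substitution $y=bL$ and the function $f(y,u)$ are aiming at the stronger $e^{-\Omega_u(n)}$.

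There is, however, a real gap at your ``main analytic step''. You assert $\sup_{y\ge 1}f(y,u)\le -C(u)<0$ but do not prove it, and it is not routine. At $u=1$ one computes (all derivatives up to third order vanish at $y=1$) that
\[
  f(y,1)\;=\;-\frac{(y-1)^4}{8e}+O\bigl((y-1)^5\bigr)\quad\text{near }y=1,
\]
so $f(\cdot,1)$ touches zero to \emph{fourth} order. Consequently, for $u$ close to~$1$ the negativity of $\sup_y f(y,u)$ is extremely delicate: one essentially needs the global inequality $f(y,1)\le 0$ for all $y\ge 1$, i.e.\ $(2+\ln y)/y\le 1+e^{1-y}$, together with the monotonicity $\partial_u f(y,u)<0$ on $\{y\ge1,\,u\ge1\}$. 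Neither of these appears in your sketch; your ``unique interior critical point / convexity'' outline cannot work as stated, since $f(\cdot,1)$ has $f=f'=f''=f'''=0$ at $y=1$, ruling out any low-order convexity argument.

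A second gap: your dismissal of the case $L=\Theta(1)$ is incorrect. The issue is not discreteness of~$y$ but that $ap^a=(1+o_{p\to1}(1))/(eL)$ is only an asymptotic identity; for $p$ bounded away from~$1$ the quantity $A:=ap^aL$ sits strictly below $1/e$ by a constant factor. Since your $c=(u/y)e^{y-1}$ comes from pretending $A=1/e$, you are \emph{overestimating} $c$ and hence $h(c)$, so the Chernoff exponent you write down is larger than the valid one---your displayed upper bound on $\Prb(X_b\ge t)$ is not justified in that regime. The paper's cruder approach avoids both problems at the cost of the weaker (but sufficient) $n/\lambda^{1/3}$ exponent.
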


\begin{proof}[Proof of Lemma~\ref{lem:alpha:large-p:b-le-a}]
  Define
  \begin{equation*}
    \delta := \min\biggl( u\,\frac{ap^a}{bp^b} -1 , \; 1 \biggr).
  \end{equation*}
  Note that $\delta \ge u-1 > 0$ by the definition of~$a$ in~\eqref{eq:alpha:large-p:def-a}.  The ``$1$'' on the RHS of the minimum is somewhat arbitrary: the particular version of the Chernoff inequality  which we refer to, \cite[Thm~4.4-2]{Mitzenmacher-Upfal:Book}, requires $\delta \le 1$.
  Using this Chernoff bound in
  \begin{equation*}
    \Prb(X_b \ge u\,ap^an/b) \le \Prb(X_b \ge (1+\delta)\Exp X_b) \le e^{-\delta^2\Exp X_b/3},
  \end{equation*}
  and the inequality $\binom{n}{b} \le (en/b)^b$, we estimate
  \begin{align}
    \ln\lt( \binom{n}{b} \Prb\Bigl(  X_b \ge u\,\frac{a}{b}p^a n  \Bigr) \rt)
    &\le
    b\ln\Bigl(\frac{en}{b}\Bigr) - \delta^2 p^bn/3 \notag\\
    &\le
    n\lt( \frac{b}{n}\ln\Bigl(\frac{en}{b}\Bigr) - \frac{\delta^2}{3e} \rt) &&\comment{since $b\le \log{\nfrac1p}e$}.
    \tag{$*$}\label{eq:alpha:large-p:b-le-a:lnovern}
  \end{align}
  For any real $b\in[1,\log{\nfrac1p}e]$, denote by $A_b$ the term inside the parentheses in~\eqref{eq:alpha:large-p:b-le-a:lnovern}.

  Since $b\mapsto A_b$ is nondecreasing on $[1,n]$, we have, for every $b\in [1,\log{\nfrac1p}e]$,
  \begin{align*}
    A_{b}
    &\le
    A_{\log{\nfrac1p}e} \\
    &\le
    A_{1/\notp}                                                  &&\comment{$A_\cdot$ nondecreasing and $\log{\nfrac1p}e \le \frac{1}{\notp} \le n$, by~\eqref{eq:alpha:a-bounds}}\\
    &\le
    A_{n/\lambda_0}                                              &&\comment{$A_\cdot$ nondecreasing and $1/\notp \le n/\lambda_0 \le n$}\\
    &=
    \frac{  \ln(e\lambda_0) }{ \lambda_0 } - \frac{\delta^2}{3e}  &&\\
    &\le
    \frac{  \ln(e^2\lambda_0) }{ \lambda_0 } - \frac{(u-1)^2}{3e}.&&\comment{as $\delta \ge u-1$.}
  \end{align*}
  Hence, for sufficiently large $\lambda_0$, depending only on~$u$, we have, for all $b \in [1,\log{\nfrac1p}e]$,
  \begin{equation*}
    A_{b} = -\Omega_u(1),
  \end{equation*}
  so that
  \begin{equation*}
    \Prb(X_b \ge ap^an/b) \le e^{-n A_b} = e^{-\Omega_u(n)} = o_u(\nfrac1n)
  \end{equation*}
  which concludes the proof of the lemma.
\end{proof}

\begin{proof}[Proof of Lemma~\ref{lem:alpha:large-p:a-le-b}]
  By Lemma~\ref{lem:alpha:large-p:no-square-rect}, we already know that, if a bulky 1-rectangles generated by~$b$ rows exists with non-$o(1)$ probability, we must have $b < n/\lambda^{\nfrac23}$.

  Define $\delta$ as follows, $0 < u-1 \le \delta := (u-1)\frac{ap^a}{bp^b} \le u\frac{ap^a}{bp^b} -1$, and let
  \begin{equation*}
    \eps :=
    \begin{cases}
      (u-1)^2 /3, &\text{if $\delta \le \nfrac 32$;}\\
      \ln \nfrac52 - 1 + \nfrac25, &\text{otherwise.}
    \end{cases}
  \end{equation*}
  We do the case distinction because we use two slightly different versions of Chernoff in our estimate of 
  \begin{equation*}
    \varrho := \Prb\Bigl( X_b \ge u\frac{a}{b} p^an \Bigr).
  \end{equation*}
  If $\delta \le \nfrac32$, then
  \begin{align*}
    \varrho
    &\le
    \Prb\Bigl( X_b \ge (1+\delta) \Exp X_b\Bigr)
    \\
    &\le
    e^{- \delta^2 p^b n/3} &&\comment{Chernoff, e.g., \cite[Cor.\ 2.3]{Janson-Luczak-Rucinski:Book}}
    \\
    &\le
    e^{- (u-1)\, (u-1) \frac{a}{b} p^a n/3} &&\comment{definition of $\delta$, and $\delta \ge u-1$}
    \\
    &= 
    e^{- \eps\, \frac{a}{b} p^a n}.
  \end{align*}
  If, on the other hand, $\delta > \nfrac32$, then
  \begin{equation*}
    u \frac{a}{b} p^an
    =
    u\frac{ap^a}{bp^b} \cdot \Exp X_b
    \ge
    (\delta + 1)\cdot \Exp X_b
    \ge
    \tfrac{5}{2} \cdot \Exp X_b,
  \end{equation*}
  and we have, by Eqn.~(2.10) in \cite[Cor.\ 2.4]{Janson-Luczak-Rucinski:Book},
  \begin{equation*}
    \varrho
    \le
    e^{- \eps\, \frac{a}{b} p^a n}.
  \end{equation*}
  In both cases, we conclude
  \begin{align*}
    \hspace*{1em}&\hspace*{-1em}%
    \ln\lt(   \binom{n}{b} \Prb\Bigl( X_b \ge u\frac{a}{b} p^an \Bigr)  \rt) \\
    &\le
    b\ln(en/b) - \eps\, \frac{a}{b} p^a n \\
    &\le
    b\ln(en/b) -  \eps\, \frac{an}{e^2b} &&\comment{$p^a \ge \nfrac1{e^2}$ by~\eqref{eq:alpha:p-to-a-lb}}
    \\
    &\le
    b\ln(en/b) - \eps\, \frac{n^2}{2e^3\lambda b} &&\comment{$a \ge \lfloor \nfrac{n}{e\lambda} \rfloor$ by~\eqref{eq:alpha:a-bounds}, \& $\lfloor \nfrac{n}{e\lambda} \rfloor \ge \nfrac{n}{2e\lambda}$ as $\lambda\le \nfrac{n}{e}$}
    \\
    &\le
    b\ln(e^2\lambda) - \eps\, \frac{n^2}{2e^3\lambda b} &&\comment{as $b\ge \log{\nfrac1p}e \ge \nfrac{n}{e\lambda}$, by~\eqref{eq:alpha:a-bounds}}
    \\
    &\le
    \frac{n}{\lambda^{\nfrac23}}\ln(e^2\lambda)
    -
    \frac{ \eps }{ 2e^3} \frac{n}{\lambda^{\nfrac13}  }    &&\comment{as $b\le n/(\sqrt{\lambda}\ln\lambda)$}
    \\
    &=
    \frac{n}{\lambda^{\nfrac13}} \lt( - \frac{ \eps }{ 2e^3} + o_{\lambda\to\infty}(1) \rt).
  \end{align*}
  Hence, if $\lambda$ is at least a large enough constant, $\lambda_0$, then
  \begin{equation*}
    \binom{n}{b} \Prb\Bigl( X_b \ge u\frac{a}{b} p^an \Bigr) = e^{-\Omega_u(n^{2/3})} = o(\nfrac{1}{n}),
  \end{equation*}
  and the lemma is proven.
\end{proof}

\subsection{Proof of Corollary~\ref{cor:alpha:large-p:1-o1}}
\begin{proof}[Proof of the corollary from Theorem~\ref{thm:alpha}]
 For the given~$p=1-\notp$, if $\nfrac{1}{e}=p^{a}$, we have
 \begin{multline*}
   ap^a
   =
   (1+O(\notp)) \frac{ \log{\nfrac{1}{p}}e }{ e }
   =
   \frac{1+O(\notp)}{e\ln\frac{1}{1-\notp}}
   \\
   =
   \frac{1+O(\notp)}{e\lt( \notp + \notp^2/2 +  \notp^3/3 + \dots\rt)}
   \eqcmt{(*)}
   \frac{1+O(\notp)}{e\notp}
   =
   \frac{1}{e\notp} + O(1)
   =
   \frac{n}{e\lambda} + O(1),
 \end{multline*}
 where equation~(\textasteriskcentered) uses $\notp=o(1)$.  Multiplying by~$n$ and invoking Theorem~\ref{thm:alpha}\ref{thm:alpha:large-p}, we obtain the desired bound.
\end{proof}


\section{Proof of Theorem~\ref{thm:fool}}\label{apx:fool}
The proof of Theorem~\ref{thm:fool} is extends over the following three subsections.  We first treat upper bounds based on the 1st moment method, then we make the 2nd moment calculation (for the case when $p\to1$ quickly), and finally we show how to obtain fooling sets by combining a matching in random bipartite graphs and a stable set in a random (not bipartite) graph.
\subsection{Upper bounds: The number of fooling sets of size~$r$}\label{ssec:fool:markov}
Let the random variable $X = X_r = X_{r,n,p}$ count the number of fooling sets of size~$r$ in~$f$.  For a set $F \subseteq [n]\times[n]$, denote by $A_F$ the event that~$F$ is a fooling set of~$f$.
We have
\begin{equation}\label{eq:fool:numofool_RV}
  X_r = \sum_F \Ind[ A_F ],
\end{equation}
where the sum ranges over all~$F$ of the form $F = \{ (k_1,\ell_1),\dots,(k_r,\ell_r) \}$, with all the $k_j$'s distinct, and all the~$\ell_j$'s distinct.  There are $r!\,\binom{n}{r}^{\!\!2}$ of these sets~$F$, and hence
\begin{equation*}
  \Exp X_r = r!\,\binom{n}{r}^{\!\!2} \, p^r \, \delta^{^{\binom{r}{2}}}.
\end{equation*}

Elementary calculus shows that, for fixed $r\ge 2$, $p\mapsto r!\,\binom{n}{r}^{\!\!2} \, p^r \, \delta^{^{\binom{r}{2}}}$ is increasing on $[0,1/\sqrt r]$ and decreasing on $[1/\sqrt r,1]$ (see the proof of the \ref{enum:fool:exp_numo:n}-part of Lemma~\ref{lem:fool:exp_numo}).  The following lemma describes for which values of~$r$ the expectation $\Exp X_r$ tends to 0 or infinity, resp., in the relevant range of~$p$.

\newcommand{\fThres}{\textstyle n^{-\nfrac12}\,\sqrt{\ln n}}
\begin{lemma}\label{lem:fool:exp_numo}
  \begin{enumerate}[(a)]
  \item\label{enum:fool:exp_numo:n}%
    If $\displaystyle e/n \; \le \; p \; \le \; \fThres$, %
    then %
    $\displaystyle \Exp X_n \to \infty$.
  \item\label{enum:fool:exp_numo:Cn}%
    For constants $c > 1$, $\eps >0$ %
    if $\displaystyle p \, = \, c \, \fThres$,
    with
     $r := (1+\eps)\,\frac{n}{c^2} $
    we have $\Exp X_r \to 0$.
  \item\label{enum:fool:exp_numo:o_n}%
    If $p \gg \fThres$ and $1-p = \notp \ge n^{-o(1)}$, letting
    \begin{align*}
      r_- := 2 \log{\nfrac1\delta}(pn^2) - 2\log{\nfrac1\delta}\log{\nfrac1\delta}(pn^2)   \text{ and }\\
      r_+ := 2 \log{\nfrac1\delta}(pn^2)
    \end{align*}
    we have $\Exp X_{r_-} \to \infty$, and $\Exp X_{r_+} \to 0$.
  \item\label{enum:fool:exp_numo:const-fool}%
    If $a \in \lt]0,4\rt[$ is a constant and $1-p = \notp = n^{-a}$, then
    $\Exp X_r \to 0$      if $\displaystyle r > \nfrac{4}{a}+1$, and
    $\Exp X_r \to \infty$ if $\displaystyle r < \nfrac{4}{a}+1$.
  \end{enumerate}
\end{lemma}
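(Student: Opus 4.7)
The plan is to estimate, in each of the four regimes, the logarithm of
\begin{equation*}
  \Exp X_r = r!\binom{n}{r}^{\!2}\, p^r\, \delta^{\binom{r}{2}}
\end{equation*}
via Stirling's formula and the Taylor expansions $\ln(1-p^2) = -p^2(1+O(p^2))$ (for parts \ref{enum:fool:exp_numo:n}--\ref{enum:fool:exp_numo:o_n}) and $\delta = 2\notp(1+O(\notp))$ (for part~\ref{enum:fool:exp_numo:const-fool}). For upper bounds, the elementary inequalities $\binom{n}{r}\le n^r/r!$ and $r!\ge(r/e)^r$ combine to give $\ln \Exp X_r \le r\ln(en^2p/r) - \binom{r}{2}\ln(1/\delta)$, and when $r=o(\sqrt n)$, Stirling yields the reverse inequality up to an additive $O(\ln r)$. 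The sign of the leading term decides whether $\Exp X_r$ tends to~$0$ or to~$\infty$.

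\textit{Parts \ref{enum:fool:exp_numo:n}, \ref{enum:fool:exp_numo:Cn}, and the upper half of~\ref{enum:fool:exp_numo:o_n}.} For~\ref{enum:fool:exp_numo:n}, $\binom{n}{n}^{\!2}=1$ and Stirling give $\ln\Exp X_n = n\ln(np/e) - \tfrac{n^2}{2}p^2 + \tfrac12\ln(2\pi n) + o(1)$; at $p = e/n$ the first two terms vanish and $\tfrac12\ln(2\pi n)\to\infty$, while for $p = c\sqrt{(\ln n)/n}$ with $0 < c \le 1$, substitution gives $\tfrac{1-c^2}{2}n\ln n + \tfrac{n}{2}\ln\ln n + O(n)\to +\infty$; monotonicity of the leading expression in~$p$ (its derivative $n/p - n^2p$ is positive on $[e/n,\,1/\sqrt n]$) covers the intermediate range. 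For~\ref{enum:fool:exp_numo:Cn}, substituting $r = (1+\eps)n/c^2$ and $p^2 = c^2(\ln n)/n$ into the upper-bound inequality gives the dominant balance $\tfrac{1+\eps}{2c^2}n\ln n - \tfrac{(1+\eps)^2}{2c^2}n\ln n = -\tfrac{\eps(1+\eps)}{2c^2}n\ln n\to -\infty$. For the upper half of~\ref{enum:fool:exp_numo:o_n}, set $L := \log_{1/\delta}(pn^2)$ and $r_+ := 2L$; the identity $(r_+^2/2)\ln(1/\delta) = r_+\ln(pn^2)$ makes the two $r_+\ln(pn^2)$-components cancel exactly, leaving $r_+ - r_+\ln r_+ + O(\ln n)\to -\infty$.

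\textit{Lower half of part~\ref{enum:fool:exp_numo:o_n}.} Write $\Delta := r_+ - r_- = 2\log_{1/\delta}L$, so $\Delta = o(r_+)$ under the stated hypotheses. A discrete-derivative expansion in~$r$ gives
\begin{equation*}
  \ln\Exp X_{r_-} - \ln\Exp X_{r_+} \;=\; -\Delta\ln(n^2p/r_+) + \Delta r_+\ln(1/\delta) + o(r_+),
\end{equation*}
and substituting $\ln(pn^2) = L\ln(1/\delta)$ and $\Delta\ln(1/\delta) = 2\ln L$ simplifies this to $r_+\ln(r_+/2) + o(r_+)$. Combining with the formula for $\ln\Exp X_{r_+}$ from the previous paragraph yields $\ln\Exp X_{r_-} = (1-\ln 2)\,r_+ + O(\ln n) + o(r_+)\to +\infty$, since $1-\ln 2 > 0$ and $r_+\to\infty$.

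\textit{Part~\ref{enum:fool:exp_numo:const-fool}.} With $\notp = n^{-a}$ fixed and $r$ constant, $p^r\to 1$, $r! = \Theta(1)$, $\binom{n}{r}^{\!2} = \Theta(n^{2r})$, and $\delta^{\binom{r}{2}} = (2n^{-a})^{\binom{r}{2}}(1+o(1)) = \Theta(n^{-ar(r-1)/2})$, giving $\Exp X_r = \Theta\bigl(n^{(r/2)(4-a(r-1))}\bigr)$, which diverges iff $r < 4/a + 1$ and vanishes iff $r > 4/a + 1$. The main obstacle is the $\log\log$-level precision required in part~\ref{enum:fool:exp_numo:o_n}: the gap $\Delta$ has been tuned exactly so that the $\Theta(r_+\ln r_+)$-gain from $\binom{r_-}{2}\ln\delta$ cancels, to leading order, the $\Theta(r_+\ln r_+)$-loss from $r_-\ln(en^2p/r_-)$, leaving only the thin $(1-\ln 2)\,r_+$-surplus; tracking the $O(\ln r)$ Stirling error, the $O(r^2/n)$ truncation error in $\ln\binom{n}{r}$, and the $O((\ln L)^2/\ln(1/\delta))$ quadratic error in the Taylor expansion and verifying that each is absorbed into the $o(r_+)$ term is the technical crux.
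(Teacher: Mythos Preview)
Your approach is the paper's: bound $\ln\Exp X_r$ via Stirling and substitute. Parts (a), (b), (d), and the $r_+$ half of (c) are essentially identical in substance.

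The one genuine methodological difference is the $r_-$ half of~(c). The paper does not compute $\ln\Exp X_{r_-} - \ln\Exp X_{r_+}$; it simply substitutes $r = r_-$ into the lower Stirling estimate
\begin{equation*}
  \frac{\ln\Exp X_r}{r} \;\ge\; \ln(pn^2) - \tfrac{r-1}{2}\ln(\nfrac1\delta) - \ln r + 1 - O\bigl(\tfrac{\ln r}{r}\bigr) - \tfrac{r}{n-r},
\end{equation*}
uses $\tfrac{r_--1}{2}\ln(\nfrac1\delta) \le \ln(pn^2) - \ln\log{\nfrac1\delta}(pn^2)$, and obtains
\begin{equation*}
  \frac{\ln\Exp X_{r_-}}{r_-} \;\ge\; \ln\log{\nfrac1\delta}(pn^2) - \ln r_- + 1 - o(1) \;\ge\; 1 - \ln 2 - o(1),
\end{equation*}
since $r_- \le 2\log{\nfrac1\delta}(pn^2)$. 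Two lines, no error tracking. Your discrete-derivative route reaches the same constant $1-\ln 2$, but the ``technical crux'' you describe is entirely avoidable by substituting directly.

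Two smaller points. In~(a), your monotonicity argument only covers $[e/n,\,1/\sqrt n]$; on $[1/\sqrt n,\,\sqrt{(\ln n)/n}]$ the leading expression is \emph{decreasing}. The paper first establishes unimodality of $p \mapsto p\,\delta^{(r-1)/2}$ (maximum at $p = 1/\sqrt r$), so that checking both endpoints suffices; you should do the same. In the $r_+$ half of~(c), your ``$+O(\ln n)$'' is exactly $+\ln(pn^2)$ (from the $-r/2$ in $\binom{r}{2}$), and the conclusion $r_+ - r_+\ln r_+ + \ln(pn^2) \to -\infty$ requires $r_+\ln r_+ \gg \ln n$; this is not automatic from $r_+\to\infty$ when $\ln(\nfrac1\delta)$ is, say, of order $\sqrt{\ln n}$ (still compatible with $\notp \ge n^{-o(1)}$). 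The paper's corresponding step (``$\tfrac12\ln(\nfrac1\delta)-\ln r + O(1) = -\Omega(1)$ since $r\to\infty$'') is vulnerable at the same point, so you have not introduced a gap the paper avoids.
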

\begin{proof}\mbox{}\\
\textit{\ref{enum:fool:exp_numo:n}.} %
First of all, we prove that for $r\ge 2$, the function $p\mapsto \Exp X_{r,p}$ is non-decreasing $\lt]0,r^{-\nfrac12}\rt]$ and non-increasing on $\lt[r^{-\nfrac12},1\rt[$.

Clearly, only the function
\begin{equation*}
  f\colon p \mapsto p (1-p^2)^{(r-1)/2}
\end{equation*}
is of interest.  Taking the derivative, we obtain
\begin{equation*}
  f'(p) = (1-p^2)^{(r-1)/2} - (r-1) p^2 (1-p^2)^{(r-3)/2}.
\end{equation*}
If $0<p<1$, then $f'(p) = 0$ and is equivalent to
\begin{equation*}
  0 = 1-p^2 - (r-1) p^2 = 1 - r p^2.
\end{equation*}
For $p < 1/\sqrt{r}$, we have $f'(p) > 0$ and $p > 1/\sqrt{r}$, we have $f'(p) < 0$.

Now, for $p=e/n$, using Stirling's formula, we have
\begin{equation*}
  \Exp X_n
  =
  n! \lt(\frac{e}{n}\rt)^n \lt( 1-\frac{e^2}{n^2}\rt)^{\binom{n}{2}}
  = \Theta( \sqrt{n} ),
\end{equation*}
so $\Exp X_n$ tends to infinity with $n\to\infty$.

Finally, let $p = \fThres$.  We have
\begin{multline*}
  \frac{\ln \Exp X_n}{n}
  =
  \frac{ \ln\bigl(  n! \, p^n \, \delta^{\binom{n}{2}} \bigr) }{n}
  =
  -1+ o(1) + \ln n - \ln(\nfrac1p) - \tfrac{n-1}{2} \ln(\nfrac1\delta)
  \\
  \ge
  -1+ o(1) + \ln n - \ln(\nfrac1p) - \tfrac{n-1}{2} p^2,
\end{multline*}
where we used $\ln(\nfrac1\delta) = \ln(1/(1-p^2)) \le p^2 + O(p^4)$ and $np^4 = o(1)$ in the last inequality.  Replacing $p$, we get
\begin{equation*}
  \frac{\ln \Exp X_n}{n}
  \ge
  \tfrac12\ln\ln n
  + O(1),
\end{equation*}
which proves the claim in~\ref{enum:fool:exp_numo:n} for this particular value of~$p$.  

\mypar%
  \textit{\ref{enum:fool:exp_numo:Cn}.} %
  First of all, note that, for $4 \le r < n$, using the estimates 
  \begin{eqnarray*}
    \sqrt{r} \, \Bigl( \frac{r}{e} \Bigr)^r         \le&\displaystyle r! &\le r\Bigl( \frac{r}{e} \Bigr)^r, \text{ and}\\
    \tfrac{1}{3\sqrt r} \, e^{r-r^2/(n-r)} \Bigl( \frac{n}{r} \Bigr)^r \le&\displaystyle \binom{n}{r} &\le
    e^r \Bigl( \frac{n}{r} \Bigr)^r,
  \end{eqnarray*}
  we have
  \begin{multline}\label{eq:fool:bounds_for_Exp_fool}\tag{$*$}
    1 - \tfrac{r}{n-r} - O(\tfrac{\ln r}{r})
    \le
    \\
    \frac{ \ln\bigl(  r! \binom{n}{r}^2\, p^r \, \delta^{\binom{r}{2}} \bigr) }{r}
    -
    \biggl(
    \ln(n^2) - \ln(\nfrac1p) - \tfrac{r-1}{2}\ln(\nfrac1\delta) - \ln r
    \biggr)
    \\
    \le
    1 + \tfrac{\ln r}{r}.
  \end{multline}
  (We will use this for~\ref{enum:fool:exp_numo:o_n}, too.)

  Now, with $c>1$, $p = c\,\fThres$ and $r = (1+\eps) n / c^2 = (1-\Omega(1))n$, we get
  \begin{align*}
    \frac{\ln \Exp X_r}{r}
    &=
    \ln(n^2) - \ln(\nfrac1p) - \tfrac{r-1}{2}\ln(\nfrac1\delta) - \ln r
    + O(1)
    \\
    &=
    \ln(n^2)
    - \tfrac12 \ln(n/\ln n)
    - \tfrac{r-1}{2}\ln(\nfrac1\delta)
    - \ln n
    +
    O(1)
    \\
    &=
    \tfrac12 \ln n
    - \tfrac{r-1}{2}n\ln(\nfrac1\delta)
    + O( \ln\ln n )
    \\
    &=
    \tfrac12 \ln n
    - \tfrac{r-1}{2}\bigl( p^2 + O(p^4) )
    + O(\ln\ln n)
    \\
    &=
    - \tfrac{\eps}{2} \ln n
    + O(\ln\ln\ln n),
  \end{align*}
  which proves $\Exp X_r \to 0$.

  \mypar%
  \textit{\ref{enum:fool:exp_numo:o_n}.} %
  With $r := r_+ = 2 \ln(pn^2)/\ln(\nfrac1\delta)$, using the upper bound from~\eqref{eq:fool:bounds_for_Exp_fool}, we get
  \begin{align*}
    \frac{\ln \Exp X_r}{r}
    &\le
    \ln(pn^2)
    - \tfrac{r-1}{2}\ln(\nfrac1\delta)
    - \ln r
    + 1 + \tfrac{\ln r}{r}\\
    &=
    \tfrac12 \ln(\nfrac1\delta)
    - \ln r
    + 1 + \tfrac{\ln r}{r}\\
    &= -\Omega(1),
  \end{align*}
  where the last equation follows from $r \to \infty$ (due to $\notp \ge n^{-o(1)}$), which also implies $\Exp X_r \to 0$.

  On the other hand, with $r := r_- = 2 \log{\nfrac1\delta}(pn^2) - 2\log{\nfrac1\delta}\log{\nfrac1\delta}(pn^2)$, using the upper bound from~\eqref{eq:fool:bounds_for_Exp_fool}, we get
  \begin{align*}
    \frac{\ln \Exp X_r}{r}
    &\ge
    \ln(pn^2)
    - \tfrac{r-1}{2}\ln(\nfrac1\delta)
    - \ln r
    + 1 -O(\tfrac{\ln r}{r})\\
    &\ge
    \bigl( \log{\nfrac1\delta}\log{\nfrac1\delta}(pn^2) \bigr) \ln(\nfrac1\delta)
    - \ln r
    + 1 + O(\tfrac{\ln r}{r})\\
    &=
    \ln \log{\nfrac1\delta}(pn^2)
    - \ln r
    + 1 + O(\tfrac{\ln r}{r})\\
    &\ge
    -\ln 2 + 1 + O(\tfrac{\ln r}{r})\\
    &= \Omega(1).
  \end{align*}
  Again, the last equation and the conclusion $\Exp X_r \to \infty$ follows from $r\to\infty$.

  \mypar%
  \textit{\ref{enum:fool:exp_numo:const-fool}.} %
  Finally, let $0 < a < 4$ be a constant and $1-p = \notp = n^{-a}$.  Noting that $\delta = (1+p)\notp = \Theta(\notp)$, if $r=O(1)$, we have
  \begin{align*}
    \Bigl( \Exp X_r \Bigr)^{\nfrac1r}
    = \Theta\bigl( n^2 \notp^{(r-1)/2} \bigr)
    = \Theta\bigl( n^{2-a(r-1)/2} \bigr),
  \end{align*}
  which implies $\Exp X_r \to \infty$ if $\displaystyle r > \nfrac{4}{a}+1$, and $\Exp X_r \to 0$ if $\displaystyle r < \nfrac{4}{a}+1$.
\end{proof}

From this lemma, we immediately get the upper bound on $\fool(f)$ in Theorem~\ref{thm:fool}\ref{thm:fool:ub}.

\begin{proof}[Proof of Theorem~\ref{thm:fool}\ref{thm:fool:ub}]
  Follows from~\ref{enum:fool:exp_numo:o_n}.
\end{proof}

Item~\ref{enum:fool:exp_numo:n} of the lemma suggests the question, for which $p$ the value of $\fool(f)$ drops from $(1-o(1))n$ to $(1-\Omega(1))n$.  If the expectation is ``right'', this happens crossing from $p=\sqrt{(\ln n)/n}$ to $p=(1+\eps)\sqrt{(\ln n)/n}$.  This is supported by the fact that our lower bounds in this region---in the next subsection---appear to be quite simple, in that they only consider one fixed maximal matching in $\Hbipfnp$, and delete edges from it until it becomes cross free.

\subsection{Second moment calculation}\label{ssec:fool:2nd_moment}
\begin{lemma}\label{fool:variance}
  If $r=O(1)$ and $p\delta \gg \nfrac1n$, then $\displaystyle \Var(X_r) = o\bigl( \bigl(\Exp X_r\bigr)^{\!2\,} \bigr)$.
\end{lemma}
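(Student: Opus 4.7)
The plan is to apply the second moment method, aiming at $\Exp X_r^2 = (1+o(1))(\Exp X_r)^2$. I would expand
\[
  \Exp X_r^2 \;=\; \sum_{F,F'} \Prb[A_F\cap A_{F'}]
\]
over the ordered pairs indexing~\eqref{eq:fool:numofool_RV}, and classify by the overlap of row and column sets: $a:=|K\cap K'|$ and $b:=|L\cap L'|$, with $K,L$ the rows and columns of $F$ and $K',L'$ those of $F'$. Pairs with $\min(a,b)=0$ use disjoint matrix entries, so $A_F,A_{F'}$ are independent and, by a routine count, contribute $(1-O_r(1/n))(\Exp X_r)^2$. The task reduces to bounding the contribution of the ``overlapping'' pairs ($a,b\ge 1$).

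For overlapping pairs I would refine by a finite-in-$r$ list of combinatorial patterns, each recording which diagonal and off-diagonal entries of $F$ and $F'$ coincide inside the shared rectangle $(K\cap K')\times(L\cap L')$. For each pattern the number of ordered pairs realising it is $O_r(n^{4r-a-b})$, and the joint probability $\Prb[A_F\cap A_{F'}]$ would be computed (or upper-bounded) by a direct enumeration of the 1-requirements on the diagonal entries and the ``at least one is 0'' requirements on the cross-pair entries of both structures, accounting carefully for entries that are shared and for the ensuing consistency requirements. In some patterns the consistency is in fact infeasible and $\Prb[A_F\cap A_{F'}]=0$, as in the extreme case $F=\{(1,1),(2,2)\}$, $F'=\{(1,2),(2,1)\}$, where $A_F$ forces $f(1,1)=f(2,2)=1$ while $A_{F'}$ forces at least one of these to be~$0$. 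Compared with $(\Exp X_r)^2\asymp n^{4r}\,p^{2r}\,\delta^{2\binom{r}{2}}$, each pattern's relative contribution becomes $O_r(n^{-(a+b)}\,\rho)$ for some ratio $\rho=\rho(p,\delta,\text{pattern})$ arising from the mismatch between the joint probability and the product $\Prb[A_F]\Prb[A_{F'}]$.

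The main obstacle is to verify, pattern by pattern, that $\rho\le(p\delta)^{-(a+b)}$ (or better). The key intuition is that each of the $a+b$ units of shared coordinates can ``collapse'' at most one constraint---either a shared 1-requirement (costing a factor $p^{-1}$) or a shared 0-requirement (costing a factor $\delta^{-1}$)---and whenever two such collapses would overlap, the shared-entry consistency compensates by strictly lowering $\Prb[A_F\cap A_{F'}]$, as illustrated by the infeasible example above. Granting this bound on $\rho$, the hypothesis $p\delta\gg 1/n$---equivalently $(p\delta)^{-1}=o(n)$---yields $n^{-(a+b)}\rho=o(1)$ for each pattern; since there are only $O_r(1)$ patterns, their total contribution is $o((\Exp X_r)^2)$. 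Combined with the self-term $\sum_F\Prb[A_F](1-\Prb[A_F])\le\Exp X_r$, which is $o((\Exp X_r)^2)$ because $\Exp X_r\to\infty$ whenever $r=O(1)$ and $p\delta\gg 1/n$, this gives $\Var X_r=o((\Exp X_r)^2)$.
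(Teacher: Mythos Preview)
Your approach is the standard second-moment calculation and is essentially what the paper does. The paper's version is shorter and cruder: it fixes $F_0=\{(1,1),\dots,(r,r)\}$, writes $\Exp X^2=\Exp X\cdot\sum_F\Prb(A_F\mid A_{F_0})$, and splits the sum only according to whether $F$ uses a row of $F_0$. When it does not, $A_F$ and $A_{F_0}$ live on disjoint rows and are independent; when it does, the paper asserts a single crude estimate of the shape $\Prb(A_F\mid A_{F_0})\le O(1/(p\delta n))^{O(r^2)}\Prb(A_F)$ and sums. Your finer $(a,b)$-classification and pattern enumeration are not required for the argument, though they are a perfectly reasonable way to organise it.

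The genuine gap in your write-up is the bound $\rho\le(p\delta)^{-(a+b)}$. Your intuition---that each of the $a+b$ shared rows and columns can ``collapse'' at most one constraint---does not match the combinatorics: the shared entries fill an $a\times b$ block, so there are $ab$ of them, and each can in principle kill one atom of $A_{F'}$ (either a 1-requirement or one leg of a cross-pair). The honest crude bound this yields is $\rho\le p^{-c'}\delta^{-(ab-c')}$ with $c'=|F'\cap S|\le\min(a,b)$, not $(p\delta)^{-(a+b)}$. And that crude bound is \emph{not} by itself sufficient: for $a=b=r$ and $\bar p=n^{-\gamma}$ one gets $n^{-(a+b)}\delta^{-ab}=n^{-2r}\delta^{-r^2}$, which need not be $o(1)$ under the hypothesis $p\delta\gg 1/n$ once $r>2/\gamma$. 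What actually saves the argument---and what your ``consistency compensates'' remark gestures at without proving---is that the patterns with large $ab$ also force many additional equalities among the entries (several are outright infeasible, as in your $r=2$ example, and the feasible ones have $\Prb[A_F\cap A_{F'}]$ far below the crude bound). Since $r=O(1)$ there are only $O_r(1)$ patterns and this can be verified case by case, but the one-line heuristic you offer is not that verification.
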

\begin{proof}
  With the notations as in equation~\eqref{eq:fool:numofool_RV}, let $F_0 := \{ (1,1),\dots,(r,r) \}$, and abbreviate $A_0 := A_{F_0}$.  We have
  \begin{equation*}
    \Exp(X^2) = \Exp X \cdot \sum_{F} \Prb( A_F \mid A_0 )
  \end{equation*}
  where the sum ranges over all~$F$ of the form $F = \{ (k_1,\ell_1),\dots,(k_r,\ell_r) \}$, with all the $k_j$'s distinct, and all the~$\ell_j$'s distinct, as in~\eqref{eq:fool:numofool_RV}.

  If $F\subset \{r+1,\dots,n\}\times\{1,\dots,n\}$, 
  then the events $A_F$ and~$A_0$ are clearly independent, so that, with the following sum ranging over these~$F$, we have
  \begin{equation*}
    \sum_{F} \Prb( A_F \mid A_0 ) = \frac{ (n-r)_r }{ (n)_r }  \Exp X.
  \end{equation*}
  Consequently, we have
  \begin{equation*}
    \Exp(X^2) = \frac{ (n-r)_r }{ (n)_r } \bigl( \Exp X \bigr)^2  + \Exp X \cdot \sum_{F} \Prb( A_F \mid A_0 ),
  \end{equation*}
  where the last sum ranges over all~$F$ with 
  $F \cap \{1,\dots,r\}\times\{1,\dots,n\} \ne \emptyset$.  
  For each such~$F$,
  \begin{equation*}
    \Prb( A_F \mid A_0 ) = O\biggl( \frac{1}{p\delta n} \biggr)^{O(r^2)} \Prb( A_F ),
  \end{equation*}
  with absolute constants in the big-$O$s.

  Hence, if $r=O(1)$ and $p\delta \gg \nfrac1n$,
  \begin{equation*}
    \Exp(X^2)
    =
    \frac{ (n)_r }{ (n-r)_r } \bigl( \Exp X \bigr)^2
    +
    O\biggl( \frac{1}{p\delta n} \biggr)^{O(r^2)}\bigl( \Exp X \bigr)^2
    =
    (1+o(1)) \bigl( \Exp X \bigr).
  \end{equation*}
  This proves the statement of the lemma.
\end{proof}

\begin{proof}[Proof of Theorem~\ref{thm:fool}\ref{thm:fool:notp-small}]
  The upper bound, for general~$a$ is in Lemma~\ref{lem:fool:exp_numo}\ref{enum:fool:exp_numo:const-fool}.  The lower bound when $a<1$ follows from Lemma~\ref{lem:fool:exp_numo}\ref{enum:fool:exp_numo:const-fool} and Lemma~\ref{fool:variance}.
\end{proof}

\newcommand{\xfmatch}{\nu^\times}
\subsection{Lower bounds: Cross-free sub-matchings}\label{ssec:fool:bipartite-matching}
Let $\xfmatch(\cdot)$ denote the size largest cross-free matching of a bipartite graph.

Let $H$ be a bipartite graph, and $m=\{e_1,\dots,e_r\} \subseteq E(H)$ a matching in~$H$.  Define the graph~$G'=G'(H,m)$ with vertex set $V(G') = \{1,\dots,r\}$ and $\{k,\ell\}\in E(G')$ if $e_k$, $e_\ell$ induce a $K_{2,2}$ in~$H$.    Then $\displaystyle \xfmatch(H) \ge \alpha(G')$ holds: for any stable set~$A$ of $G'$, the set $\{ e_j \mid j \in A\}$ is a cross-free matching in~$H$.

Our strategy for obtaining a large cross-free matching will be this: fix a large matching~$m$ in $\Hbipfnp$, then find a large stable set in the corresponding random graph $G'_{n,p}(m) := G'(\Hbipfnp,m)$.  This random graph behaves similarly to an Erd\H{o}s-Renyi random graph with $\abs{m}$ vertices and edge-probability $p^2$.  The following technical lemma takes care of the dependency issues which arise.

Let $\Gnp{r}{q}$ denote the Erd\H{o}s-Renyi random graph with~$r$ vertices and edge probability~$q$.
\begin{lemma}\label{lem:fool:bipartite}
  For all positive integers~$n,r,a$, and $p\in[0,1]$, we have
  \begin{equation*}
    \Prb\Bigl( \xfmatch(\Hbipfnp) < a \quad\&\quad  \nu(\Hbipfnp) \ge r \Bigr) \quad\le\quad \Prb\bigl( \alpha(\Gnp{r}{p^2}) < a \bigr).
  \end{equation*}
\end{lemma}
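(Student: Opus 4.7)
The plan is to define, once and for all, a canonical rule that associates to each bipartite graph $H$ with $\nu(H) \ge r$ a specific size-$r$ matching $M(H) \subseteq H$, so that the inequality $\xfmatch(H) \ge \alpha(G'(H,M(H)))$ (noted immediately before the lemma) can be combined with stochastic control of the conditional law of $G'(\Hbipfnp,M(\Hbipfnp))$. The goal is to show that, conditionally on the value of $M$, the edge set of $G'(\Hbipfnp,M)$ is stochastically dominated by the edge set of $\Gnp{r}{p^2}$, which forces $\alpha(G'(\Hbipfnp,M))$ to stochastically dominate $\alpha(\Gnp{r}{p^2})$ on $\{\nu(\Hbipfnp) \ge r\}$, and hence yields the claimed inequality after summation.

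Concretely, fix any total ordering $m^{(1)} < m^{(2)} < \dots < m^{(N)}$ of the size-$r$ matchings of $K_{n,n}$ and set $M(\Hbipfnp) := m^{(i^*)}$ with $i^* := \min\{i : m^{(i)} \subseteq \Hbipfnp\}$. The events
\[F_i := \bigl\{m^{(i)} \subseteq \Hbipfnp\bigr\} \cap \bigl\{m^{(j)} \not\subseteq \Hbipfnp \text{ for all } j < i\bigr\}\]
partition $\{\nu(\Hbipfnp) \ge r\}$. Writing $m^{(i)} = \{(x_k,y_k) : 1 \le k \le r\}$, I would split the Bernoulli cells of $\Hbipfnp$ into the diagonal cells $D_i := m^{(i)}$, the crossing cells $C_i := \{(x_k,y_\ell) : k \ne \ell\}$, and the remaining cells. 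The edges of $G'(\Hbipfnp,m^{(i)})$ are determined by the $C_i$-cells, and under the unconditional Bernoulli($p$) product law they form $\Gnp{r}{p^2}$: the $\binom{r}{2}$ pairs of crossing cells are pairwise disjoint, and each pair produces an edge with probability $p^2$ independently.

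The core step is a Harris/FKG argument. Conditioning first on $\{m^{(i)} \subseteq \Hbipfnp\}$ fixes the $D_i$-cells at $1$ and leaves all other cells iid Bernoulli($p$); after this, the event $\{m^{(j)} \not\subseteq \Hbipfnp \text{ for all } j < i\}$ reads ``for every $j < i$, at least one cell of $m^{(j)} \setminus D_i$ equals $0$'', which is a \emph{decreasing} event in the non-$D_i$ cells. Since the edges of $G'(\Hbipfnp,m^{(i)})$ are a monotone increasing function of the $C_i$-cells and $C_i \cap D_i = \emptyset$, Harris's inequality yields, for every increasing family $\mathcal{E}$ of edge sets,
\[\Prb\bigl(E(G'(\Hbipfnp,m^{(i)})) \in \mathcal{E} \bigm| F_i\bigr) \le \Prb\bigl(E(\Gnp{r}{p^2}) \in \mathcal{E}\bigr).\]
Applied to the increasing event $\{\alpha(\cdot) < a\}$ (adding edges can only lower $\alpha$), this gives $\Prb(\alpha(G'(\Hbipfnp,m^{(i)})) < a \mid F_i) \le \Prb(\alpha(\Gnp{r}{p^2}) < a)$, and summing $\Prb(F_i)\cdot(\cdot)$ over $i$ finishes the argument.

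The main obstacle is ensuring that the defining event of the matching-selection rule is monotone in the correct direction on the crossing cells: the lexicographic-first-minimum does exactly this, whereas any rule that actively required the presence of non-matching edges would introduce an increasing component into the conditioning and invalidate Harris.
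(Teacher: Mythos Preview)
Your argument is correct and takes a genuinely different route from the paper. The paper does not pick a canonical matching; it lets $C_m=\{m\subseteq \Hbipfnp\}$ range over \emph{all} size-$r$ matchings of $K_{n,n}$, observes that conditioned on $C_m$ the auxiliary graph $G'(\Hbipfnp,m)$ is \emph{exactly} distributed as $\Gnp{r}{p^2}$ (so no correlation inequality is invoked), and then writes
\[
  \Prb\Bigl(A\cap\bigcup_m C_m\Bigr)\;\le\;\sum_m\Prb(A\mid C_m)\,\Prb(C_m)\;\le\;\Prb\bigl(\alpha(\Gnp{r}{p^2})<a\bigr).
\]
The last inequality, however, is not justified as written: the events $C_m$ overlap, and $\sum_m\Prb(C_m)$ is the expected number of size-$r$ matchings, typically far larger than~$1$. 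Your partition into the disjoint events $F_i$ via the first-in-order matching removes exactly this overcounting; the price is that, conditioned on $F_i$, the crossing cells are no longer i.i.d.\ Bernoulli$(p)$, and this is what the Harris/FKG step absorbs (the ``no earlier matching present'' constraint is decreasing on the non-$D_i$ cells, $\{\alpha<a\}$ pulls back to an increasing event there, and the $\binom{r}{2}$ crossing pairs are pairwise disjoint so the unconditional reference law is indeed $\Gnp{r}{p^2}$). Your proof thus repairs the paper's union-bound step, and in fact yields the slightly sharper bound $\Prb\bigl(\alpha(\Gnp{r}{p^2})<a\bigr)\cdot\Prb\bigl(\nu(\Hbipfnp)\ge r\bigr)$.
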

\newcommand{\thematchings}{\mathcal M}
\begin{proof}
  Let $\thematchings$ be the set of matchings of size~$r$ of $K_{n,n}$, and for each $m\in\thematchings$ denote by $C_m$ the event that $\Hbipfnp$ contains~$m$.  Fix a matching $m\in\thematchings$.  For every edge~$e\in E(K_{n,n})$, we have
  \begin{equation*}
    \Prb\bigl( e \in \Hbipfnp \mid C_m \bigr) = p,
  \end{equation*}
  and these events are jointly independent.
  Hence, for each potential edge $e'$ of $G'_{n,p}(m)$,
  \begin{equation*}
    \Prb\bigl( e' \in G'_{n,p}(m) \mid C_m \bigr) = p^2,
  \end{equation*}
  again with joint independence of the events.

  Now, denote by~$A$ the event that there does not exists a cross-free matching of size larger than~$a$ in $\Hbipfnp$.  By the discussion above, $A$ and $C_m$ together imply $\alpha(G'_{n,p}(m)) < a$, so that
  \begin{equation*}
        \Prb\bigl( A                         \mid C_m \bigr)
    \le \Prb\bigl( \alpha(G'_{n,p}(m)) < a \mid C_m \bigr)
    =   \Prb\bigl( \alpha(\Gnp{r}{p^2}) < a \bigr).
  \end{equation*}
  It follows that
  \begin{multline*}
    \Prb\Bigl( \xfmatch(\Hbipfnp) < a \quad\&\quad  \nu(\Hbipfnp) \ge r \Bigr)
    = \Prb\bigl( A \cap \bigcup_m C_m \bigr)
    \le \sum_m \Prb\bigl( A \cap C_m \bigr)
    \\
    = \sum_m \Prb\bigl( A \mid C_m \bigr)\Prb(C_m)
    \le \Prb\bigl( \alpha(\Gnp{r}{p^2}) < a \bigr),
  \end{multline*}
  which concludes the proof of the lemma.
\end{proof}

\begin{remark}\label{rem:fool:use-lem:fool:bipartite}
  We will use Lemma~\ref{lem:fool:bipartite} in the following way: If $p$, $r_-$, $r_+$ are such that both
  \begin{equation}\label{eq:fool:scholie-lem-bipartite-cond}
    \begin{aligned}
      \Prb\bigl( \nu(\Hbipfnp) < r_+\bigr)           &= o(1), \text{ and}\\
      \Prb\bigl( \alpha(\Gnp{r_+}{p^2}) < r_- \bigr) &= o(1),
    \end{aligned}
  \end{equation}
  then, a.a.s., $f$ has a fooling set of size~$r_-$.  Indeed,
  \begin{align*}
    &\Prb\bigl( \fool(f) < r_- \bigr)
    \\
    &\le \Prb\Bigl( \xfmatch(\Hbipfnp) < r_- \quad\&\quad \nu(\Hbipfnp) \ge r_+ \Bigr) + \Prb\bigl( \nu(\Hbipfnp) < r_+ \bigr)
    \\
    &\le \Prb\bigl( \alpha(\Gnp{r_+}{p^2}) < r_- \bigr) + \Prb\bigl( \nu(\Hbipfnp) < r_+ \bigr) &&\comment{Lemma~\ref{lem:fool:bipartite}}
    \\
    &= o(1)+o(1) && \comment{by~\eqref{eq:fool:scholie-lem-bipartite-cond}}.
  \end{align*}
\end{remark}

We are now ready to prove the first two items of Theorem~\ref{thm:fool}.  We start with the easiest part.

\begin{proof}[Proof of Theorem~\ref{thm:fool}\ref{thm:fool:alpha}]
  This is a direct consequence of the remark with $r_- := a(p^2)$ and $r := n$, since, if $pn - \ln n \to \infty$, then $\nu(\Hbipfnp) = n$, a.a.s. (e.g., \cite[Thm~4.1]{Janson-Luczak-Rucinski:Book}).
\end{proof}

\begin{proof}[Proof of Theorem~\ref{thm:fool}\ref{thm:fool:o-sqrtn}]
  Let $\eps > 0$ be a constant.  Proceeding as in Remark~\ref{rem:fool:use-lem:fool:bipartite}, with $r_- := r$ and $r_+ := (1+\eps)r$, if both a.a.s.\ $\nu(\Hbipfnp) \ge r$ and a.a.s.\ $\alpha(\Gnp{r}{p^2}) \ge (1-\eps)r$, then, a.a.s.,
  \begin{equation*}
    (1-\eps) \nu(\Hbipfnp) \le \fool(f) \le \nu(\Hbipfnp).
  \end{equation*}
  Letting $\eps$ tend to~0 then gives the desired result.

  For $n^{-3/2} \le p=o(n)$, a.a.s., the number of edges of $\Gnp{n}{p^2}$ is $o(1)$, and hence $\alpha(\Gnp{n}{p^2}) = (1-o(1))n$, while easy arguments show that a.a.s.\ $\nu(\Hbipfnp) = \Omega(n)$ with concentration in a window of size $O(\sqrt n)$.   Hence the conditions~\eqref{eq:fool:scholie-lem-bipartite-cond} are satisfied.

  For $p=\Omega(1/n)$, a classical result by Karp \& Sipser~\cite{KarpSipser81:match} states that there is a function $h\colon]0,\infty[\to[0,1]$ with $\lim_{c\to\infty}h(c)=1$ such that if $p=c/n$, then, a.a.s., $\nu(\Hbipfnp) = (1-o(1))h(c)/n$.  Since $p=o(1/\sqrt n)$, a.a.s., the number of edges of $\Gnp{n}{p^2}$ is $o(n)$, and hence $\alpha(\Gnp{n}{p^2}) = (1-o(1))n$.  It follows that $\fool(f) = (1-o(1))\nu(\Hbipfnp)$.  In particular, if $p \gg 1/n$, then, a.a.s, $\nu(\Hbipfnp) = (1-o(1))n$.
\end{proof}


\section{Proofs for Section~\ref{sec:ndcc}}\label{apx:ndcc}
\subsection{The ``usual calculation''}\label{ssec:apx:chi:swift-caculation}
With
\begin{equation*}
  \alpha := \max\biggl( 2\lambda, \ \frac{ (1+\eps)\ln  n}{  \ln\bigl( \frac{\ln n}{e\lambda} \bigr) } \biggr),
\end{equation*}
we have to show that
\begin{equation*}
  \alpha\ln(\alpha/e\lambda) \ge \ln n.
\end{equation*}
We write it down informally. In the following list of inequalities, the each one is implied by the next one:
\begin{align*}
  \alpha\ln(\alpha/e\lambda) &\ge \ln n &&\comment{replace $\alpha$ by the 2nd term in the max}\\
  (1+\eps)\frac{\ln\lt(\frac{\alpha}{e\lambda}\rt)}{ \ln\lt( \frac{\ln n}{e\lambda} \rt) }&\ge 1\\
  \alpha &\ge  \ln^{1/(1+\eps)} n\\
  \frac{ (1+\eps)\ln  n}{  \ln\bigl( \frac{\ln n}{e\lambda} \bigr) } &\ge  \ln^{1/(1+\eps)} n &&\comment{is true.}
\end{align*}

\subsection{Chernoff}\label{ssec:apx:chi:binomial-ub}
We have no good reference for the following simple Chernoff estimate (it is almost exactly Theorem~5.4 in~\cite{Mitzenmacher-Upfal:Book}, except that we allow $\lambda\to\infty$ slowly).  For the sake of completeness, we include it here.
\begin{lemma}\label{lem:binomial-ub}
  Let $\notp = \lambda/n$ with $1 < \lambda = o(n)$, and $2\lambda \le \alpha \le n/2$.  The probability that a $\Bin(n,\notp)$ random variable is at least $\alpha$ is at most
  \begin{equation}
    O\bigl(\nfrac{1}{\sqrt \alpha}\bigr) \cdot e^{-\lambda} \Bigl( \frac{e\lambda}{\alpha} \Bigr)^\alpha.
  \end{equation}
\end{lemma}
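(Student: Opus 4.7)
My plan is to prove this as a standard Chernoff-type estimate refined by a $1/\sqrt{\alpha}$ factor coming from Stirling's approximation, combined with the observation that the binomial tail is dominated by its leading term.

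\textbf{Step 1: Bound the leading term.} I would write
\begin{equation*}
  \Prb(S \ge \alpha) = \sum_{k=\alpha}^{n} \binom{n}{k} \notp^k (1-\notp)^{n-k}
\end{equation*}
and first estimate the single term at $k=\alpha$. Using Stirling in the form
\begin{equation*}
  \binom{n}{\alpha} \le \frac{C}{\sqrt{\alpha(1-\alpha/n)}} \cdot \frac{n^n}{\alpha^\alpha(n-\alpha)^{n-\alpha}}
\end{equation*}
(with $C$ an absolute constant), and invoking $\alpha \le n/2$ so that $1-\alpha/n \ge 1/2$, the prefactor is $O(1/\sqrt{\alpha})$. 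Substituting $\notp = \lambda/n$ and simplifying,
\begin{equation*}
  \binom{n}{\alpha}\notp^\alpha(1-\notp)^{n-\alpha}
  \le
  \frac{O(1)}{\sqrt{\alpha}}
  \Bigl(\frac{\lambda}{\alpha}\Bigr)^{\!\alpha}
  \Bigl(\frac{n-\lambda}{n-\alpha}\Bigr)^{\!n-\alpha}.
\end{equation*}
The last factor I bound by $e^{\alpha-\lambda}$ via $1 + \frac{\alpha-\lambda}{n-\alpha} \le e^{(\alpha-\lambda)/(n-\alpha)}$, yielding the desired single-term estimate $\frac{O(1)}{\sqrt\alpha} e^{-\lambda}(e\lambda/\alpha)^\alpha$.

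\textbf{Step 2: Geometric decay of the tail.} For $k \ge \alpha$, the ratio of consecutive binomial terms is
\begin{equation*}
  \frac{\binom{n}{k+1}\notp^{k+1}(1-\notp)^{n-k-1}}{\binom{n}{k}\notp^k(1-\notp)^{n-k}}
  =
  \frac{n-k}{k+1} \cdot \frac{\notp}{1-\notp}.
\end{equation*}
Since $(n-k)/(k+1)$ is decreasing in $k$, this is bounded by its value at $k=\alpha$, which is
\begin{equation*}
  \frac{n-\alpha}{\alpha+1} \cdot \frac{\lambda/n}{1-\lambda/n}
  \le
  (1+o(1)) \frac{\lambda}{\alpha}
  \le
  \tfrac{1}{2}+o(1),
\end{equation*}
using $\alpha \ge 2\lambda$ and $\lambda = o(n)$. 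For all sufficiently large~$n$, this ratio is at most $2/3$ (say), uniformly in~$k \ge \alpha$.

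\textbf{Step 3: Sum the geometric series.} Combining Steps 1 and 2,
\begin{equation*}
  \Prb(S \ge \alpha)
  \le
  \binom{n}{\alpha}\notp^\alpha(1-\notp)^{n-\alpha} \sum_{j=0}^{\infty} (2/3)^j
  =
  3 \binom{n}{\alpha}\notp^\alpha(1-\notp)^{n-\alpha}
  \le
  \frac{O(1)}{\sqrt\alpha} e^{-\lambda}\Bigl(\frac{e\lambda}{\alpha}\Bigr)^{\!\alpha},
\end{equation*}
which is exactly the claimed bound. No step is really difficult; the only mild subtlety is making sure the Stirling prefactor is genuinely $O(1/\sqrt\alpha)$ across the whole range $2\lambda\le\alpha\le n/2$, which is handled by the $\alpha\le n/2$ hypothesis that keeps $1-\alpha/n$ bounded away from zero.
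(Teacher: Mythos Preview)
Your proof is correct and follows essentially the same approach as the paper: reduce the tail probability to a constant times the point mass $\Prb(\Bin(n,\notp)=\alpha)$, then bound that point mass via Stirling to get the $O(1/\sqrt\alpha)$ factor and the $e^{-\lambda}(e\lambda/\alpha)^\alpha$ term. The only difference is cosmetic: the paper invokes Theorem~1.1 of Bollob\'as's \emph{Random Graphs} for the tail-to-point reduction (and notes that $\alpha\ge 2\lambda$ is what makes it apply), whereas you carry out that reduction explicitly via the geometric-decay argument in your Step~2.
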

\begin{proof}[Proof of Lemma~\ref{lem:binomial-ub}]
  Using Thm~1.1 in~\cite{BollobasBkRndGraphs} (here we need the $\alpha \ge 2\lambda$), and the usual estimates for binomial coefficients, we find that said probability (for~$n$ sufficiently large) is at most an absolute constant times
  \begin{multline*}
    \Prb\Bigl( \Bin(n,\notp) = \alpha \Bigr)
    \le
    \frac{1.1}{\sqrt{2\pi \alpha (n-\alpha)/n}} \Bigl( \frac{\lambda}{\alpha} \Bigr)^\alpha \Bigl( \frac{n-\lambda}{n-\alpha} \Bigr)^{n-\alpha}
    \\
    \le
    \frac{1}{\sqrt \alpha} \Bigl( \frac{\lambda}{\alpha} \Bigr)^\alpha \Bigl( 1 - \frac{\alpha-\lambda}{n-\alpha} \Bigr)^{n-\alpha}
    \le
    \frac{1}{\sqrt \alpha} \Bigl( \frac{\lambda}{\alpha} \Bigr)^\alpha e^{\alpha-\lambda},
  \end{multline*}
  as promised.
\end{proof}

\subsection{Number of distinct rows}\label{ssec:apx:chi:distinct-rows}
\begin{proof}[Proof of Lemma~\ref{lem:chi:numo_distinct_rows}]
  For notational convenience, for $k=1,\dots,n$, let
  \begin{equation*}
    S_k := \{ \ell \mid M_{k,\ell} = 0 \}
  \end{equation*}
  The $S_k$ are random sets, where the events $\ell \in S_k$ are all independent and have probability~$\notp$.
  For $m \ge 0$, with $\Zero := \{1,\dots,n\}$, denoting by
  \begin{equation*}
    X_m := \abs{ \{ S_1,\dots,S_k \} \setminus \{\Zero\} },
  \end{equation*}
  the number of distinct non-zero rows among the first~$m$ rows of $\Matrix{n}{p}$, we need to show that $X_n = \Omega(n)$.
  This is quite easy for $\notp = \Omega(\nfrac1n)$, i.e., Item~\ref{enum:apx-chi:numo_distinct_rows:large_p}.  Here, we just prove it in the case that $\notp \le 1/2n$, i.e., Item~\ref{enum:apx-chi:numo_distinct_rows:small_p}.

  Denote by $A_{m+1}$ the event that the $(m+1)$st row is zero or a duplicate of the first~$m$ rows, i.e., that
  \begin{equation*}
    S_{m+1} \in \{\Zero,S_1,\dots,S_m\}.
  \end{equation*}
  We enumerate the distinct sets: $\{S_1,\dots,S_m\} =: \{S_{k_1},\dots,S_{k_{X_m}}\}$.  Now, for $m\ge 1$, we have
  \begin{multline*}
    \Prb\Bigl( A_{m+1} \Bigm|   \abs{S_1},\dots,\abs{S_m}, X_m \Bigr)
    \shoveright{%
      =
      \Prb\Bigl( S_{m+1} \in \{\Zero,S_1,\dots,S_m\}    \Bigm|    \abs{S_1},\dots,\abs{S_m}, X_m\Bigr)
    }\\\shoveright{%
    =
    \Prb(S_{m+1} = \Zero) + \sum_{j=1}^{X_m} \Prb\Bigl( S_{m+1} = S_{k_j} \Bigm|  \abs{S_1},\dots,\abs{S_m},  X_m \Bigr)
    }\\
    =
    \notp^n + \sum_{j=1}^{X_m} \notp^{\sabstight{S_{k_j}}} p^{n-\sabstight{S_{k_j}}}
    \le
    \notp^n + p^n + \max(0,X_m-1)\notp \onespace p^{n-1},
  \end{multline*}
  where the last inequality comes from the fact that, since the $S_{k_j}$ are all distinct, at most one of them has cardinality~0.
  Hence, for $m\ge 2$,
  \begin{align*}
    \Prb\bigl( A_{m+1} \bigm|    X_m, X_1=1 \bigr)
    &\le
    \notp^n + p^n + (X_m-1)\notp p^{n-1}
    \\
    &\le
    \notp^n + p^n -\notp p^{n-1} + \notp p^{n-1} X_m.
  \end{align*}
  Now, for $m \ge 1$,
  \begin{multline*}
    \Exp\bigl( X_{m+1} \bigm | X_m, X_1=1 \bigr)
    =
    X_m + 1 - \Prb(A_{m+1}\mid X_m, X_1=1),
    \\
    \ge
    X_m + 1 - \notp^n - p^n + \notp p^{n-1} - \notp p^{n-1} X_m
    \\
    =
    1 + \notp p^{n-1} - \notp^n - p^n + (1-\notp p^{n-1}) X_m.
  \end{multline*}
  Using the law of total probability and solving the recursion\footnote{%
    The recursion: $\displaystyle \mu_{m+1} = \alpha + \beta\mu_m = \ldots = \alpha \sum_{j=0}^{m-1} \beta^j + \beta^m \mu_1 = \alpha\frac{1-\beta^m}{1-\beta} + \beta^m \mu_1$.
  }, %
  we find that
  \begin{equation*}
    \Exp\bigl( X_m \bigm | X_1=1 \bigr)
    \ge
    (1 + \notp p^{n-1} - \notp^n - p^n) \frac{1 - (1-\notp p^{n-1})^{m-2}}{\notp p^{n-1}}
    + (1-\notp p^{n-1})^{m-1}
  \end{equation*}
  With $\lambda := \notp n$, again, note that, since, by our assumption above, $\lambda \le \nfrac12$, using the Bernoulli inequalities $1-tn \le (1-t)^n \le 1-tn+t^2\binom{n}{2}$ for $t<1$, we have
  \begin{equation*}
    \frac12 \le 1 - \lambda \le p^n \le p^{n-1} \le 1 - \lambda\lt( \frac{n-1}{n} + \lambda\frac{n-1}{n}\rt) \le 1,
  \end{equation*}
  so that
  \begin{equation*}
    (1-\notp p^{n-1})^{m-2} \le (1-\nfrac{\notp}{2})^{m-2} \le 1   -  \frac{\lambda}{2}\lt( \frac{m-2}{n} + \frac{\lambda}{2}\frac{m-2}{n} \rt).
  \end{equation*}
  We conclude that, for $m=n$,
  \begin{multline*}
    \Exp\bigl( X_m \bigm | X_1=1 \bigr)
    \ge
    (1  - p^n) \frac{1 - (1-\notp p^{n-1})^{m-2}}{\notp p^{n-1}}
    \\
    \ge
    \lambda \lt( \frac{n-1}{n} + \lambda\frac{n-1}{n}\rt) \cdot \frac{   \frac{\lambda}{2} \lt( \frac{m-2}{n} + \frac{\lambda}{2}\frac{m-2}{n} \rt)  }{ \lambda/n }
    \ge (1+o(1)) \frac{\lambda n}{2}.
  \end{multline*}
  Since $\Prb(X_1=1) = \Prb(S_1=\Zero) = (1-\notp^n) = 1-o(1)$, this implies $\Exp X_n \ge \Exp(X_n\mid X_1=1)\Prb(X_1=1) \ge (1-o(1))\nfrac{\lambda n}{2}.$

  To obtain the a.a.s.\ statement from the one about the expectation, we use the usual Martingale-based concentration bound (Corollary 2.27 in~\cite{Janson-Luczak-Rucinski:Book}): as changing one row can affect $X_n$ by at most~1, we get
  \begin{equation*}
    \Prb\bigl(  X_n \le \nfrac{\lambda n}{4} \bigr)
    \le
    \Prb\bigl(  X_n \le \Exp X_n - \nfrac{\lambda n}{4} \bigr)
    \le
    e^{-\nfrac{(\lambda n)^2}{32n}}
    =
    e^{-\Omega(\lambda^2 n)}
    = o(1),
  \end{equation*}
  where the last equation follows from the condition $n^{-\nfrac32} = o(\notp)$.
\end{proof}



\begin{thebibliography}{10}
\providecommand{\url}[1]{\texttt{#1}}
\providecommand{\urlprefix}{URL }

\bibitem{Alon-Spencer:Book}
Alon, N., Spencer, J.H.: The Probabilistic Method. Wiley (2008)

\bibitem{Beasley-Klauck-Lee-Theis:Dagstuhl:13}
Beasley, L.B., Klauck, H., Lee, T., Theis, D.O.: Communication complexity,
  linear optimization, and lower bounds for the nonnegative rank of matrices
  (dagstuhl seminar 13082). Dagstuhl Reports  3(2),  127--143 (2013)

\bibitem{BollobasBkRndGraphs}
Bollob{\'a}s, B.: Random graphs, Cambridge Studies in Advanced Mathematics,
  vol.~73. Cambridge University Press, Cambridge, second edn. (2001)

\bibitem{Braun-Fiorini-Pokutta:rndStable:14}
Braun, G., Fiorini, S., Pokutta, S.: Average case polyhedral complexity of the
  maximum stable set problem. In: Approximation, Randomization, and
  Combinatorial Optimization. Algorithms and Techniques, {APPROX/RANDOM} 2014,
  September 4-6, 2014, Barcelona, Spain. pp. 515--530 (2014),
  \url{http://dx.doi.org/10.4230/LIPIcs.APPROX-RANDOM.2014.515}

\bibitem{Dani-Moore:w2ndMoment:11}
Dani, V., Moore, C.: Independent sets in random graphs from the weighted second
  moment method. In: Approximation, Randomization, and Combinatorial
  Optimization. Algorithms and Techniques, pp. 472--482. Springer (2011)

\bibitem{Dawande-Keskinocak-Swaminathan-Tayur:2001}
Dawande, M., Keskinocak, P., Swaminathan, J.M., Tayur, S.: On bipartite and
  multipartite clique problems. J. Algorithms  41(2),  388--403 (Nov 2001),
  \url{http://dx.doi.org/10.1006/jagm.2001.1199}

\bibitem{Dawande-Keskinocak-Tayur:wp:1996}
Dawande, M., Keskinocak, P., Tayur, S.: On the biclique problem in bipartite
  graphs. Carnegie Mellon University (1996), gsia Working Paper

\bibitem{Dietzfelbinger-Hromkovic-Schnitger:96}
Dietzfelbinger, M., Hromkovi{\v{c}}, J., Schnitger, G.: A comparison of two
  lower-bound methods for communication complexity. Theoret. Comput. Sci.
  168(1),  39--51 (1996),
  \url{http://dx.doi.org/10.1016/S0304-3975(96)00062-X}, 19th International
  Symposium on Mathematical Foundations of Computer Science (Ko{\v{s}}ice,
  1994)

\bibitem{Fiorini-Kaibel-Pashkovich-Theis:CombLB:13}
Fiorini, S., Kaibel, V., Pashkovich, K., Theis, D.O.: Combinatorial bounds on
  nonnegative rank and extended formulations. Discrete Math.  313(1),  67--83
  (2013)

\bibitem{Fiorini-Massar-Pokutta-Tiwary-Dewolf:lin-vs-semidef:12}
Fiorini, S., Massar, S., Pokutta, S., Tiwary, H.R., Wolf, R.: Linear vs.\
  semidefinite extended formulations: Exponential separation and strong lower
  bounds. In: STOC (2012)

\bibitem{Fiorini-Massar-Pokutta-Tiwary-Dewolf:ACM:15}
Fiorini, S., Massar, S., Pokutta, S., Tiwary, H.R., Wolf, R.D.: Exponential
  lower bounds for polytopes in combinatorial optimization. Journal of the ACM
  (JACM)  62(2), ~17 (2015)

\bibitem{Friesen-Hamed-Lee-Theis:fool:15}
Friesen, M., Hamed, A., Lee, T., Theis, D.O.: Fooling-sets and rank. European
  Journal of Combinatorics  48,  143--153 (2015)

\bibitem{FriesenTheis13}
Friesen, M., Theis, D.: Fooling-sets and rank in nonzero characteristic. In:
  Ne{\v{s}}et{\v{r}}il, J., Pellegrini, M. (eds.) The Seventh European
  Conference on Combinatorics, Graph Theory and Applications. CRM series,
  vol.~16, pp. 383--390. CRM (2013)

\bibitem{Froncek-Jerebic-Klavzar-Kovar:CPC:07}
Froncek, D., Jerebic, J., Klavzar, S., Kov{\'{a}}r, P.: Strong isometric
  dimension, biclique coverings, and sperner's theorem. Combinatorics,
  Probability {\&} Computing  16(2),  271--275 (2007),
  \url{http://dx.doi.org/10.1017/S0963548306007711}

\bibitem{Goemans:permutahedron:15}
Goemans, M.X.: Smallest compact formulation for the permutahedron. Mathematical
  Programming  153(1),  5--11 (2015)

\bibitem{Hajiabolhassan-Moazami:code:12}
Hajiabolhassan, H., Moazami, F.: Secure frameproof code through biclique cover.
  Discrete Mathematics \& Theoretical Computer Science  14(2),  261--270
  (2012),
  \url{http://www.dmtcs.org/dmtcs-ojs/index.php/dmtcs/article/view/2131/4075}

\bibitem{Hajiabolhassan-Moazami:cover-free:12}
Hajiabolhassan, H., Moazami, F.: Some new bounds for cover-free families
  through biclique covers. Discrete Mathematics  312(24),  3626--3635 (2012)

\bibitem{Izhakian-Janson-Rhodes:PAMS:15}
Izhakian, Z., Janson, S., Rhodes, J.: Superboolean rank and the size of the
  largest triangular submatrix of a random matrix. Proceedings of the American
  Mathematical Society  143(1),  407--418 (2015)

\bibitem{Janson-Luczak-Rucinski:Book}
Janson, S., {\L}uczak, T., Rucinski, A.: Random graphs. Wiley-Interscience
  Series in Discrete Mathematics and Optimization, Wiley-Interscience, New York
  (2000)

\bibitem{Kaibel:optima:11}
Kaibel, V.: Extended formulations in {C}ombinatorial {O}ptimization. Optima --
  Mathematical Optimization Society Newsletter  85,  2--7 (04 2011),
  \url{www.mathopt.org/Optima-Issues/optima85.pdf}

\bibitem{KarpSipser81:match}
Karp, R.M., Sipser, M.: Maximum matchings in sparse random graphs. In: FOCS.
  pp. 364--375 (1981)

\bibitem{Klauck-Lee-Theis-Thomas:Dagstuhl:15}
Klauck, H., Lee, T., Theis, D.O., Thomas, R.R.: Limitations of convex
  programming: lower bounds on extended formulations and factorization ranks
  (dagstuhl seminar 15082). Dagstuhl Reports  5(2),  109--127 (2015)

\bibitem{Kushilevitz-Nisan:Book:97}
Kushilevitz, E., Nisan, N.: Communication complexity. Cambridge University
  Press, Cambridge (1997)

\bibitem{LeeTheis12}
Lee, T., Theis, D.: Support based bounds for positive semidefinite rank. Tech.
  Rep. arXiv:1203.3961, arXiv (2012)

\bibitem{Lonardi-Szpankowski-Yang:Conf:04}
Lonardi, S., Szpankowski, W., Yang, Q.: Finding biclusters by random
  projections. In: Combinatorial Pattern Matching. pp. 102--116. Springer
  (2004)

\bibitem{Lonardi-Szpankowski-Yang:Journal:06}
Lonardi, S., Szpankowski, W., Yang, Q.: Finding biclusters by random
  projections. Theoretical Computer Science  368(3),  217--230 (2006)

\bibitem{Lovasz-Saks:CC-lattice:93}
Lov{\'a}s, L., Saks, M.: Communication complexity and combinatorial lattice
  theory. Journal of Computer and System Sciences  47,  322--349 (1993)

\bibitem{Mitzenmacher-Upfal:Book}
Mitzenmacher, M., Upfal, E.: Probability and Computing --- Randomized
  Algorithms and Probabilistic Analysis. Cambridge (2006)

\bibitem{Park-Szpankowski:biclusters:05}
Park, G., Szpankowski, W.: Analysis of biclusters with applications to gene
  expression data. In: International Conference on Analysis of Algorithms DMTCS
  proc. AD. vol. 267, p. 274 (2005)

\bibitem{Roughgarden:CC-AD:2015}
Roughgarden, T.: Communication complexity (for algorithm designers). arXiv
  preprint p. arXiv:1509.06257 (2015)

\bibitem{Schrijver:Book:03}
Schrijver, A.: Combinatorial optimization. {P}olyhedra and efficiency.,
  Algorithms and Combinatorics, vol.~24. Springer-Verlag, Berlin (2003)

\bibitem{Sun-Nobel:JMachLearn:08}
Sun, X., Nobel, A.B.: On the size and recovery of submatrices of ones in a
  random binary matrix. J. Mach. Learn. Res  9,  2431--2453 (2008)

\bibitem{Yannakakis:91}
Yannakakis, M.: Expressing combinatorial optimization problems by linear
  programs. J. Comput. System Sci.  43(3),  441--466 (1991),
  \url{http://dx.doi.org/10.1016/0022-0000(91)90024-Y}

\end{thebibliography}
\end{document}